\theoremstyle{plain}
\newtheorem{prop}{Proposition}
\newtheorem{coro}[prop]{Corollary}
\newtheorem{lemm}[prop]{Lemma}
\newtheorem{theo}[prop]{Theorem}
\theoremstyle{definition}
\newtheorem{exam}{Example}
\newtheorem{defi}[exam]{Definition}
\newtheorem{assu}{Assumption}
\theoremstyle{remark}
\newtheorem{rema}[prop]{Remark}
\date{Draft version \ifcase\month\or
	January\or February\or March\or April\or May\or June\or
	July\or August\or September\or October\or November\or December\fi \ \number%
	\year\ \  }
\author{Xinkun Nie\\ \texttt{xinkun@stanford.edu}
	\and Emma Brunskill \\ \texttt{ebrun@cs.stanford.edu}
	\and Stefan Wager \\ \texttt{swager@stanford.edu}}
\title{Learning When-to-Treat Policies}
\begin{document}

\maketitle

\theoremstyle{definition}
\newtheorem{deficmp}{Definition}
\renewcommand\thedeficmp{\arabic{deficmp}b}
\newtheorem{assucmp}{Assumption}
\renewcommand\theassucmp{\arabic{assucmp}b}

\begin{abstract}
Many applied decision-making problems have a dynamic component: The policymaker needs not only to choose whom to treat, but also when to start which treatment. For example, a medical doctor may choose between postponing treatment (watchful waiting) and prescribing one of several available treatments during the many visits from a patient. We develop an ``advantage doubly robust'' estimator for learning such dynamic treatment rules using observational data under the assumption of sequential ignorability. We prove welfare regret bounds that generalize results for doubly robust learning in the single-step setting, and show promising empirical performance in several different contexts. Our approach is practical for policy optimization, and does not need any structural (e.g., Markovian) assumptions.
\end{abstract}

\section{Introduction}

The promise of personalized data-driven decision-making has led to a surge in interest in
methods that leverage observational data to help inform 
how and to whom interventions should be applied
\citep*{athey2017efficient,
	bertsimas2020predictive,
	dudik2014doubly,
	elmachtoub2017smart,
	kallus2018confounding,
	kitagawa2015should,
	manski2004statistical,
	swaminathan2015batch,
	zhang2012estimating,
	zhao2012estimating}. 
Any solution to this policy learning problem needs to deal with numerous
difficulties, including how to incorporate robustness to potential selection bias
as well as fairness constraints articulated by stakeholders, and there have been
several notable advances that address these difficulties over the past few years.

One limitation of this line of work, however, is that the results cited above all focus on
a static setting where a decision-maker only sees each subject once and immediately
decides how to treat the subject. In contrast, many problems of applied interest
involve a dynamic component whereby the decision-maker makes a series of 
decisions based on time-varying covariates. In medicine, if a patient has a disease for which
all known cures are invasive and have serious side effects, their doctor may choose to
monitor disease progression for some time before prescribing one of these invasive
treatments. As another example, a health inspector needs to not only choose which restaurants
to inspect, but also when to carry out these inspections. 

In this paper, we study the problem of learning dynamic when-to-treat policies,
where a decision-maker is only allowed to act once, but gets to choose both which
action to take and when to perform the action.\footnote{We note that the policies of interest in this paper also include when-to-stop policies. By flipping the treatment indicator, it is without loss of generality that we only consider when-to-treat policies.}
This setting covers several application
areas that have recently been discussed in the literature, including when to start
antiretroviral therapy for HIV-positive patients to prevent AIDS while mitigating side
effects \citep{when2009timing}, when to recommend mothers to stop breastfeeding
to maximize infants' health \citep*{moodie2009estimating}, and when to
to turn off ventilators for intensive care patients to maximize health outcomes \citep{prasad2017reinforcement}.

The available literature has developed several methods for evaluating and learning dynamic
treatment rules from prior data, with notable contributions from statistics and epidemiology communities including from  \cite{murphy2003optimal,robins2004optimal,murphy2005generalization, luckett2019estimating,tsiatis2019dynamic,zhang2013robust,zhang2018interpretable}, 
\citet[Chapter 4]{van2018targeted}, and 
the batch reinforcement learning community such as  \citet{jiang2015doubly} and \citet{thomas2016data}. 
As discussed further below, these papers develop general approaches that can be used with arbitrary
dynamic treatment rules. Here, in contrast, we seek to exploit special structure of the when-to-treat
problem to develop tailored learning methods with desirable statistical and computational properties. 

In developing our approach, we build on recent results on doubly robust
static policy learning \citep*{athey2017efficient, zhou2018offline},
and show how they can be adapted to our dynamic setting
without making any structural (e.g., Markovian) assumptions
and without compromising computational performance. 
Throughout this paper, we assume sequential ignorability, meaning that
any confounders that affect making a treatment choice at time $t$ have already been
measured by time $t$. Sequential ignorability is a widely used generalization of
the classical ignorability assumption of \citet{rosenbaum1983central} to the dynamic
setting \citep*{hernan2001marginal,murphy2003optimal,robins1986new,robins2004optimal}.
We develop methods that can leverage generic machine learning estimators of
various nuisance components (e.g., the propensity of starting treatment in any given
state and time) for learning policies with strong utilitarian regret bounds that hold
in a nonparametric setting.

Our problem setting is closely related to batch reinforcement learning \citep{sutton2018reinforcement}.
The types of guarantees we derive, however, are more closely related to results
from the static policy learning literature, in that we use tools from semiparametric statistics to derive sharp regret bounds given
only nonparametric assumptions. To our 
knowledge, the reinforcement learning literature has not  
pursued nor obtained the type of results we achieve here
for off-policy policy learning
in a nonparametric setting.

We also note work on optimal stopping motivated by the problem
of when to buy or sell an asset. This setting, however, is different from ours in that most
of the literature on optimal stopping either works with a known probabilistic model
\citep{jacka1991optimal,van1976optimal}, or assumes that we can observe the price
evolution of the asset whether or not we purchase it \citep*{goel2017sample}. In contrast,
we work in a nonparametric setting, and adopt a potential outcomes model
in which we only get to observe outcomes corresponding to the sequence of actions we
choose to take \citep{imbens2015causal,robins1986new}. \citet{rust1987optimal} considers the descriptive
problem of fitting an optimal stopping model to the behavior of a rational agent;
this is different from the problem of learning a decision rule that
can be used to guide future decisions in this paper. We will review the related literature in more detail
in Section \ref{sec:related-work} after first presenting our method below.

\section{Policy Learning under Sequential Ignorability}

\subsection{Setup and Notation}
\label{sec:notation}

We work in the following statistical setting. We observe a set of $i = 1, \, \ldots, \, n$ independent
and identically distributed trajectories generated from some distribution $\mathbb{P}$ that describe the evolution of subjects over $T$ time steps.
For each subject $i$, we observe a vector of states \smash{$S^{(i)} \in \set^T$} and actions \smash{$A^{(i)} \in \acal^T$},
as well as a final outcome \smash{$Y^{(i)} \in \RR$}.\footnote{We note that it is without loss of generality that we assume the outcome $Y^{(i)}$ is only observed at the end of a trajectory, since intermediate outcomes/rewards can be incorporated as part of the state representation.} For each $t = 1, \, \ldots, \, T$, \smash{$S_t^{(i)}$} denotes the state of
the subject at time $t$ and \smash{$A_t^{(i)}$} denotes the action taken. We write the set of possible actions
as $\acal = \{0, 1, \cdots, K\}$, and let $A_t=0$ denote no action (i.e., no treatment assignment) at time $t$.
We define the filtration $\ff_1 \subseteq \ff_2 \subseteq \cdots \subseteq \ff_{T+1}$,
where $\ff_{t} = \sigma\p{S_{1:t}, \, A_{1:t-1}}$ contains information available at time $t$
for $t = 1, \, \ldots, \, T$, and $\ff_{T+1}  = \sigma\p{S_{1:T}, \, A_{1:T},\, Y}$ also has
information on the final outcome.
For notational convenience, we denote \smash{$S_{t_1:t_2}^{(i)} := \{S_{t_1}^{(i)}, \cdots, S_{t_2}^{(i)}\}$},
and we similarly define \smash{$A_{t_1:t_2}^{(i)}$}, and
write the relevant generalization of the propensity score as
$e_{t,a}(s_{1:t}) = \PP{A_t = a \cond S_{1:t} = s_{1:t}, \, A_{1:(t-1)} = 0}$.

We formulate causal effects in terms of potential outcomes
\citep{neyman1923applications, robins1986new, rubin1974estimating}. For any set
of actions $a \in \acal^T$, we posit potential outcomes \smash{$Y^{(i)}(a_{1:T})$} and \smash{$S_t^{(i)}(a_{1:(t-1)})$} corresponding to the
outcome and state values we would have obtained for subject $i$ had we assigned treatment sequence $a$.
In order to identify causal effects, we make the standard assumptions of
sequential ignorability, consistency and overlap \citep*{hernan2001marginal,murphy2003optimal,robins1986new,robins2004optimal}.

\begin{assu}[Consistency of potential outcomes] \label{assu:consistency}
	Our observations are consistent with potential outcomes, in the sense that
	\smash{$Y^{(i)} = Y^{(i)}(A_{1:T}^{(i)})$} and \smash{$S_t^{(i)} = S_t^{(i)}(A_{1:(t-1)}^{(i)})$}. 
\end{assu}

\begin{assu}[Sequential Ignorability]\label{assu:ignorability}
	Actions cannot respond to future information, i.e.,
	\smash{$\{Y(A_{1:(t-1)}, \, a_{t:T}), \, S_{t'}(A_{1:(t-1)}, \, a_{t:(t'-1)})\}_{t'=t+1}^T \indep A_t \cond \ff_t$} for all $t= 1,\cdots,T$.
\end{assu}

\begin{assu}[Overlap]\label{assu:overlap}
	There are constants $0 < \eta, \, \eta_0 < 1$ such that, for all $t=1,\cdots, T$
	and $s_{1:t} \in \mathcal{S}^t$, the following hold:
	$e_{t,a}(s_{1:t}) > \eta/T$ for all $a \in \mathcal{A} \setminus \cb{0}$
	and $ e_{t,0}(s_{1:t}) > 1 - \eta_0/T$.
\end{assu}

A policy $\pi$ is a function that, for each time $t = 1, \, \ldots, \, T$, maps time-$t$ observables to an action:
\smash{$\pi_t: \scal^t \times \acal^{t-1} \to \acal$} such that $\pi_t$ is $\ff_t$-measurable;
then $\pi := \{\pi_t\}_{t=1}^T$.
Recall that we focus on when-to-treat type rules, meaning that the decision-maker only
gets to act once by starting a non-0 treatment regime at the time of their choice. For
example, if $K = 3$ and $T = 5$, then the decision-maker may choose for instance to start treatment option \#2
at time $t = 4$, resulting in a trajectory $A = (0, \, 0, \, 0, \, 2, \, 2)$.

When applying $\pi$ on-policy, the behavior of $\pi$ is fully characterized by
the time at which $\pi$ chooses to act, denoted by $\tau_{\pi} = \inf\cb{t : \pi_t(\cdot, \cdot) \neq 0}$,  and the treatment chosen, denoted by 
$W_{\pi} = \pi_{\tau_{\pi}}(\cdot, \cdot)$. When $\pi$ chooses to never initiate treatment, we use the convention that $\tau_{\pi} = T+1$ and $W_{\pi} = 0$. Note that $\tau_\pi$ and $W_\pi$ are both $\ff_{\tau_\pi}$-measurable.\footnote{See also \citet{athey2018design} for a closely related
	discussion of potential outcomes in the context of staggered adoption.}  For completeness, we also need to specify
how $\pi$ behaves off-policy, i.e., how $\pi$ would prescribe treatment along a trajectory
whose past action or treatment sequence may disagree with $\pi$; and, in this paper, we do so by assuming that $\pi$
is regular in the sense of Definition \ref{def:regular_policy} below.

\begin{defi}[Regular policy]
	\label{def:regular_policy}
	A regular when-to-treat policy $\pi$ is determined by an $\ff_t$-measurable stopping time $\tau_\pi$
	and an associated $\ff_{\tau_\pi}$-measurable decision variable $W_\pi \in \cb{1, \, \ldots, \, K}$ as follows:
	For each time $t = 1, \, \ldots,\, T$,
	if $A_{t - 1} \neq 0$ then $\pi_t(S_{1:t}, \, A_{1:(t-1)}) = A_{t-1}$,
	else if $t \geq \tau_\pi$ then $\pi_t(S_{1:t}, \, A_{1:(t-1)}) = W_\pi$,
	else $\pi_t(S_{1:t}, \, A_{1:(t-1)}) = 0$.
\end{defi}

The main substance of Definition \ref{def:regular_policy} is that we assume that if $\pi$ suggests to start treatment $k$ at a given moment, it persists in this choice $k$ even if we fail
to start treatment immediately. One notable limitation of this regularity condition is in the setting
where some patients may die or otherwise be unable to receive treatment.\footnote{For example,
	consider a case where $\pi$ says we should have started treatment on day 5 but we didn't in fact
	start treatment then, i.e., $A_5 = 0$, and then the patient dies on day 6. Here, realistically,
	$\pi$ should recognize that starting treatment is now impossible and prescribe
	$\pi_6(S_{1:6}, A_{1:5}) = 0$; however, doing so would be inconsistent with Definition \ref{def:regular_policy}.}
We discuss extensions of our approach beyond regular policies in Section \ref{sec:death}.

Following \citet{murphy2005generalization}, we let $f_t(S_t \cond S_{1:(t-1)}, A_{1:(t-1)})$ be the conditional density for state transitions at time $t$. We can write the distribution function for trajectories $(s_{1:T}, a_{1:T})$ as
\begin{align}
\label{eq:product_form}
f(s, \, a) = f(s_1) \, \PP{A_1 \cond s_1} \prod_{t=2}^T f_t\p{s_t \cond s_{1:(t-1)}, a_{1:(t-1)}} \PP{A_t=a_t \cond s_{1:t}, a_{1:(t-1)}},
\end{align}
and denote the expectation with respect to this distribution as $\mathbb{E}$. Similarly, the distribution of a trajectory under policy $\pi$ is 
\begin{align}
\label{eq:product_form_pi}
f(s, \, a; \, \pi) = f(s_1)  \mathbbm{1}_{a_1 = \pi_1(s_1)} \prod_{t=2}^T f_t\p{s_t \cond s_{1:(t-1)}, a_{1:(t-1)}} \mathbbm{1}_{a_t=\pi_t(s_{1:t}, a_{1:(t-1)})},
\end{align}
and we use $\mathbb{E}_\pi$ to denote the expectation with respect to it. Define 
\begin{equation}
V_\pi := \EE[\pi]{Y}= \EE{Y\p{\pi_1(S_1), \pi_2\p{S_1, \, S_2(\pi_1(S_1)), \pi_1(S_1)}, \ldots}}
\end{equation}
to be the value of the policy $\pi$, i.e., the expected outcome $Y$ with actions $A_t$ chosen
according to $\pi$ such that $A_t = \pi_t(S_{1:t}, \, A_{1:(t-1)})$ for all $t = 1, \, \ldots, \, T$.
We further define the conditional value function
\begin{align}
&\mu_{\pi,t}(s_{1:t}, a_{1:t-1}) \nonumber\\
&= \EE[\pi]{Y \cond S_{1:t} = s_{1:t}, A_{1:t-1} = a_{1:t-1}} \nonumber \\
&= \EE{Y(a_{1:t-1}, \pi_{t}(S_{1:t}, A_{1:t-1}),  \pi_{t+1}(\ldots),\,\ldots) \cond S_{1:t}=s_{1:t}, A_{1:t-1}=a_{1:t-1}},
\end{align}
and the $Q$-function
\begin{align}
&Q_{\pi,t}(s_{1:t}, a_{1:t}) \nonumber\\
&= \EE[\pi]{Y \cond S_{1:t}=s_{1:t}, A_{1:t} = a_{1:t}}\nonumber \\
&= \EE{Y(a_{1:t}, \pi_{t+1}(S_{1:t}, S_{t+1}(A_{1:t}),A_{1:t}),\,\ldots) \cond S_{1:t}=s_{1:t}, A_{1:t}=a_{1:t}}.
\end{align}
For any class $\Pi$, we write the
optimal value function as $V^* = \sup_{\pi \in \Pi} V_\pi$,
and define the regret of a policy $\pi \in \Pi$ as $R(\pi) = V^* - V_{\pi}$ \citep{manski2004statistical}.
Given this setting, our goal is to learn the best policy from a predefined policy class $\Pi$ to minimize regret.
Our main result is a method for learning a policy $\hpi \in \Pi$ along with
a bound on its regret $R(\hpi)$.

\subsection{Existing Methods}
\label{sec:existingwork}
In the static setting, a popular approach to policy learning starts by first providing
an estimator $\hV_\pi$ for the value $V_\pi$ of each feasible policy $\pi \in \Pi$, and
then sets \smash{$\hpi = \argmax\{\hV_\pi : \pi \in \Pi\}$}
\citep*[e.g.,][]{athey2017efficient,
	kitagawa2015should,
	manski2004statistical,
	swaminathan2015batch,
	zhang2012estimating}.
At a high level our goal is to pursue the same strategy, but now in a dynamic setting.
The challenge is then to find a robust estimator $\hV_\pi$ that behaves well
when optimized over a policy class $\Pi$ of interest---both statistically and computationally. 

Perhaps the most straightforward approach to estimating $V_\pi$  starts from inverse propensity weighting
as used in the context of marginal structural modeling \citep*{robins2000marginal, precup2000eligibility}.
Given sequential ignorability, we can write inverse propensity weights \smash{$\gamma_t^{(i)}(\pi)$}
for any policy $\pi$ recursively as follows, resulting in a value estimate
\begin{align}
\label{eq:IPW}
\hat{V}_\pi^{\textrm{IPW}} = \frac{1}{n} \sum_{i=1}^n \gamma_T^{(i)}(\pi) Y^{(i)}, \ \
\gamma_t^{(i)}(\pi) =  \frac{\gamma_{t-1}^{(i)} \mathbbm{1}\p{\cb{A_t^{(i)} = \pi(S_{1:t}^{(i)}, A_{1:t-1}^{(i)})}}}{\PP{A_t^{(i)} = \pi(S_{1:t}^{(i)}, A_{1:t-1}^{(i)}) \cond S_{1:t}^{(i)}, A_{1:t-1}^{(i)}}},
\end{align}
or the normalized alternative
$ \hat{V}_\pi^{\textrm{WIPW}} = \sum_{i=1}^n \gamma_T^{(i)}(\pi) Y^{(i)} \,\big/\, \sum_{i=1}^n \gamma_T^{(i)}(\pi)$.
The functional form of \smash{$\hat{V}_\pi^{\textrm{IPW}}$} makes it feasible to optimize this value estimate over
a pre-specified policy class $\pi \in \Pi$ (e.g., via a grid-search or mixed integer programming). By Assumptions
\ref{assu:consistency}, \ref{assu:ignorability} and \ref{assu:overlap}, IPW is consistent if the treatment probabilities are known a-priori, and by uniform concentration arguments
following \citet{kitagawa2015should}, the regret of the policy $\hpi$ learned by maximizing
\smash{$\hat{V}_\pi^{\textrm{IPW}}$} over $\pi \in \Pi$ decays as $1/\sqrt{n}$ if $\Pi$ is not too large (e.g., if $\Pi$ is a VC-class).

While inverse propensity weighting is a simple and transparent approach to estimating $V_\pi$,
it has several limitations. In observational studies treatment probabilities need to be estimated
from data, and it is known that the variant of \eqref{eq:IPW} with estimated weights \smash{$\hgamma_t^{(i)}(\pi)$}
can perform poorly with even mild estimation error \citep[see, e.g.,][]{liu2018representation}. Furthermore, for any policy $\pi$ considered, the IPW
value estimator only uses trajectories that match the policy $\pi$ exactly, which can make policy
learning sample-inefficient. Finally, the IPW estimator is known to be unstable when treatment propensities are small,
and this difficulty is exacerbated in the multi-period setting as the probability of observing any specific trajectory decays.
In the static (i.e., single time step) policy learning setting, related considerations led several
authors to recommend against inverse propensity weighted policy learning and to develop new
methods that were found to have stronger properties both in theory and in practice
\citep*{athey2017efficient, dudik2014doubly, kallus2018balanced, zhang2012estimating, zhou2015residual}.

Another approach to estimating $V_\pi$ is using a doubly robust (DR) estimator
as follows \citep*{jiang2015doubly,thomas2016data,zhang2013robust}
\begin{equation}
\label{eq:DR}
\hV_\pi^{\textrm{DR}} = \frac{1}{n} \sum_{i = 1}^n \p{ \hgamma_T^{(i)}(\pi) Y^{(i)}  
	- \sum_{t = 1}^T \p{\hgamma_t^{(i)}(\pi) - \hgamma_{t-1}^{(i)}(\pi)} \hmu_\pi\p{S_{1:t}^{(i)}, A_{1:t-1}^{(i)}}},
\end{equation} 
where \smash{$\hmu_\pi(\cdot)$} is an estimator of \smash{$\mu_\pi(\cdot)$},
the expected value following policy $\pi$ conditionally on the history up to time $t$
as defined in the previous subsection. This estimator generalizes the well known
augmented inverse propensity weighted estimator of  \citet*{robins1994estimation}
beyond the static case.
The doubly robust estimator \eqref{eq:DR} is consistent if either the propensity weights \smash{$\{\hgamma_t(\cdot)\}_{t=1}^T$} or
the conditional value estimates \smash{$\hmu_\pi(\cdot)$} are consistent.
For a further discussion of doubly robust estimation under sequential ignorability,
see \citet[Chapter 4]{van2018targeted}, \citet*{tsiatis2019dynamic}, and references therein.

From an optimization point of view, a limitation of \eqref{eq:DR} is
that evaluating a given policy $\pi$ requires nuisance components estimates
\smash{$\hmu_\pi(\cdot)$} that are specific to the policy under consideration.
This makes policy learning by optimizing \smash{$\hV_\pi^{\textrm{DR}}$}
problematic for several reasons. Computationally, maximizing \smash{$\hV_\pi^{\textrm{DR}}$}
for all $\pi$ in a non-trivial set $\Pi$ would require solving a multitude of non-parametric dynamic
programming problems. 

Perhaps even more importantly, statistically, standard regret bounds for policy learning for single time step problems rely crucially on the fact that \smash{$\hV_\pi$} is continuous in $\pi$ in an appropriate sense, meaning that
two policies are taken to have similar values if they make similar recommendations
in almost all cases \citep[see, e.g.,][]{athey2017efficient}. But, if \smash{$\hmu_\pi(\cdot)$} is learned separately
for each $\pi$, we have no reason to believe that two similar policies would necessarily have
similar value function estimates---unless one were to use specially designed \smash{$\hmu_\pi(\cdot)$}
estimators.\footnote{One heuristic solution to this difficulty, proposed by \citet{zhang2013robust},
	is to first derive a policy estimate $\hpi^*$ via, e.g., IPW or fitted-$Q$ learning, and then to use value
	estimates \smash{$\hmu_{\hpi^*}(\cdot)$} to evaluate all policies $\pi \in \Pi$ using \eqref{eq:DR}.
	The advantage of this proposal is that learning by maximizing \smash{$\hV_\pi^{\textrm{DR}}$} becomes
	more tractable, since one does not need to re-fit nuisance components in order to evaluate different policies.}

\subsection{Advantage Doubly Robust Policy Learning}
\label{sec:proposal}

The goal of this paper is to develop a new method for policy learning that
addresses the shortcomings of both inverse propensity weighting and the doubly
robust method discussed above in the case of when-to-treat policies.\footnote{We emphasize that the IPW and DR estimators discussed
	above can be used with general dynamic policies; in contrast, our method can only be used for learning
	when-to-treat policies. Our proposed method does not present an alternative to IPW or DR in the general case.}
Our main proposal, the Advantage Doubly Robust (ADR)
estimator, uses an outcome regression like the doubly robust estimator \eqref{eq:DR}
to stabilize and robustify its value estimates. However, unlike the estimator \eqref{eq:DR}
which needs to use different outcome regressions \smash{$\hmu_\pi(\cdot)$} to evaluate
each different policy $\pi$, ADR only has ``universal'' nuisance components that do not
depend on the policy being estimated, leveraging the when-to-treat (or when-to-stop)
structure of the domain. Throughout this paper, we will find that this
universality property enables us to both effectively optimize our value estimates to learn
policies and to prove robust utilitarian regret bounds.

The motivation for our approach starts from an ``advantage decomposition'' presented below.
First, define
\begin{align}
\label{eq:mu_now_next}
\begin{split}
&\mu_{now, k}(s_{1:t}, \, t) := \EE{Y \cond S_{1:t}=s_{1:t}, A_{1:t-1}=0, A_t=k},  \\
&\mu_{next, k}(s_{1:t}, \, t) := \EE{ \mu_{now, k}\p{S_{1:t+1}, \, t+1} \! \cond S_{1:t}=s_{1:t}, A_{1:t}=0},
\end{split}
\end{align}
which measure the conditional value of a policy that starts treatment $k$ either now
or in the next time period, given that we have not yet started any treatment.
Note that, for any when-to-treat policy $\pi$ as considered in this paper, the expectations
in \eqref{eq:mu_now_next} do not depend on $\pi$ because the conditioning specifies all
actions from time $t = 1$ to $T$. Given policies $\pi, \, \pi' \in \Pi$, define $\Delta(\pi, \pi') = V_\pi - V_{\pi'}$ to be the difference in value of the two policies. Denote the never-treating policy by $\bm{0}$. 
Then, a result from  \citet[Chapter 5]{kakade2003sample} and \citet{murphy2005generalization}
yields the following.

\begin{lemm}\label{lemm:murphy-decomp}
	Under Assumptions \ref{assu:consistency} and \ref{assu:ignorability} 
	let $\pi$ be a regular when-to-treat policy in the sense of Definition \ref{def:regular_policy}.
	Then
	\begin{equation}
	\label{eq:murphy-decomp}
	\Delta(\pi, \bm{0}) = \EE[\bm{0}] {\sum_{t=\tau_\pi}^T \mu_{now, W_\pi}(S_{1:t}, t) - \mu_{next, W_\pi}(S_{1:t}, t)},
	\end{equation}
	where $\mathbb{E}_{\bm{0}}$ samples trajectories under a never-treating policy and,
	following Definition \ref{def:regular_policy}, $\tau_\pi$ is the time at which $\pi$ starts treating
	and $W_\pi$ is the treatment chosen at that time.
	\begin{proof}
		Given our setup, Lemma 1 of \citet{murphy2005generalization} implies that 
		\begin{equation}
		\label{eq:murphy}
		\begin{split}
		\Delta(\pi, \bm{0}) &= -\EE[\bm{0}] {\sum_{t=1}^T Q_{\pi,t}(S_{1:t}, A_{1:t}) - \mu_{\pi,t}(S_{1:t}, A_{1:t-1})} \\
		&= -\EE[\bm{0}] {\sum_{t=1}^T \mathbbm{1}_{t \geq \tau_\pi} \p{Q_{\pi,t}(S_{1:t}, \bm{0}_{1:t}) - \mu_{\pi,t}(S_{1:t}, \bm{0}_{1:(t-1)})}}.
		\end{split}
		\end{equation}
		Because $\pi$ is a regular when-to-stop policy, whenever $t \geq \tau_\pi$,
		the policy $\pi$ prescribes starting treatment $W_\pi$ immediately if no other treatment has been started yet, i.e.,
		\begin{align*}
		\mathbbm{1}_{t \geq \tau_\pi} \, \mu_{\pi,t}(S_{1:t}, \bm{0}_{1:(t-1)}) 
		&\stackrel{(a)}{=}  \mathbbm{1}_{t \geq \tau_\pi} \EE{Y( \bm{0}_{1:(t-1)}, W_\pi, W_\pi, \ldots) \cond S_{1:t}, A_{1:t-1}= \bm{0}_{1:(t-1)}}\\
		&\stackrel{(b)}{=}  \mathbbm{1}_{t \geq \tau_\pi} \EE{Y( \bm{0}_{1:(t-1)}, W_\pi, W_\pi, \ldots) \cond S_{1:t}, A_{1:t-1}= \bm{0}_{1:(t-1)}, A_t=W_\pi}\\
		&\stackrel{(c)}{=}   \mathbbm{1}_{t \geq \tau_\pi} \EE{Y \cond S_{1:t}, A_{1:t-1}= \bm{0}_{1:(t-1)}, A_t=W_\pi}\\
		&= \mathbbm{1}_{t \geq \tau_\pi}\, \mu_{now, W_\pi}(S_{1:t}, t),
		\end{align*}
		where (a), (b) and (c) are by Definition \ref{def:regular_policy}, Assumption \ref{assu:ignorability}, and Assumption \ref{assu:consistency} respectively.
		Furthermore, given our definition of regular policies, we know that if $t \geq \tau_\pi$ and $A_t = 0$, then $\pi$ will deterministically
		prescribe treatment $W_{\pi}$ and time $t+1$ regardless of the state $S_{t+1}$, and so
		\begin{equation}
		\label{eq:murphy_devel}
		\begin{split}
		&\mathbbm{1}_{t \geq \tau_\pi}\, Q_{\pi,t}(S_{1:t}, \, \bm{0}_{1:t}) \\
		&\ \ \ \ \ \ \ = \mathbbm{1}_{t \geq \tau_\pi}\, \int \EE[\pi]{Y \cond S_{1:t+1}, \, A_{1:t} = \bm{0}_{1:t}} dF_{t+1}\p{S_{t+1} \cond S_{1:t}, \, A_{1:t} = \bm{0}_{1:t}} \\
		&\ \ \ \ \ \ \ \stackrel{(d)}{=} \mathbbm{1}_{t \geq \tau_\pi}\, \int \EE{Y(\bm{0}_{1:t}, W_\pi, W_\pi, \ldots) \cond S_{1:t+1}, \, A_{1:t} = \bm{0}_{1:t}} dF_{t+1}\p{S_{t+1} \cond S_{1:t}, \, A_{1:t} = \bm{0}_{1:t}} \\
		&\ \ \ \ \ \ \ \stackrel{(e)}{=} \mathbbm{1}_{t \geq \tau_\pi}\, \int \EE{Y(\bm{0}_{1:t}, W_\pi, W_\pi, \ldots) \cond S_{1:t+1}, \, A_{1:t} = \bm{0}_{1:t}, A_{t+1}=W_\pi} \\
		&\ \ \ \ \ \ \ \ \ \ \ \ \ \times dF_{t+1}\p{S_{t+1} \cond S_{1:t}, \, A_{1:t} = \bm{0}_{1:t}} \\
		&\ \ \ \ \ \ \  \stackrel{(f)}{=}  \mathbbm{1}_{t \geq \tau_\pi}\, \int \EE{Y\cond S_{1:t+1}, \, A_{1:t} = \bm{0}_{1:t}, A_{t+1}=W_\pi} dF_{t+1}\p{S_{t+1} \cond S_{1:t}, \, A_{1:t} = \bm{0}_{1:t}},
		\end{split}
		\end{equation}
		where (d), (e) and (f) are by Definition \ref{def:regular_policy}, Assumption \ref{assu:ignorability} and \ref{assu:consistency} respectively. The conclusion \eqref{eq:murphy-decomp} emerges by plugging these facts into \eqref{eq:murphy}.
	\end{proof}
\end{lemm}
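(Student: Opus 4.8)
The plan is to reduce the statement to a ``performance difference'' (simulation) identity and then to translate the resulting policy-dependent value functions into the policy-independent quantities $\mu_{now}$ and $\mu_{next}$. First I would invoke the telescoping decomposition of \citet[Lemma 1]{murphy2005generalization} (see also \citet[Chapter 5]{kakade2003sample}), which writes the value gap between $\pi$ and the never-treating policy $\bm{0}$ as the expectation, under the never-treating law $\mathbb{E}_{\bm{0}}$, of a sum of per-step advantages $\mu_{\pi,t}(S_{1:t}, A_{1:t-1}) - Q_{\pi,t}(S_{1:t}, A_{1:t})$. Intuitively this holds because one can interpolate between $\bm{0}$ and $\pi$ through hybrid policies that follow $\bm{0}$ through time $t$ and $\pi$ thereafter; the successive differences telescope, and each term is exactly the gain from letting $\pi$ rather than $\bm{0}$ choose the time-$t$ action while following $\pi$ on both sides afterward. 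Because we evaluate along the $\bm{0}$ sample path, every observed action is $0$, so all conditioning is at $A_{1:t} = \bm{0}_{1:t}$.

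Second, I would argue that only the steps $t \ge \tau_\pi$ contribute: for $t < \tau_\pi$ the regular policy of Definition \ref{def:regular_policy} still prescribes no treatment, so on the $\bm{0}$ path it agrees with $\bm{0}$ at time $t$ and the corresponding advantage vanishes. This justifies attaching the indicator $\mathbbm{1}_{t \ge \tau_\pi}$ and restricting the sum.

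Third --- the fiddly core --- I would rewrite each surviving advantage using $\mu_{now, W_\pi}$ and $\mu_{next, W_\pi}$. For $\mu_{\pi,t}(S_{1:t}, \bm{0}_{1:t-1})$, regularity forces $\pi$ to start $W_\pi$ at time $t$ and persist, so the relevant potential outcome is $Y(\bm{0}_{1:t-1}, W_\pi, W_\pi, \ldots)$; sequential ignorability (Assumption \ref{assu:ignorability}) lets me append the conditioning event $A_t = W_\pi$ without changing the conditional expectation, and consistency (Assumption \ref{assu:consistency}) then replaces the potential outcome by the observed $Y$, giving $\mu_{now, W_\pi}(S_{1:t}, t)$. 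For $Q_{\pi,t}(S_{1:t}, \bm{0}_{1:t})$ the time-$t$ action is held at $0$, so regularity makes $\pi$ wait one step and treat with $W_\pi$ at $t+1$; after expanding $Q_{\pi,t}$ as an integral over the state transition $F_{t+1}(S_{t+1}\mid\cdot)$ and applying the same ignorability-then-consistency pair (now inserting $A_{t+1}=W_\pi$), the inner expectation becomes $\mu_{now, W_\pi}(S_{1:t+1}, t+1)$, and its $S_{t+1}$-integral is precisely $\mu_{next, W_\pi}(S_{1:t}, t)$.

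Finally, substituting $\mu_{\pi,t}\mapsto\mu_{now,W_\pi}$ and $Q_{\pi,t}\mapsto\mu_{next,W_\pi}$ into the telescoping identity and absorbing the overall sign yields the claimed decomposition. I expect the main obstacle to be the potential-outcomes bookkeeping in the third step: one must check that each inserted conditioning action is licensed by ignorability applied to the correct future-outcome object, and that the passage from $Q_{\pi,t}$ to $\mu_{next}$ correctly accounts for the extra state $S_{t+1}$ via iterated expectations rather than collapsing the two time indices.
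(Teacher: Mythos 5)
Your proposal is correct and follows essentially the same route as the paper's own proof: invoking Lemma 1 of \citet{murphy2005generalization} for the advantage decomposition under $\mathbb{E}_{\bm{0}}$, restricting to $t \geq \tau_\pi$ via regularity, and converting $\mu_{\pi,t}$ and $Q_{\pi,t}$ into $\mu_{now,W_\pi}$ and $\mu_{next,W_\pi}$ through the same regularity--ignorability--consistency sequence (including the integral over $F_{t+1}$ for the $Q$-function term). The potential-outcomes bookkeeping you flag as the delicate step is exactly what the paper's labeled steps (a)--(f) carry out.
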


In Lemma \ref{lemm:murphy-decomp} the expectation is taken with respect to the never-treating policy $\bm{0}$. 
To make this result usable in practice, the following lemma translates it in terms of expectations taken with
respect to the sampling measure.
Recall that the propensity of starting treatment $a$ assuming a never-treating history up to time $t$ is denoted by $e_{t,a}(s_{1:t}) = \PP{A_t = a \cond S_{1:t}=s_{1:t}, A_{1:t-1} = 0}$.
The proof of Lemma \ref{lemm:is-decomp}, given in the appendix, follows directly from a change of measure.

\begin{lemm} \label{lemm:is-decomp} In the setting of Lemma \ref{lemm:murphy-decomp},
	\begin{align}\label{eq:is-adv-decomp}
	\Delta(\pi, \bm{0}) = \EE {\sum_{t=1}^T \mathbbm{1}_{t \geq \tau_\pi}\frac{\mathbbm{1}_{A_{1:t-1}=0}}{\prod_{t'=1}^{t-1}e_{t',0}(S_{1:{t'}})} \p{\mu_{now, W_\pi}(S_{1:t}, t) - \mu_{next, W_\pi}(S_{1:t}, t)}}.
	\end{align}
\end{lemm}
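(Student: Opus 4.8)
The plan is to derive \eqref{eq:is-adv-decomp} from the decomposition \eqref{eq:murphy-decomp} of Lemma \ref{lemm:murphy-decomp} by a single change of measure, applied term by term in $t$. First I would rewrite the right-hand side of \eqref{eq:murphy-decomp} as $\EE[\bm{0}]{\sum_{t=1}^T \mathbbm{1}_{t \geq \tau_\pi}\p{\mu_{now, W_\pi}(S_{1:t}, t) - \mu_{next, W_\pi}(S_{1:t}, t)}}$ and, by linearity, reduce the claim to showing for each fixed $t$ that $\EE[\bm{0}]{g_t(S_{1:t})} = \EE{\mathbbm{1}_{A_{1:t-1}=0}\,g_t(S_{1:t})\big/\prod_{t'=1}^{t-1}e_{t',0}(S_{1:t'})}$, where $g_t(S_{1:t}) := \mathbbm{1}_{t \geq \tau_\pi}\p{\mu_{now, W_\pi}(S_{1:t}, t) - \mu_{next, W_\pi}(S_{1:t}, t)}$. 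The observation that makes this reduction legitimate is that, on any trajectory with $A_{1:t-1}=0$, the quantity $g_t$ is a genuine function of the state path $S_{1:t}$ alone: since $\tau_\pi$ is an $\ff_t$-measurable stopping time and $W_\pi$ is $\ff_{\tau_\pi}$-measurable, both $\mathbbm{1}_{t\geq\tau_\pi}$ and $W_\pi$ are determined by $(S_{1:t},A_{1:t-1})$, which collapses to $S_{1:t}$ once we impose $A_{1:t-1}=0$, while $\mu_{now,k}$ and $\mu_{next,k}$ are functions of $(S_{1:t},t)$ by their definitions in \eqref{eq:mu_now_next}.

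The heart of the argument is the change of measure itself, which I would establish by comparing the two marginal laws of $S_{1:t}$. Marginalizing the never-treating trajectory law \eqref{eq:product_form_pi} down to its first $t$ coordinates gives $f_{\bm{0}}(s_{1:t}) = f(s_1)\prod_{t'=2}^{t} f_{t'}\p{s_{t'} \cond s_{1:(t'-1)}, \bm{0}_{1:(t'-1)}}$, since each treatment indicator contributes a factor of one and the downstream state transitions integrate to one. Marginalizing the sampling law \eqref{eq:product_form} onto $(s_{1:t}, a_{1:t-1})$ and then setting $a_{1:t-1}=0$ gives, after the same telescoping of downstream factors, $f\p{s_{1:t}, A_{1:t-1}=0} = f(s_1)\p{\prod_{t'=1}^{t-1} e_{t',0}(s_{1:t'})}\prod_{t'=2}^{t} f_{t'}\p{s_{t'} \cond s_{1:(t'-1)}, \bm{0}_{1:(t'-1)}}$, where I have used $\PP{A_{t'}=0 \cond s_{1:t'}, \bm{0}_{1:(t'-1)}} = e_{t',0}(s_{1:t'})$. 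The state-transition factors are identical in the two expressions because both condition on an all-zero action history, so they cancel in the ratio and leave $f_{\bm{0}}(s_{1:t}) = f\p{s_{1:t}, A_{1:t-1}=0}\big/\prod_{t'=1}^{t-1} e_{t',0}(s_{1:t'})$. Integrating $g_t$ against this identity produces exactly the per-$t$ importance-weighted expectation under the sampling measure, and summing over $t$ yields \eqref{eq:is-adv-decomp}.

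The step I expect to require the most care is not the density algebra but the measurability bookkeeping of the first paragraph: one must verify that restricting to the event $\cb{A_{1:t-1}=0}$ really does render $\tau_\pi$, $W_\pi$, and hence $g_t$, measurable with respect to $\sigma(S_{1:t})$ in a way that is consistent across both measures, so that literally the same integrand appears on each side of the per-$t$ identity. Overlap (Assumption \ref{assu:overlap}) enters only to guarantee that the weights $1/\prod_{t'=1}^{t-1}e_{t',0}(S_{1:t'})$ are finite and the change of measure is well defined, the event $\cb{A_{1:t-1}=0}$ having strictly positive probability.
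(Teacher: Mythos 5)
Your proposal is correct and follows essentially the same route as the paper's own proof: a term-by-term (fixed $t$) change of measure in which the factors $e_{t',0}(S_{1:t'})$ appearing in the sampling density on the event $\cb{A_{1:t-1}=0}$ cancel against the inverse-propensity weights, leaving the marginal law of $S_{1:t}$ under the never-treating policy. The only difference is presentational---you compare marginal densities and spell out the measurability of $\tau_\pi$ and $W_\pi$ on $\cb{A_{1:t-1}=0}$ explicitly, whereas the paper performs the same cancellation inside a single trajectory integral and leaves those points implicit.
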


This representation \eqref{eq:is-adv-decomp} is at the core of our approach, as it decomposes the
relative value of any given policy $\pi$ in comparison to that of the never-treating policy $\bm{0}$ into a
sum of local advantages. For any $t$, the local advantage 
\begin{equation}
\label{eq:delta}
\delta_{local, k}(s_{1:t}, t) := \mu_{now,k}(s_{1:t}, t) - \mu_{next,k}(s_{1:t}, t)
\end{equation}
is the relative advantage of starting treatment $k$ at $t$ versus at $t+1$ given the the state history $s_{1:t}$.
The upshot is that the specification of these local advantages does not depend on which policy we are evaluating,
so if we get a handle on quantities $\delta_{local, k}(s_{1:t}, t)$ for all $s$ and $t$, we can use \eqref{eq:is-adv-decomp}
to evaluate any policy $\pi$.

Note that the quantity defined in \eqref{eq:delta} can be seen as a specific treatment effect, namely the effect of starting
treatment $k$ at time $t$ versus $t+1$ among all trajectories that were in state $s_{1:t}$ at time $t$ and started treatment
$k$ in either time $t$ or $t+1$. Given this observation, we propose turning \eqref{eq:is-adv-decomp} into a feasible
estimator by replacing all instances of the unknown regression surfaces $\delta_{local, k}(s_{1:t}, t)$ with doubly robust
scores analogous to those used for augmented inverse propensity weighting in the static case \citep{robins1995semiparametric}.

More specifically, we propose the following 3-step policy learning algorithm, outlined as Algorithm \ref{alg:adr}.
We call our approach the Advantage Doubly Robust (ADR) estimator, because it replaces local advantages \eqref{eq:delta}
with appropriate doubly robust scores \eqref{eq:dr_score} when estimating $\Delta(\pi, \bm{0})$.
In the first estimation step in Algorithm \ref{alg:adr}, we employ cross-fitting where we divide the data into $Q$ folds, and only use the $Q-1$ folds
that a sample trajectory does not belong to to learn the estimates of its nuisance components; we use superscript $-q(i)$ on a
predictor to denote using trajectories of all folds excluding the fold that the $i$-th trajectory belongs to in training a
predictor.\footnote{The idea of cross-fitting has gained growing popularity recently to reduce the
	effect of own-observation bias and to enable results on semiparametric rates of convergence using generic
	nuisance component estimates \citep{athey2017efficient, chernozhukov2016double, schick1986asymptotically}.}

\RestyleAlgo{boxruled}
\LinesNumbered
\begin{algorithm}[t]
	\caption{\textbf{Adantage Doubly Robust (ADR) Estimator}\label{alg:adr}}
	Estimate the outcome models $\mu_{now,k}(\cdot)$,
	$\mu_{next,k}(\cdot)$, as well as treatment propensities $e_{t,a}(s_{1:t})$ with cross fitting using any supervised learning method tuned for prediction accuracy.
	
	Given these nuisance component estimates, we construct value estimates 
	\begin{equation}
	\label{eq:feasible-adr}
	\hat{\Delta}(\pi, \bm{0})
	= \frac{1}{n} \sum_{i=1}^{n} \sum_{t=1}^T \mathbbm{1}_{t \geq \tau_\pi^{(i)}} \frac{\mathbbm{1}_{A_{1:t-1}^{(i)} = 0}}{\prod_{t'=1}^{t-1}\hat{e}^{-q(i)}_{t',0}(S_{1:{t'}}^{(i)})}\hat{\Psi}_{t,W_\pi}(S_{1:t}^{(i)})
	\end{equation}
	for each policy $\pi \in \Pi$,  	where the relevant doubly robust score is
	\begin{equation}
	\label{eq:dr_score}
	\begin{split}
	\hat{\Psi}_{t,k}(S_{1:t}^{(i)}) &= \hat{\mu}^{-q(i)}_{now,k}(S_{1:t}^{(i)},t) - \hat{\mu}^{-q(i)}_{next,k}(S_{1:t}^{(i)}, t) \\
	&\ \ \ \ \ + \mathbbm{1}_{A_t^{(i)}=k} \frac{Y^{(i)} - \hat{\mu}^{-q(i)}_{now, k}(S_{1:t}^{(i)},t)}{\hat{e}^{-q(i)}_{t,k}(S_{1:t}^{(i)})} \\
	&\ \ \ \ \ - \mathbbm{1}_{A_t^{(i)}=0} \mathbbm{1}_{A_{t+1}^{(i)}=k} \frac{Y^{(i)} - \hat{\mu}^{-q(i)}_{next, k}(S_{1:t}^{(i)},t)}{\hat{e}^{-q(i)}_{t,0}(S_{1:t}^{(i)})\hat{e}^{-q(i)}_{t+1,k}(S_{1:t+1}^{(i)})}.
	\end{split}
	\end{equation}
	
	Learn the optimal policy by setting $\hat{\pi} = \argmax_{\pi \in \Pi} \hat{\Delta}(\pi, \bm{0})$.
\end{algorithm}

The main strength of this procedure relative to existing doubly robust approaches discussed above
\citep*{jiang2015doubly,thomas2016data,zhang2013robust} is that ADR can evaluate any stopping policy
using universal scores $\hat{\Psi}_{t,k}(\cdot)$ that do not depend on $\pi$. This allows us to ensure smoothness criteria we use to provide regret bounds for policy optimization. It also provides computational benefits for using the ADR estimator for policy optimization: 
the specific policy $\pi$ we are evaluating only enters
into \eqref{eq:feasible-adr} by specifying which doubly robust scores we should sum over.
In particular, the number of nuisance components we need to learn in the first step of the ADR procedure scales linearly with the horizon $T$, but not with the complexity of the policy class $\Pi$.

By constructing doubly robust scores $\hat{\Psi}_{t,k}(\cdot)$, the ADR estimator benefits from certain robustness properties;
however, it is not doubly robust in the usual sense, 
e.g., we do not robustly correct for the change of measure used to get from the representation in Lemma \ref{lemm:murphy-decomp}
to the one in Lemma \ref{lemm:is-decomp}.
We discuss the asymptotic behavior of our method in Section \ref{sec:asymptotics}.

In our experiments, we learn all the nuisance components in the first step with nonparametric regression methods (e.g., boosting, lasso, a deep net, etc.), and then optimize for the best in-class policy by performing a grid search over the parameters that define the policies in a policy class of interest.

\begin{rema}
	\label{rema:rewritenext}
	For the purpose of estimating $\mu_{next,k}(s_{1:t},t)$, it is helpful to re-express it in terms of conditional expectations.
	First, under Assumption \ref{assu:overlap}, we can continue from \eqref{eq:murphy_devel} and
	rewrite $\mu_{next, k}(s_{1:t},t)$ via inverse-propensity weighting as
	\begin{equation}
	\label{eq:rewritenext}
	\mu_{next, k}(s_{1:t}, t) = \EE{\frac{ \mathbbm{1}_{A_{t+1} = k} \, }{e_{t+1,k}(S_{1:t+1})}Y \cond S_{1:t}, \, A_{1:t} = \bm{0}_{1:t}}.
	\end{equation}
	Then, using Bayes' rule, we can verify that
	\begin{equation}
	\label{eq:weighted_reg}
	\mu_{next, k}(s_{1:t}, t) = \frac{\EE{Y /  e_{t+1, k}(S_{1:t+1})  \cond S_{1:t}=s_{1:t}, A_{1:t}=0, \, A_{t+1} = k}}
	{\EE{1 /  e_{t+1, k}(S_{1:t+1})  \cond S_{1:t}=s_{1:t}, A_{1:t}=0, \, A_{t+1} = k}}.
	\end{equation}
	This last expression implies that
	we can consistently estimate \smash{$\mu_{next,k}(\cdot,t)$} via weighted non-parametric regression of $Y$ on $S_{1:t}$
	on the set of observations with $A_{1:t}=0$ and $A_{t+1} = k$, with weights \smash{$e^{-1}_{t+1, k}(S_{1:t+1})$}.
	In practice, this may yield more stable estimates of $\mu_{next,k}(s_{1:t},t)$ than an unweighted non-parametric
	regression with response \smash{$\mathbbm{1}_{A_{t+1} = k} /  e_{t+1, k}(S_{1:t+1}) \, Y$}.
\end{rema}

\section{Related Work}
\label{sec:related-work}
The problem of learning optimal dynamic sequential decision rules is also called learning dynamic optimal regimes \citep{murphy2003optimal, robins2004optimal}, adaptive strategies \citep{lavori2000design}, or batch off-policy policy learning in the reinforcement learning (RL) literature \citep{sutton2018reinforcement}. There are a few predominant approaches: the G-estimation procedure \citep{robins1989analysis, robins1992g} learns the Structural Nested Mean Models (SNMM) \citep{robins1994correcting} which model the difference in the marginal outcome functions directly. See also \cite*{robins2004optimal, moodie2007demystifying,murphy2003optimal,  vansteelandt2014structural} for a further discussion.  \citet{orellana2006generalized} and \citet{van2007causal} proposed the dynamic marginal structural models (MSM) to model the marginal outcome function directly. Under MSM, \cite{robins1986new} proposed using the G-computation, which is a maximum likelihood approach for solving the MSM; \citet*{robins2000marginal} and \citet*{precup2000eligibility} proposed using the inverse propensity weighting (IPW) approach. Unlike G-estimation or G-computation, we focus on policy learning instead of structural parameter estimation, and our proposed approach ADR is more data efficient and robust compared to IPW. Finally, Q-learning\footnote{The fitted-Q iteration algorithm in reinforcement learning is a batch algorithm, and is also called Q-learning in the causal inference and biostatistics literature. This is not to be confused with Q-learning in the reinforcement learning literature, which is an online-version of the batch-mode fitted-Q algorithm.} \citep{watkins1992q} and the closely related fitted-Q iteration algorithm model the optimal marginal outcome $Q$ functions directly and seek to evaluate the optimal policy by stagewise backwards regression \citep{ernst2005tree,murphy2005generalization, prasad2017reinforcement,zhang2018interpretable}. 

Our work focuses on finding the best in-class policy in a pre-defined policy class, whereas fitted-Q iteration focuses on finding the best policy by learning $Q$ functions associated with the optimal policy, which might not fall into the predefined policy class. The two approaches are complementary, and the ADR estimator shines when there are predefined structural constraints on the policy class (e.g., for ease of interpretability, budget constraints, etc.). We note that the fitted-Q iteration algorithm can be adapted to learn the value of an arbitrary policy by learning the $Q$ functions associated with this policy. 
However, this makes optimization over a policy class intractable, as we would need to estimate separate $Q$ functions for each policy in the class.
For more discussion on the comparison of the existing approaches, see
\citet{chakraborty2013statistical}, \citet{moodie2007demystifying}, \citet{robins2008estimation},
\citet{tsiatis2019dynamic}, \citet{van2018targeted} and \citet{vansteelandt2014structural}.

Considerable progress has been made in learning good models for the value functions and combining them with propensity models in doubly robust forms. In reinforcement learning, there has been extensive work focused on learning good models \citep{farajtabar2018more,hanna2017bootstrapping,  liu2018representation}. \citet{guo2017using} focuse reducing the mean-squared error in policy evaluation in long horizon settings. \citet{ ernst2005tree}  and \citet{ormoneit2002kernel} study approximating the Bellman operator using empirical estimates with kernel averagers, and \citet{haskell2016empirical} focuses on the case with discrete state spaces. Recently \citet{doroudi2017importance} has shown that learning high-quality and fair policy decisions is nontrivial from inverse propensity weighting based policy evaluation methods. Adaptions of actor-critic \citep{degris2012off} and Gaussian processes \citep{schulam2017reliable} have been proposed for the off-policy setting as well. Finally, there has been a line of work that builds doubly-robust estimators that combines the model based estimators with inverse propensity weighting based estimates to improve robustness \citep*{langford2011doubly, jiang2015doubly,liu2018augmented, thomas2016data,  zhang2013robust,  zhao2014doubly}. We note that the closest works to ours are  \cite{jiang2015doubly}, \cite{thomas2016data} and \cite*{zhang2013robust}, with the key difference that our proposal has universal scores and nuisance components for all policies in a policy class, and thus is practical for policy optimization, whereas it is unclear yet if this is possible for generic Markov Decision Process settings considered in these works.

Among prior work from the reinforcement learning community that directly tries to learn an optimal value function and policy, formal bounds on the optimality of the resulting policy tend to require that the true value function is realizable by the regressor function used to model the value function in order to obtain good rates and consistent estimators \citep*[e.g.,][]{chen2019information,le2019batch,  munos2008finite}. Such results also require a bound on the concentratability coefficient \citep{munos2003error}, which measures the ratio of the state action distribution of a policy to the state-action distribution under the behavior policy, for any behavior policy \citep*[e.g.,][]{chen2019information,le2019batch,  munos2008finite}---this can be viewed as a similar analogue to the overlap assumption as in Assumption \ref{assu:overlap}.  Recent work provides convergence guarantees for batch policy gradient \citep{liu2019off}, but to our knowledge, there are no regret bounds on batch direct policy search and optimization.

In the special case of offline policy learning in a single timestep where a policy only needs to decide whether to treat a subject but not \textit{when}, substantial progress has been made in how to derive an optimal regret and performing optimization in finding the optimal policy \citep*{athey2017efficient,    kallus2018confounding, kitagawa2015should, swaminathan2015batch,zhang2012estimating,zhao2012estimating, zhou2018offline, zhou2015residual}. In particular, \cite{kitagawa2015should} establishes a lower bound $\Omega(1/\sqrt{n})$ on learning the regret in the case of binary treatment. \cite{athey2017efficient} and \cite{zhou2018offline} show a matching upperbound assuming the nuisance components can be learned at a much slower rate in the settings where the treatment consists of binary actions and multiple actions respectively. Extending this line of result to sequential multi-step settings is nontrivial for several reasons: First, it is unclear how to optimize efficiently across all policies in a policy class, especially given the increasing complexity with long horizons. Second, the regret results used in \cite{athey2017efficient} and \cite{zhou2018offline} rely on a chaining argument in which the special form of the estimator ensures values of policies close to each other is close. It is not obvious whether existing doubly-robust estimators in the sequential settings \citep*[e.g.,][]{thomas2016data,zhang2013robust} have such a form.

We note that there is a vast literature in optimal stopping \citep{goel2017sample, jacka1991optimal, mordecki2002optimal, van1976optimal}. In optimal stopping, the treatment choices are binary, i.e., whether to stop or not, and the goal is to optimize for a policy for when to start or stop a treatment. In our setup, we assume multiple treatment actions are allowed. 
Many existing works in optimal stopping (e.g., in finance) focus on the setup where a generator is available for the system dynamics, or the full potential outcomes are available in the training data. In our setup, we assume neither, and the policies of interest only make treatment decisions given data observed thus far. 

The problem of learning optimal decision rules is also closely related to learning heterogeneous treatment effects \citep*{athey2016recursive, athey2019generalized, chen2007large, kunzel2019metalearners, nie2017quasi, wager2018estimation}. In both problems, the goal is to learn individualized treatment effect and decision rules, but the type of estimands differ in that instead of learning a nonparametric function of the treatment effects, here we learn decision rules in a policy class.

Finally, we note that while we focus on the finite horizon setting, there is a large literature in policy evaluation in the infinite horizon setting (see \citet{antos2008fitted,antos2008learning,liu2018breaking,luckett2019estimating, munos2008finite} and references therein), and in the online setting (see \citet{shah2018q} and references therein).
We also note recent work of \citet{kallus2019double} on doubly robust methods for Markov decision processes.
These cases are considerably different from ours and are beyond the scope of this work. 

In this paper, we focus on the problem of making treatment decisions once and for all and with multiple actions at the decision point, which we note is a strict generalization of the setting in \cite{zhou2018offline}. We propose an advantage doubly robust estimator that draws upon the semiparametrics and orthogonal moments literature \citep{belloni2014inference,chernozhukov2016double,newey1994asymptotic,robins1995semiparametric, scharfstein1999adjusting}. There is also a growing number of existing works that have applied orthogonal moments construction to policy evaluation \citep{belloni2013program,chernozhukov2016locally, kallus2018balanced}. 

Finally, our proposed estimator heavily relies on an advantage decomposition in \cite{murphy2005generalization}. \cite{murphy2005generalization} focuses on the generalization error on a variant of Q-learning, and we turn such a decomposition into a practical and efficient estimator for learning policy values.

\section{Asymptotics}
\label{sec:asymptotics}

In this section, we study large-sample behavior of the advantage doubly robust estimator
proposed in Section \ref{sec:proposal} for policy learning in when-to-treat settings over a class of policies $\Pi$. It is now standard
in the literature in static policy learning for policies over a single decision to bound regret over the learned policy~\citep[e.g.,][]{athey2017efficient,kitagawa2015should,manski2004statistical,swaminathan2015batch}. However, to our knowledge there are no
directly comparable results for the sequential decision process setting.

Following the literature on static policy learning,  
our main goal is to prove a bound on the utilitarian regret $R$ of the learned policy $\hpi$, where
\begin{equation}
R\p{\hpi} = \sup\cb{V(\pi): \pi \in \Pi} - V\p{\hpi}.
\end{equation}
In order to do so, we follow the high-level proof strategy taken by \citet{athey2017efficient} for studying
static doubly robust policy learning. We first consider the behavior of an ``oracle'' learner who runs our procedure
but with perfect estimates of the nuisance components \smash{$\mu_{now,k}(\cdot)$},
\smash{$\mu_{next,k}(\cdot)$}, and \smash{$e_{t,k}(\cdot)$}, then we couple the behavior of our feasible estimator that uses estimated nuisance components with this oracle.

Following this outline, recall that our approach starts by estimating the policy value difference $\Delta(\pi, \bm{0})$ between
deploying policy $\pi$ and the never treating policy $\bm{0}$. The oracle variant of our estimator \smash{$\hDelta(\pi, \bm{0})$} is then
\begin{equation}
\label{eq:plugin-oracle-estimator}
\tilde{\Delta}(\pi, \bm{0})
= \frac{1}{n} \sum_{i=1}^{n} \sum_{t=1}^T \mathbbm{1}_{t \geq \tau_\pi} \frac{\mathbbm{1}_{A_{1:t-1}^{(i)} = 0}}{\prod_{t'=1}^{t-1}e_{t',0}(S_{1:{t'}})}\tilde{\Psi}_{t,W_\pi}(S_{1:t}^{(i)}),
\end{equation}
where
\begin{equation}
\label{eq:oracle-gamma}
\begin{split}
\tilde{\Psi}_{t,k}(S_{1:t}^{(i)}) &= \mu_{now,k}(S_{1:t}^{(i)},t) - \mu_{next,k}(S_{1:t}^{(i)}, t) \\
&\ \ \ \ \ + \mathbbm{1}_{A_t^{(i)}=k} \frac{Y^{(i)} - \mu_{now, k}(S_{1:t}^{(i)},t)}{e_{t,k}(S_{1:t}^{(i)})} \\
&\ \ \ \ \ - \mathbbm{1}_{A_{t}^{(i)}=0} \mathbbm{1}_{A_{t+1}^{(i)}=k} \frac{Y^{(i)} - \mu_{next, k}(S_{1:t}^{(i)},t)}{e_{t,0}(S_{1:t}^{(i)}) e_{t+1,k}(S_{1:t+1}^{(i)})}.
\end{split}
\end{equation}
We name \eqref{eq:plugin-oracle-estimator} the \emph{oracle} estimator since we assume
$\mu_{now,k}$, $\mu_{next,k}$, $e_{t,k}$ for $t=1,\cdots, T$ and $k = 1, \cdots, K$
take ground-truth values in \eqref{eq:oracle-gamma}.

Because the nuisance components in \eqref{eq:plugin-oracle-estimator} are known a-priori, we can
use a standard central limit theorem argument to verify the following.

\begin{lemm}
	\label{lemm:clt}
	Suppose that Assumptions \ref{assu:consistency} - \ref{assu:overlap} hold and that $\abs{Y} \leq M$ almost surely for some constant $M$, for a fixed policy $\pi \in \Pi$,
	\begin{equation}
	\label{eq:oracle_CLT}
	\begin{split}
	&\sqrt{n}(\tilde{\Delta}(\pi, \bm{0}) - \Delta(\pi,\bm{0})) \Rightarrow \nn(0, \Omega_\pi), \\ \where\ \ 
	&\Omega_\pi = 	\Var{\sum_{t=1}^T \mathbbm{1}_{t \geq \tau_\pi^{(i)}} \frac{\mathbbm{1}_{A_{1:t-1}^{(i)} = 0}}{\prod_{t'=1}^{t-1}e_{t',0}(S_{1:{t'}}^{(i)})}\tilde{\Psi}_{t,W_\pi}(S_{1:t}^{(i)})}.
	\end{split}
	\end{equation}
\end{lemm}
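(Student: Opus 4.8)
The plan is to recognize $\tilde{\Delta}(\pi,\bm{0})$ as a sample average of i.i.d.\ terms and then invoke a classical central limit theorem, so that the whole lemma reduces to an unbiasedness computation plus a finiteness check. Concretely, for each trajectory $i$ set
\begin{equation*}
Z^{(i)} = \sum_{t=1}^T \mathbbm{1}_{t \geq \tau_\pi^{(i)}} \frac{\mathbbm{1}_{A_{1:t-1}^{(i)} = 0}}{\prod_{t'=1}^{t-1}e_{t',0}(S_{1:{t'}}^{(i)})}\tilde{\Psi}_{t,W_\pi}(S_{1:t}^{(i)}),
\end{equation*}
so that $\tilde{\Delta}(\pi,\bm{0}) = n^{-1}\sum_i Z^{(i)}$. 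Since the trajectories are i.i.d.\ and the nuisance components in \eqref{eq:oracle-gamma} are non-random by the oracle assumption, the $Z^{(i)}$ are i.i.d., and $\Omega_\pi$ in the statement is by definition $\Var{Z^{(i)}}$. Given this, the Lindeberg--L\'evy CLT delivers \eqref{eq:oracle_CLT} once I establish (i) $\EE{Z^{(i)}} = \Delta(\pi,\bm{0})$ and (ii) $\Var{Z^{(i)}} < \infty$.

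For (ii), under Assumption \ref{assu:overlap} every propensity appearing in a denominator of \eqref{eq:oracle-gamma} or in the importance weight is bounded away from $0$: the treatment propensities exceed $\eta/T$, while $\prod_{t'=1}^{t-1}e_{t',0} > (1-\eta_0/T)^{t-1} \geq (1-\eta_0/T)^{T-1}$ stays bounded below uniformly in $T$. Combined with $\abs{Y}\le M$, which forces every conditional mean $\mu_{now,k},\mu_{next,k}$ to lie in $[-M,M]$, this makes $Z^{(i)}$ a bounded random variable, hence $\Var{Z^{(i)}}<\infty$.

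The substance is (i). By Lemma \ref{lemm:is-decomp}, $\Delta(\pi,\bm{0})$ equals the expectation of the same weighted sum but with $\tilde{\Psi}_{t,W_\pi}$ replaced by the plain local advantage $\mu_{now,W_\pi}(S_{1:t},t)-\mu_{next,W_\pi}(S_{1:t},t)$; it therefore suffices to show that the two doubly robust correction terms in \eqref{eq:oracle-gamma} integrate to zero against the importance weight. The first correction is handled by conditioning on $\ff_t=\sigma(S_{1:t},A_{1:t-1})$: the weight, together with $\tau_\pi$, $W_\pi$, $\mu_{now,W_\pi}(S_{1:t},t)$ and $e_{t,W_\pi}(S_{1:t})$, are all $\ff_t$-measurable on the event $A_{1:t-1}=0$ carried by the $\mathbbm{1}_{A_{1:t-1}=0}$ factor (here I use that a regular policy in the sense of Definition \ref{def:regular_policy} has a stopping decision that is a function of $S_{1:t}$ whenever no treatment has yet been taken), and the definition of $\mu_{now,k}$ as $\EE{Y \cond S_{1:t},A_{1:t-1}=0,A_t=k}$ gives $\EE{\mathbbm{1}_{A_t=W_\pi}(Y-\mu_{now,W_\pi}) \cond \ff_t}=0$.

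The delicate term, and the step I expect to be the main obstacle, is the ``next'' correction, which couples $A_{t+1}$ and $S_{t+1}$ and must be handled by a two-stage nested conditioning. First, conditioning on $\ff_{t+1}$ and using the definition of $\mu_{now,W_\pi}(\cdot,t+1)$ as $\EE{Y \cond S_{1:t+1},A_{1:t}=0,A_{t+1}=W_\pi}$ collapses the $A_{t+1}$-indicator against its propensity, leaving the residual $\mu_{now,W_\pi}(S_{1:t+1},t+1)-\mu_{next,W_\pi}(S_{1:t},t)$, which is \emph{not} pointwise zero. Second, conditioning on $\ff_t$ and integrating out $S_{t+1}$ via the defining identity $\mu_{next,k}(S_{1:t},t)=\EE{\mu_{now,k}(S_{1:t+1},t+1) \cond S_{1:t},A_{1:t}=0}$ from \eqref{eq:mu_now_next} annihilates this residual, while the factor $\PP{A_t=0 \cond \ff_t}=e_{t,0}(S_{1:t})$ produced by the $\mathbbm{1}_{A_t=0}$ indicator exactly cancels the $1/e_{t,0}$ in the correction. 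Assembling the two vanishing corrections with Lemma \ref{lemm:is-decomp} yields $\EE{Z^{(i)}}=\Delta(\pi,\bm{0})$ and completes the argument. The only bookkeeping to watch is the measurability of $\tau_\pi$ and $W_\pi$ relative to $\ff_t$ and $\ff_{t+1}$, which is precisely why the $\mathbbm{1}_{A_{1:t-1}=0}$ factor in the weight is essential.
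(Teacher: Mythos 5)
Your proposal is correct and follows essentially the same route as the paper: write $\tilde{\Delta}(\pi,\bm{0})$ as an i.i.d.\ average, reduce unbiasedness via Lemma \ref{lemm:is-decomp} to showing the two doubly robust correction terms in \eqref{eq:oracle-gamma} have zero mean by iterated conditioning, bound $\Omega_\pi$ using Assumption \ref{assu:overlap} and $\abs{Y}\leq M$, and invoke the Lindeberg--L\'evy CLT. The only cosmetic difference is that for the ``next'' correction the paper cites the identity \eqref{eq:rewritenext}, whereas you inline the same computation as an explicit two-stage conditioning on $\ff_{t+1}$ and then $\ff_t$ using \eqref{eq:mu_now_next}.
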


Next, we show that 
the rate of convergence suggested by \eqref{eq:oracle_CLT} is in fact uniform over the whole class $\Pi$ under appropriate bounded entropy conditions, thus enabling
a regret bound for the oracle learner that optimizes  \eqref{eq:plugin-oracle-estimator}.
To do so, we introduce some more notation. Let $H=\{S_{1:T}, A_{1:T}\}$ be the entire history of a trajectory.
Any policy $\pi$ that is regular in the sense of Definition \ref{def:regular_policy} can then be re-expressed as a mapping
from $H$ to a length $KT+1$ vector of all zeros except for an indicator $1$ at one position in the probability simplex,
such that\footnote{All regular policies can be expressed in the form \eqref{eq:pi-new-defn}; however, we emphasize
	that the the converse is not true: The form \eqref{eq:pi-new-defn} does not ensure that $\pi$ is $\ff_t$ measurable.}
\begin{equation}
\label{eq:pi-new-defn}
\pi(H) =  \begin{cases*}
v_{K(\tau_\pi - 1) + W_\pi} & if $\tau_\pi \leq T$, \\
v_{KT + 1} & else,
\end{cases*}
\end{equation}
where $v_m \in  \cb{0, \, 1}^{KT+1}$ is the indicator vector with the $m$-th position 1, and all others 0.
Given this form, we note that we can re-express \eqref{eq:oracle-gamma} as
\begin{equation}
\label{eq:delta-tilde-general}
\begin{split}
&\tilde \Delta(\pi, \pi')  = \frac{1}{n} \sum_{i=1}^n \langle \pi(H^{(i)}) -  \pi'(H^{(i)}), \, \tilde{\Gamma}^{(i)} \rangle, \ \text{where} \\
&\tilde{\Gamma}_{K(t-1)+k}^{(i)} = \sum_{t'=t}^T \frac{\mathbbm{1}_{A_{1:(t'-1)}^{(i)} = 0}}{\prod_{t''=1}^{t'-1} e_{t'',0}(S^{(i)}_{1:{t''}})}\tilde{\Psi}_{t,k}(S_{1:t'}^{(i)})
\end{split}
\end{equation}
for all $1\leq t\leq T$ and $1\leq k\leq K$, and $\tilde{\Gamma}_{KT+1}^{(i)} = 0$.

Given these preliminaries, let the Hamming distance between any two policies $\pi, \pi'$ be 
$$d_h (\pi, \pi') = \frac{1}{n} \sum_{i=1}^n \mathbbm{1}_{\pi\p{H^{(i)}} \neq \pi'\p{H^{(i)}}}.$$
Define the $\varepsilon$-Hamming covering number of $\Pi$ as $$N_{d_h}(\varepsilon, \Pi) = \sup \cb{N_{d_h}\p{\varepsilon, \Pi, \cb{H^{(1)}, \ldots,  H^{(n)}}} \cond H^{(1)}, \ldots, H^{(n)}},$$ where $N_{d_h}\p{\varepsilon, \Pi, \cb{H^{(1)}, \cdots,  H^{(n)}}}$ is the smallest number of policies $\pi^{(1)}, \pi^{(2)}, \ldots,  \in \Pi$ such that $\forall \pi \in \Pi, \exists \pi^{(i)}$ such that $d_h(\pi, \pi^{(i)}) \leq \varepsilon$.
In our formal results, we control the complexity of the policy class $\Pi$ in terms of its Hamming entropy.

\begin{assu}
	\label{assu:entropy}
	There exist constants $C, D \geq 0$ and $0 < \omega < 0.5$ such that, for all $ 0 < \varepsilon < 1$,  $N_{d_h}(\varepsilon, \Pi) \leq C \exp(D(\frac{1}{\varepsilon})^\omega)$.
\end{assu}

Whenever Assumption \ref{assu:entropy} holds, we can use the argument from Lemma 2 of \citet{zhou2018offline} to show 
the rate of convergence in \eqref{eq:oracle_CLT} 
holds uniformly over the whole class $\Pi$
for the oracle estimator $\tilde{\Delta}(\pi, \pi')$. The bounds below depend on the complexity of the class $\Pi$ via
\begin{equation}
\kappa(\Pi) = \int_0^1 \sqrt{\log N_{d_h}(\varepsilon^2, \Pi)} d\varepsilon,
\end{equation}
which is always finite under Assumption \ref{assu:entropy}.
\begin{exam}[The class of linear thresholding policies]
	\label{exam:linear}
	In the case of linear thresholding policies with binary actions $\abs{\acal}=2$, i.e., $\{\pi \in \Pi: \tau_\pi = \min(t: \theta^\top S_{1:t} > 0)\}$ where $\theta \in \RR^d$, we note that by \citet{haussler1995sphere}, the covering number of a policy class for single-step decision-making is bounded by $N_{L_1(\mathbb{P}_n)}(\varepsilon, \Pi_t) \leq c VC(\Pi_t)\exp^{VC(\Pi_t)} (1/\varepsilon)^{VC(\Pi_t)}$ where $VC(\Pi_t)$ is the VC dimension of $\Pi_t$, the linear thresholding policy class at time $t$, and c is some numerical constant. Thus, with a different constant $c$, $N_{L_1(\mathbb{P}_n)}(\varepsilon, \Pi_t) \leq c de^d (1/\varepsilon)^d$. By taking a Cartesian product of the covering at each timestep and with a union bound on the error incurred at each timestep, we achieve a strict upperbound on $N_{d_h}(\varepsilon, \Pi) < c d^Te^{dT} (T/\varepsilon)^{dT}$ for a (again different) constant $c$, and so $\kappa(\Pi) < c dT \log(T)$.
\end{exam}

\begin{lemm}
	\label{lemm:oracle-regret}
	Under Assumptions \ref{assu:consistency}--\ref{assu:entropy} and assuming $\abs{Y} \leq M$ for some constant $M$ almost surely, for any $\delta,\,c >0$, there exists $0 <\varepsilon_0(\delta, c) < \infty$  and universal constants $0< c_1,\, c_2 < \infty$ such that for all $\varepsilon < \varepsilon_0(\delta,c)$, 
	if we collect at least $n(\varepsilon,\delta)$ samples, with 
	\begin{align}
	n(\varepsilon, \delta) = \frac{1}{\varepsilon^2}\p{c + \sqrt{V^*}\p{c_1 \kappa(\Pi) + c_2 + \sqrt{2\log \p{\frac{1}{\delta}}}}}^2,
	\end{align} 
	where $V^* = \sup_{\pi, \pi' \in \Pi} \EE{\langle  \pi(H^{(i)}) - \pi'(H^{(i)}), \tilde{\Gamma}^{(i)} \rangle^2}$,
	then, with probability at least $1-2\delta$,
	\begin{align}
	\sup_{\pi, \pi' \in \Pi} \abs{\tilde{\Delta}(\pi,\pi') - \Delta(\pi,\pi')} \leq \varepsilon,
	\end{align}
	and, moreover, letting $\tilde{\pi} = \argmax\{\tilde{\Delta}(\pi, \bm{0}) : \pi \in \Pi\}$ be the
	policy learned by optimizing the oracle objective \eqref{eq:plugin-oracle-estimator}, we have with probability at least $1-2\delta$,
	$R(\tilde{\pi}) \leq \varepsilon$.
\end{lemm}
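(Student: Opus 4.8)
The plan is to prove the result in two movements: first establish the uniform concentration statement $\sup_{\pi,\pi'\in\Pi}|\tilde{\Delta}(\pi,\pi') - \Delta(\pi,\pi')| \leq \varepsilon$, and then deduce the regret bound as an essentially free corollary. I would start from the linear representation \eqref{eq:delta-tilde-general}, which writes $\tilde{\Delta}(\pi,\pi') = \frac{1}{n}\sum_i \langle \pi(H^{(i)}) - \pi'(H^{(i)}),\, \tilde{\Gamma}^{(i)}\rangle$ as an empirical average of i.i.d.\ terms. By Lemma \ref{lemm:is-decomp} together with the unbiasedness of the oracle doubly robust scores \eqref{eq:oracle-gamma}, this average has expectation exactly $\Delta(\pi,\pi') = V_\pi - V_{\pi'}$, so $\tilde{\Delta} - \Delta$ is a centered empirical process indexed by $\Pi\times\Pi$, and the whole task reduces to controlling its supremum.

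The technical core is a chaining/Dudley entropy bound on this process, following Lemma 2 of \citet{zhou2018offline}. Two ingredients feed into it. First, under Assumption \ref{assu:overlap} the cumulative inverse-propensity weights are bounded: $\prod_{t'=1}^{t-1} e_{t',0}(\cdot) \geq (1-\eta_0/T)^{T-1}$ stays bounded away from zero uniformly in $T$ (it converges to $e^{-\eta_0}$), while each single-step propensity exceeds $\eta/T$, so with $\abs{Y}\leq M$ every coordinate of $\tilde{\Gamma}^{(i)}$ is bounded and each summand $\langle \pi(H^{(i)}) - \pi'(H^{(i)}),\, \tilde{\Gamma}^{(i)}\rangle$ is a bounded random variable. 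Second, because $\pi(H^{(i)})$ and $\pi'(H^{(i)})$ are indicator vectors in the simplex, each summand vanishes on every trajectory where the two policies agree; hence the increment between two index points is supported on the disagreement set, and its empirical $L_2$ scale is controlled by $\sqrt{d_h(\pi,\pi')}$. Combining boundedness with this disagreement structure yields sub-Gaussian increments at scale proportional to $\sqrt{V^*\, d_h}$, and Dudley's integral over the Hamming metric produces an expected-supremum bound of order $\sqrt{V^*/n}\,\kappa(\Pi)$; the square-root reparametrization in $\kappa(\Pi)=\int_0^1\sqrt{\log N_{d_h}(\varepsilon^2,\Pi)}\,d\varepsilon$ is precisely the $L_2$-versus-Hamming change of variable, since a $d_h$-ball of radius $u^2$ maps to an $L_2$-ball of radius $\propto u$. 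I would then upgrade this in-expectation bound to the high-probability statement via a variance-sensitive (Talagrand--Bousquet type) concentration inequality whose deviation term scales with $\sqrt{V^*}$ and contributes the $\sqrt{2\log(1/\delta)}$ factor; setting $\tfrac{1}{\sqrt{n}}\big(c + \sqrt{V^*}(c_1\kappa(\Pi)+c_2+\sqrt{2\log(1/\delta)})\big)\leq\varepsilon$ and solving for $n$ then yields the stated sample size $n(\varepsilon,\delta)$.

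Given the uniform bound, the regret statement follows by a short argument exploiting the additivity of $\tilde{\Delta}$ in its two arguments. Let $\pi^*\in\Pi$ attain (or nearly attain) $\sup_{\pi\in\Pi}V_\pi$. Since \eqref{eq:delta-tilde-general} gives $\tilde{\Delta}(\pi,\pi') = \tilde{\Delta}(\pi,\bm{0}) - \tilde{\Delta}(\pi',\bm{0})$, and $\tilde{\pi}$ maximizes $\tilde{\Delta}(\cdot,\bm{0})$ over $\Pi$, we have $\tilde{\Delta}(\pi^*,\tilde{\pi}) = \tilde{\Delta}(\pi^*,\bm{0}) - \tilde{\Delta}(\tilde{\pi},\bm{0}) \leq 0$. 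Therefore, on the same high-probability event,
\begin{equation*}
R(\tilde{\pi}) = V_{\pi^*} - V_{\tilde{\pi}} = \Delta(\pi^*,\tilde{\pi}) = \tilde{\Delta}(\pi^*,\tilde{\pi}) + \big(\Delta(\pi^*,\tilde{\pi}) - \tilde{\Delta}(\pi^*,\tilde{\pi})\big) \leq \sup_{\pi,\pi'\in\Pi}\big|\Delta(\pi,\pi') - \tilde{\Delta}(\pi,\pi')\big| \leq \varepsilon,
\end{equation*}
so that applying the uniform bound to the single pair $(\pi^*,\tilde{\pi})$ loses no additional constant factor.

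I expect the main obstacle to be the chaining step, and specifically verifying that the oracle scores $\tilde{\Gamma}^{(i)}$ satisfy the sub-Gaussian increment condition with the correct Hamming-to-$L_2$ scaling so that the argument of Lemma 2 of \citet{zhou2018offline} applies, and carefully tracking how the horizon $T$ enters the boundedness constants through the cumulative inverse-propensity weights. By contrast, the unbiasedness reduction and the regret deduction are comparatively routine.
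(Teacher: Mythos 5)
Your proposal is correct and follows essentially the same route as the paper: the paper likewise reduces the regret bound to the uniform concentration statement via exactly your argument that $\tilde{\Delta}(\pi^*,\tilde{\pi}) \leq 0$ at the empirical maximizer, and obtains the uniform bound by observing that, under the representation \eqref{eq:pi-new-defn}--\eqref{eq:delta-tilde-general}, the oracle estimator has the same form as in \citet{zhou2018offline}, so their Lemma 2 applies under the Hamming-entropy Assumption \ref{assu:entropy}. The only detail you gloss over is that the cited bound carries an $o(1/\sqrt{n})$ remainder term, which the paper absorbs into the constant $c$ by choosing $\varepsilon_0(\delta,c)$ small enough that the implied sample size is large enough --- this is precisely why $\varepsilon_0$ depends on $c$ in the statement.
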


Our goal is to get a comparable regret bound using the feasible estimator from \eqref{eq:feasible-adr} in Algorithm \ref{alg:adr} that uses estimated nuisance components
by coupling the feasible value estimates with the oracle ones. We establish our coupling result in terms of rates of
convergence on the nuisance components, as follows.

\begin{assu}\label{assu:nuisance} We work with a sequence of problems and estimators such that $\hat{\mu}^{-q(i)}_{now,k}$, $\hat{\mu}^{-q(i)}_{next,k}$, $\hat{e}^{-q(i)}_{t,k}$, satisfy for some universal constants $C_\mu, \, C_e, \,  \kappa_\mu, \, \kappa_e$,
	\begin{align}
	&\sup_{k, t}\EE{\p{\hat{\mu}^{-q(i)}_{now,k}(S_{1:t},t) -\mu_{now,k}(S_{1:t},t)}^2}\leq C_\mu n^{-2\kappa_\mu},\\
	& \sup_{k, t}\EE{\p{\hat{\mu}^{-q(i)}_{next,k}(S_{1:t},t) -\mu_{next,k}(S_{1:t},t)}^2} \leq C_\mu n^{-2\kappa_\mu},\\
	&\sup_{k, t}\EE{\p{\frac{1}{\hat{e}^{-q(i)}_{t,k}(S_{1:t})} -\frac{1}{e_{t,k}(S_{1:t})}}^2} \leq C_e n^{-2\kappa_e}, \label{eq:ehat-rate}
	\end{align}
	and furthermore $\hat{e}^{-q(i)}_{t,k}(S_{1:t})$ is uniformly consistent, 
	\begin{align}
	\sup_{t,\,k,\, s\in \mathcal{S}^T} \abs{\hat{e}^{-q(i)}_{t,k}(s_{1:t}) - e_{t,k}(s_{1:t})} \to_p 0.
	\end{align}
\end{assu}

Moreover, motivated by the observation that treatment effects are weak relative to the available sample
size in many problems of interest, we allow for problem sequences where treatment effects can shrink with sample size $n$. In regimes where
treatment effects stay constant when the sample size grows, super-efficiency phenomena are unavoidable;
see \citet{luedtke2017faster} for a formal statement with static decision rules.
To consider other such settings, first recall the definition of $\delta_{local, k}$ defined in \eqref{eq:delta}. We further define $\delta_{local,k}^+(S_{1:t+1},t) = \mu_{now, k}(s_{1:t+1},t+1) - \mu_{next, k}(s_{1:t}, t)$.\footnote{In deterministic systems this difference is always identically zero, but in stochastic settings the two quantities will generally be different. Intriguingly related quantities (the expected temporal-difference error, and the variance of the value of next state) have been observed to play important roles in online reinforcement learning regret bounds \citep[see, e.g.,][]{zanette2019tighter} as well.}

\begin{assu}\label{assu:delta}
	For some universal constants $C_\delta, \,  \kappa_\delta$,  $C_\gamma, \,  \kappa_\gamma$,
	\begin{align}
	&\sup_{t, k} \EE{\delta_{local, k}(S_{1:t},t)^2} \leq C_\delta n^{-2\kappa_\delta}, \\
	&	\sup_{t, k} \EE{\delta_{local,k}^+(S_{1:t+1},t)^2} \leq C_\gamma n^{-2\kappa_\gamma}.
	\end{align}
\end{assu}

\begin{lemm}
	\label{lemm:cf-uniform}
	Suppose that Assumptions \ref{assu:consistency}--\ref{assu:delta} hold and assume $\abs{Y} \leq M$ for some constant $M$ almost surely. Then, for any $\delta > 0$,  there exists $0 < \varepsilon_0(\delta, \, \eta, \, T) < \infty$ such that for all $\varepsilon <\varepsilon_0(\delta, \, \eta, \, T)$, 
	with probability at least $1-3\delta$, 
	\begin{align*}
	\sup_{\pi,\pi'}\abs{\hat{\Delta}(\pi,\pi') - \tilde{\Delta}(\pi,\pi')} \leq \varepsilon,
	\end{align*}
	provided we collect at least $n_0(\varepsilon, \delta)$ samples, where
	\begin{align*}
	n_0(\varepsilon, \, \delta) = \p{C(\delta)  KT^2 \varepsilon^{-1}}^{1/\min\cb{1/2 + \kappa_e, \, 1/2 + \kappa_\mu, \, \kappa_e + \kappa_\mu, \, \kappa_e + \kappa_\delta, \, \kappa_e + \kappa_\gamma}},
	\end{align*}
	where $C(\delta)$ only depends on the constants used in Assumptions \ref{assu:overlap}, \ref{assu:nuisance} and \ref{assu:delta}.
\end{lemm}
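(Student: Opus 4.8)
\emph{The plan} is to couple the feasible and oracle objectives by first reducing to a single policy and then expanding the per-step discrepancy into a ``bias'' piece and a mean-zero ``fluctuation'' piece, handled respectively by double robustness and by a concentration argument. The first move exploits linearity in $\pi$: in the representation \eqref{eq:delta-tilde-general} and its feasible analogue, the never-treating policy $\bm 0$ maps to $v_{KT+1}$ and $\tilde\Gamma^{(i)}_{KT+1}=\hat\Gamma^{(i)}_{KT+1}=0$, so $\hat\Delta(\pi,\pi')=\hat\Delta(\pi,\bm 0)-\hat\Delta(\pi',\bm 0)$ and likewise for $\tilde\Delta$, giving
\[
\sup_{\pi,\pi'}\abs{\hat\Delta(\pi,\pi')-\tilde\Delta(\pi,\pi')}\le 2\sup_{\pi}\abs{\hat\Delta(\pi,\bm 0)-\tilde\Delta(\pi,\bm 0)}.
\]
By \eqref{eq:feasible-adr} and \eqref{eq:plugin-oracle-estimator}, the right-hand difference is $\frac1n\sum_i\sum_{t=1}^T\mathbbm 1_{t\ge\tau_\pi^{(i)}}\,\Theta_{t,W_\pi}^{(i)}$ with $\Theta_{t,k}^{(i)}=\hat w_t^{(i)}\hat\Psi_{t,k}^{(i)}-w_t^{(i)}\tilde\Psi_{t,k}^{(i)}$, where $w_t^{(i)}$ denotes the product of inverse never-treat propensities appearing in the weights and $\hat w_t^{(i)}$ its cross-fit estimate.

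The core step is the per-step split $\Theta_{t,k}=(\hat w_t-w_t)\hat\Psi_{t,k}+w_t(\hat\Psi_{t,k}-\tilde\Psi_{t,k})$, after which I would condition on the out-of-fold data, so that the cross-fit nuisances act as fixed functions on fold $q(i)$, and decompose each piece into its conditional mean (a bias) plus a mean-zero fluctuation. For the bias the double-robust form of \eqref{eq:dr_score} is essential: taking conditional expectations given $S_{1:t}$ and $A_{1:t-1}=0$, the leading errors in $\hat\Psi_{t,k}$ cancel and one is left with a product of an $\hat e$-error and an $\hat\mu$-error, while multiplying the conditional mean $\approx\delta_{local,k}$ from \eqref{eq:delta} by $(\hat w_t-w_t)$ produces cross terms carrying $\delta_{local,k}$ and $\delta_{local,k}^+$. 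Cauchy--Schwarz with Assumptions \ref{assu:nuisance} and \ref{assu:delta} then bounds these biases in $L^1$ by $n^{-(\kappa_e+\kappa_\mu)}$, $n^{-(\kappa_e+\kappa_\delta)}$ and $n^{-(\kappa_e+\kappa_\gamma)}$. Here uniformity over $\pi$ is free: one bounds $\mathbbm 1_{t\ge\tau_\pi}\le 1$, takes absolute values, and unions over the $K$ possible values of $W_\pi$ and the at most $T$ time indices, the inner sum over $t$ contributing the extra factor $T$, which is the origin of the $KT^2$ in $n_0$. Throughout, Assumption \ref{assu:overlap} keeps $w_t$ and the $1/e_{t,k}$ bounded (by powers of $T/\eta$), which is where the $T$-dependence of $\varepsilon_0(\delta,\eta,T)$ enters, and the uniform consistency of $\hat e$ lets me treat the $\hat w_t$ factors as bounded once $n$ exceeds a $\delta$-dependent threshold.

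The hard part will be the mean-zero fluctuation terms. Conditionally on the out-of-fold data these are centered with $L^2$ size of order $n^{-\kappa_e}$ (from the $\hat w_t-w_t$ factor) and $n^{-\kappa_\mu}$ (from the $\hat\mu$-errors in $\hat\Psi-\tilde\Psi$), so for a \emph{fixed} $\pi$ a variance computation already delivers the $n^{-(1/2+\kappa_e)}$ and $n^{-(1/2+\kappa_\mu)}$ rates, the extra $1/2$ being the usual central-limit gain. The obstacle is to promote this to a bound uniform over all $\pi\in\Pi$ without losing the $n^{-1/2}$ factor: each $\pi$ selects, on every trajectory, the tail sum $\sum_{t\ge\tau_\pi^{(i)}}$ of its fluctuations, so the quantity to control is an empirical process indexed by the $\ff_t$-measurable stopping-time events $\{t\ge\tau_\pi\}$. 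I would dominate it by a maximal inequality applied step by step in $t$ and treatment $k$, using the monotone threshold structure of regular when-to-treat policies together with the bounded Hamming entropy of $\Pi$ (Assumption \ref{assu:entropy}); because these fluctuations already carry the small envelope $n^{-\kappa}$, the entropy integral $\kappa(\Pi)$ enters only multiplicatively and is absorbed into the constants, leaving the exponent $\tfrac12+\kappa_e$ (resp.\ $\tfrac12+\kappa_\mu$) intact. Collecting the five groups of terms, taking the smallest exponent $\min\cb{\tfrac12+\kappa_e,\,\tfrac12+\kappa_\mu,\,\kappa_e+\kappa_\mu,\,\kappa_e+\kappa_\delta,\,\kappa_e+\kappa_\gamma}$, inverting the resulting relation $\varepsilon\asymp C(\delta)KT^2\,n^{-(\cdot)}$, and spending the probability budget $1-3\delta$ across the bias concentration, the fluctuation process, and the event that $\hat e$ is uniformly close to $e$, yields exactly $n_0(\varepsilon,\delta)$.
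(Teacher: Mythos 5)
Your proposal follows essentially the same route as the paper's proof: the same split of the per-step discrepancy into a weight-error piece and a score-error piece (yours is the algebraically mirrored version of the paper's $\text{err}_t(1)/\text{err}_t(2)$ split), the same conditioning on out-of-fold data plus the uniform-consistency event for $\hat{e}$, the same Cauchy--Schwarz treatment of the product-of-errors bias terms yielding the rates $n^{-(\kappa_e+\kappa_\mu)}$, $n^{-(\kappa_e+\kappa_\delta)}$, $n^{-(\kappa_e+\kappa_\gamma)}$, and the same identification of the conditionally mean-zero cross terms giving $n^{-(1/2+\kappa_e)}$ and $n^{-(1/2+\kappa_\mu)}$. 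The only differences are executional: where you sketch a maximal-inequality/entropy argument for uniformity of the fluctuation terms, the paper outsources exactly that step by citing the proof of Lemma 4 of \citet{athey2017efficient}, and the paper's Proposition \ref{prop:prod_err} (converting per-step propensity error rates into a rate for the product weights $\hat{w}_t$) is a technical ingredient your Cauchy--Schwarz step uses implicitly.
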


Combining the above with Lemma \ref{lemm:oracle-regret}, we immediately have the following finite-sample bound for the regret on the feasible estimator.

\begin{theo}
	\label{theo:main-regret}
	Let \smash{$\hat{\pi} = \argmax\{\hat{\Delta}(\pi, \bm{0}) : \pi \in \Pi\}$} be the
	policy learned by optimizing the feasible objective \eqref{eq:feasible-adr}. Suppose Assumptions \ref{assu:consistency}--\ref{assu:delta} and assume $\abs{Y} \leq M$ for some constant $M$ almost surely. Then, for any $\delta>0$, there exist $0 < \varepsilon_0(\delta, \, \eta, \, T)< \infty$ such that the following statement holds for all $\varepsilon <\varepsilon_0(\delta, \, \eta, \, T)$:
	If we collect at least $n(\varepsilon,\delta)$ samples, with 
	\begin{equation}
	\begin{split}
	n(\varepsilon, \delta) = &\max\bigg\{\frac{1}{\varepsilon^2}\p{c + \sqrt{V^*}\p{c_1 \kappa(\Pi) + c_2 + \sqrt{2\log \p{\frac{1}{\delta}}}}}^2, n_0(\varepsilon, \, \delta) \bigg\},
	\end{split}
	\end{equation}
	$V^* = \sup_{\pi, \pi' \in \Pi} \EE{\langle  \pi(H^{(i)}) - \pi'(H^{(i)}), \tilde{\Gamma}^{(i)} \rangle^2}$,
	and $n_0(\varepsilon,\delta)$ as defined in Lemma \ref{lemm:cf-uniform},
	then with probability at least $1-5\delta$
	\begin{align*}
	\sup_{\pi, \pi' \in \Pi} \abs{\hat{\Delta}(\pi,\pi') - \Delta(\pi,\pi')} \leq 2\varepsilon,
	\end{align*}
	and in particular \smash{$R(\hat{\pi}) \leq 2\varepsilon$}.
\end{theo}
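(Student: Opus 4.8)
The plan is to obtain the theorem as a direct synthesis of Lemma~\ref{lemm:oracle-regret} and Lemma~\ref{lemm:cf-uniform} via a union bound and the triangle inequality, followed by the standard $\argmax$/optimality argument to convert the uniform value-estimation bound into a regret bound. The two terms inside the $\max$ defining $n(\varepsilon,\delta)$ are chosen precisely so that each of the two lemmas applies: the first term is the oracle sample size of Lemma~\ref{lemm:oracle-regret}, and the second is $n_0(\varepsilon,\delta)$ from the coupling Lemma~\ref{lemm:cf-uniform}. I would take $\varepsilon_0(\delta,\eta,T)$ to be the minimum of the two thresholds coming from those lemmas so that both conclusions hold simultaneously.

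First I would invoke the two lemmas on their respective high-probability events. Whenever $n \ge n(\varepsilon,\delta)$, Lemma~\ref{lemm:oracle-regret} gives, with probability at least $1-2\delta$, the oracle uniform bound $\sup_{\pi,\pi'}|\tilde{\Delta}(\pi,\pi') - \Delta(\pi,\pi')| \le \varepsilon$, while Lemma~\ref{lemm:cf-uniform} gives, with probability at least $1-3\delta$, the coupling bound $\sup_{\pi,\pi'}|\hat{\Delta}(\pi,\pi') - \tilde{\Delta}(\pi,\pi')| \le \varepsilon$. A union bound makes both events hold simultaneously with probability at least $1-5\delta$, and on their intersection the triangle inequality yields $\sup_{\pi,\pi'}|\hat{\Delta}(\pi,\pi') - \Delta(\pi,\pi')| \le 2\varepsilon$, which is the first claim.

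It then remains to deduce the regret bound. The key structural fact is that $\hat{\Delta}$ (like $\tilde{\Delta}$ in \eqref{eq:delta-tilde-general}) is additive in its arguments, $\hat{\Delta}(\pi,\pi') = \hat{\Delta}(\pi,\bm{0}) - \hat{\Delta}(\pi',\bm{0})$, since it is linear in $\pi(H^{(i)})-\pi'(H^{(i)})$ and the never-treating policy contributes the zero coordinate; the same additivity holds for $\Delta$ because $\Delta(\pi,\pi') = V_\pi - V_{\pi'}$. Fixing any $\pi^\dagger \in \Pi$ with $V_{\pi^\dagger} \ge V^* - \varepsilon'$ (a near-maximizer, since $V^*$ is a supremum that need not be attained), I would write $R(\hat{\pi}) = V^* - V_{\hat{\pi}} \le \Delta(\pi^\dagger,\hat{\pi}) + \varepsilon'$, then apply the uniform bound to get $\Delta(\pi^\dagger,\hat{\pi}) \le \hat{\Delta}(\pi^\dagger,\hat{\pi}) + 2\varepsilon$. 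Since $\hat{\pi} = \argmax_{\pi}\hat{\Delta}(\pi,\bm{0})$, additivity gives $\hat{\Delta}(\pi^\dagger,\hat{\pi}) = \hat{\Delta}(\pi^\dagger,\bm{0}) - \hat{\Delta}(\hat{\pi},\bm{0}) \le 0$, so $R(\hat{\pi}) \le 2\varepsilon + \varepsilon'$; letting $\varepsilon' \downarrow 0$ yields $R(\hat{\pi}) \le 2\varepsilon$.

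Since the heavy lifting---the uniform chaining control of the oracle objective and the nuisance-coupling rate computation---is already packaged into Lemmas~\ref{lemm:oracle-regret} and~\ref{lemm:cf-uniform}, the only genuine work here is bookkeeping: aligning the two sample-size thresholds and probability budgets, and carrying out the $\argmax$ argument. The one point requiring care, and the closest thing to an obstacle, is ensuring the additive representation $\hat{\Delta}(\pi,\pi') = \hat{\Delta}(\pi,\bm{0}) - \hat{\Delta}(\pi',\bm{0})$ is used consistently, so that optimality of $\hat{\pi}$ for the $\bm{0}$-referenced objective transfers to the pairwise quantity $\Delta(\pi^\dagger,\hat{\pi})$ that equals the regret; handling a possibly-unattained supremum via the near-maximizer $\pi^\dagger$ closes this cleanly.
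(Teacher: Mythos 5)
Your proposal is correct and matches the paper's approach exactly: the paper obtains Theorem~\ref{theo:main-regret} by combining Lemma~\ref{lemm:oracle-regret} and Lemma~\ref{lemm:cf-uniform} via a union bound and the triangle inequality, with the regret bound following from the same $\argmax$ argument (using additivity of $\hat{\Delta}$ and $\Delta$) that appears in the proof of Lemma~\ref{lemm:oracle-regret}. Your extra care with a near-maximizer $\pi^\dagger$ in case the supremum $V^*$ is not attained is a minor refinement the paper glosses over, but it does not change the argument.
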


We obtain the following corollary if we assume specific learning rates on the nuisance components and the signal strength.

\begin{coro}
	\label{coro:regret}
	Assume $\kappa_\mu > 0,\, \kappa_e > 0, \,\kappa_e + \kappa_\mu > \frac{1}{2}, \, \kappa_e + \kappa_\gamma > \frac{1}{2}, \,\kappa_{e} + \kappa_\delta > \frac{1}{2}$. Suppose Assumptions \ref{assu:consistency}--\ref{assu:delta} hold and assume $\abs{Y} \leq M$ for some constant $M$ almost surely. Then, for any $\delta >0$, there exist $0 <\varepsilon_0(\delta, \, \eta, \, T) < \infty$  such that the following holds for all $\varepsilon < \varepsilon_0(\delta, \, \eta, \, T)$:
	If we collect at least $n(\varepsilon,\delta)$ samples, with 
	\begin{align}
	\label{eq:coro_regret}
	n(\varepsilon, \delta) = \frac{1}{\varepsilon^2}\p{c + \sqrt{V^*}\p{c_1 \kappa(\Pi) + c_2 + \sqrt{2\log \p{\frac{1}{\delta}}}}}^2,
	\end{align}
	and $V^* = \sup_{\pi, \pi' \in \Pi} \EE{\langle  \pi(H^{(i)}) - \pi'(H^{(i)}), \tilde{\Gamma}^{(i)} \rangle^2}$, then
	with probability at least $1-5\delta$
	\begin{align*}
	\sup_{\pi, \pi' \in \Pi} \abs{\hat{\Delta}(\pi,\pi') - \Delta(\pi,\pi')} \leq 2\varepsilon,
	\end{align*}
	and in particular \smash{$R(\hat{\pi}) \leq 2\varepsilon$}.
\end{coro}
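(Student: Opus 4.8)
The plan is to derive the corollary directly from Theorem~\ref{theo:main-regret} without any new probabilistic work, by showing that under the stated rate conditions the term $n_0(\varepsilon,\delta)$ inherited from Lemma~\ref{lemm:cf-uniform} is asymptotically negligible compared to the oracle sample-complexity term, so that the maximum in the theorem's sample-size requirement collapses to its first argument once $\varepsilon$ is small enough. The corollary then follows by invoking the theorem verbatim with the simplified sample size \eqref{eq:coro_regret}.

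First I would set $m = \min\{1/2 + \kappa_e,\, 1/2 + \kappa_\mu,\, \kappa_e + \kappa_\mu,\, \kappa_e + \kappa_\delta,\, \kappa_e + \kappa_\gamma\}$, the quantity controlling the exponent $1/m$ appearing in $n_0(\varepsilon,\delta)$. The key observation is that each of these five terms strictly exceeds $1/2$ under the corollary's hypotheses: $1/2+\kappa_e>1/2$ and $1/2+\kappa_\mu>1/2$ because $\kappa_e,\kappa_\mu>0$, while $\kappa_e+\kappa_\mu$, $\kappa_e+\kappa_\delta$, and $\kappa_e+\kappa_\gamma$ each exceed $1/2$ by direct assumption. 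Hence $m>1/2$, so $1/m<2$. I would then compare growth rates in $\varepsilon$: writing the first term of the theorem's sample complexity as $A(\varepsilon)=\varepsilon^{-2}\p{c + \sqrt{V^*}\p{c_1\kappa(\Pi)+c_2+\sqrt{2\log(1/\delta)}}}^2$ and $n_0(\varepsilon,\delta)=(C(\delta)KT^2)^{1/m}\,\varepsilon^{-1/m}$, the ratio satisfies $n_0(\varepsilon,\delta)/A(\varepsilon)\propto \varepsilon^{\,2-1/m}$. Since $2-1/m>0$, this ratio tends to $0$ as $\varepsilon\to 0$, so there is a threshold $\varepsilon_1>0$ below which $A(\varepsilon)\geq n_0(\varepsilon,\delta)$, and consequently $\max\{A(\varepsilon),\,n_0(\varepsilon,\delta)\}=A(\varepsilon)$, which is exactly the sample size in \eqref{eq:coro_regret}.

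Finally I would take $\varepsilon_0(\delta,\eta,T)$ to be the minimum of the threshold supplied by Theorem~\ref{theo:main-regret} and the threshold $\varepsilon_1$ above; then for every $\varepsilon<\varepsilon_0$, collecting at least $n(\varepsilon,\delta)$ samples as in \eqref{eq:coro_regret} meets the theorem's requirement, and the conclusions $\sup_{\pi,\pi'\in\Pi}\abs{\hat\Delta(\pi,\pi')-\Delta(\pi,\pi')}\leq 2\varepsilon$ and $R(\hat\pi)\leq 2\varepsilon$ transfer immediately. There is no genuine obstacle in this argument---it is a routine rate comparison---so the only care required is in correctly checking all five terms of the minimum against $1/2$, and in verifying that the resulting $\varepsilon_0$ depends only on $(\delta,\eta,T)$ together with the fixed problem constants from Assumptions~\ref{assu:overlap}, \ref{assu:nuisance}, and \ref{assu:delta} (through $C(\delta)$) and the fixed class $\Pi$ (through $V^*$ and $\kappa(\Pi)$), consistent with the statement's convention.
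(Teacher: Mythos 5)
Your proof is correct and is essentially the paper's own (implicit) argument: the paper presents the corollary as a direct consequence of Theorem~\ref{theo:main-regret}, and your rate comparison---checking that all five terms in the minimum exceed $1/2$ so that $1/m<2$ and $n_0(\varepsilon,\delta)\propto\varepsilon^{-1/m}$ is eventually dominated by the $\varepsilon^{-2}$ oracle term, collapsing the maximum to \eqref{eq:coro_regret}---is exactly the reasoning needed. Your handling of the threshold $\varepsilon_0$ and its dependence on $(\delta,\eta,T)$ and the fixed problem constants is also consistent with the paper's conventions.
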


Our result above can be interpreted in several different regimes. First, we note that we can reach the optimal sample complexity $n \sim \varepsilon^{-2}$ if either (a) the treatment propensities $e_{t,k}$ are known and we can consistently estimate $\mu_{now,k}$ and $\mu_{next,k}$; or (b) the signal size of the advantages is null (i.e., $\mu_{now,k}(S_{1:t},t) - \mu_{next,k}(S_{1:t},t) = 0$) or is weak (in the sense that $\kappa_\delta > 0$ and $e_{t,0}$ can be learned at a rate such that $\kappa_\delta + \kappa_{e} > 1/2$, etc.), and similarly the stochastic fluctuations are weak (in the sense that $\kappa_\gamma > 0$ and $e_{t,0}$ can be learned at a rate such that $\kappa_\gamma + \kappa_{e} > 1/2$, etc.), and we can consistently estimate $\mu_{now,k}$, $\mu_{next,k}$ and $e_{t,k}$ such that  $\kappa_e + \kappa_\mu>1/2$.

Conversely, if the treatment effects are of a fixed size (i.e., $\kappa_\delta = 0$), and we don't know the treatment propensities $e_{t,k}$ a priori, then we pay a price for not being robust to the change of measure from Lemma \ref{lemm:murphy-decomp} to Lemma \ref{lemm:is-decomp}, and we no longer achieve the optimal rate. The terms that hurt us are due to error terms that decay as $n^{-(\kappa_e + \kappa_\delta)}$
which arises from the interaction of how we use inverse propensity weighting for the treatment starting probabilities
and the signal size of the advantages, and ones that decays as $n^{-(\kappa_e + \kappa_\gamma)}$ which are similarly due to stochastic fluctuations in the
value of starting treatment. If advantages are small, this won't matter for smaller target error rates $\varepsilon$, but
requires a bigger sample size when we aim for very small $\varepsilon$.

\section{ADR with a Terminal State}
\label{sec:death}

So far, we have focused our analysis on when-to-stop problems in settings characterized by overlap
(Assumption \ref{assu:overlap}), i.e., where the sampling policy can start treatment in any state with
a non-zero policy, and have assumed that we want to learn a regular policy in the sense of
Definition \ref{def:regular_policy}, i.e., one that never stops prescribing treatment once it has started to do so.
In many applications of interest, however, a patient may enter a terminal state in which treatment becomes
impossible---for example, a patient may leave the study or die. The existence of such a terminal state
contradicts the assumptions made above: There is no overlap in the terminal state (because treatment can
never start there), and a policy that respects the terminal state may not be regular (because the policy must
stop prescribing treatment once the patient enters the terminal state).

The goal of this section is to briefly discuss methodological extensions to ADR that are required in the presence
of a terminal state. Algorithm~\ref{alg:adr-terminal} provides pseudocode for our ADR policy optimization approach with terminal states. 

To do so, we start by adapting Definition \ref{def:regular_policy} and
Assumption \ref{assu:overlap} to this setting.

\begin{defi}[Terminal state]
	\label{defi:terminal}
	A state $\Phi \in \set$ is terminal if, whenever $S_t = \Phi$ , then also
	$S_{t'} = \Phi$ for all $t' > t$. Furthermore, we assume that once a patient enters a terminal state,
	we can assess their final outcome, i.e., there exists a set of known functions\footnote{For example,
		if $Y$ is survival time, then one can use $H_t(S_{1:t}) = \sup\cb{t' : S_{t'} \neq \Phi, \, t'\leq t}$.}
	$H_t$ such that $Y = H_t(S_{1:t})$ whenever $S_{t+1} = \Phi$.
\end{defi}

\setcounterref{deficmp}{def:regular_policy}
\addtocounter{deficmp}{-1}

\begin{deficmp}[Regular policy with terminal state]
	\label{def:regular_policy_term}
	A regular when-to-treat policy $\pi$ that respects the terminal state $\Phi$
	is determined by an $\ff_t$-measurable stopping time $\tau_\pi$
	and an associated $\ff_{\tau_\pi}$-measurable decision variable $W_\pi \in \cb{1, \, \ldots, \, K}$
	as follows:\footnote{Note that the policies specified here are still when-to-start policies, i.e.,
		if they have actually started treatment then they never stop (even if the patient enters a terminal
		state). One could also choose to make $\pi$ stop treatment once the patient enters a terminal state.
		However, from a statistical perspective, this makes no difference: All that matters is that the standard
		of care is a deterministic function of state once treatment has started.}
	For each time $t = 1, \, \ldots,\, T$,  if $S_t = \Phi$ then $\pi_t(S_{1:t}, \, A_{1:(t-1)}) = 0$. Otherwise, if $A_{t - 1} \neq 0$ then $\pi_t(S_{1:t}, \, A_{1:(t-1)}) = A_{t-1}$, else if $t \geq \tau_\pi$ then $\pi_t(S_{1:t}, \, A_{1:(t-1)}) = W_\pi$. If none of the above conditions apply, then $\pi_t(S_{1:t}, \, A_{1:(t-1)}) = 0$.
\end{deficmp}

\setcounterref{assucmp}{assu:overlap}
\addtocounter{assucmp}{-1}

\begin{assucmp}[Overlap with terminal state]\label{assu:overlap_term}
	There are constants $\eta, \, \eta_0 > 0$ as well as a terminal state $\Phi$ such that, for all $t=1,\cdots, T$
	and $s_{1:t} \in \mathcal{S}^t$, the following hold.
	If $s_t = \Phi$, then $e_{t,a}(s_{1:t}) = 0$ for all $a \in \mathcal{A} \setminus \cb{0}$ and  $ e_{t,0}(s_{1:t}) = 1$; 
	else, $e_{t,a}(s_{1:t}) > \eta/T$ for all $a \in \mathcal{A} \setminus \cb{0}$ and $ e_{t,0}(s_{1:t}) > 1 - \eta_0/T$.
\end{assucmp}

In the presence of a terminal state, the main modification we need to make to ADR is that the
conditional expectation $\mu_{next, k}(s_{1:t}, t)$ as defined in \eqref{eq:mu_now_next} no longer matches
the $Q$-function that arises in \eqref{eq:murphy} in the proof of Lemma \ref{lemm:murphy-decomp}, and so we need to adapt
our statement of this result. The proof of the following lemma is included in the Appendix.

\begin{lemm}\label{lemm:murphy-decomp-term}
	Under Assumptions \ref{assu:consistency} and \ref{assu:ignorability}, let $\Phi$ be the terminal state, and let $\pi$ be a regular when-to-treat
	policy that respects $\Phi$ in the sense of Definition \ref{def:regular_policy_term}.
	Then
	\begin{equation}
	\label{eq:murphy-decomp-term}
	\Delta(\pi, \bm{0}) = \EE[\bm{0}] {\sum_{t=\tau_\pi}^T \mathbbm{1}_{S_t \neq \Phi}\p{\mu_{now, W_\pi}(S_{1:t}, t) - \mu^\Phi_{next, W_\pi}(S_{1:t}, t)}},
	\end{equation}
	where $\mathbb{E}_{\bm{0}}$ samples trajectories under a never-treating policy,
	$\tau_\pi$ is the time at which $\pi$ starts treating
	and $W_\pi$ is the treatment chosen at that time, and
	\begin{equation}
	\label{eq:mu-next-term}
	\begin{split}
	& \mu^\Phi_{next, k}(S_{1:t}, t) \\
	&\ \
	= \PP{S_{t + 1} \neq \Phi \cond S_{1:t}, \, A_{1:t} = \bm{0}} 
	\EE{\mu_{now, k}\p{S_{1:t+1}, \, t+1} \! \cond S_{1:t}, \, A_{1:t} = \bm{0}, \, S_{t+1} \neq \Phi} \\
	&\ \ \ \ 
	+ \PP{S_{t + 1} = \Phi \cond S_{1:t}, \, A_{1:t} = \bm{0}} H_t(S_{1:t}).
	\end{split}
	\end{equation}
\end{lemm}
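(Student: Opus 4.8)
The plan is to mirror the proof of Lemma~\ref{lemm:murphy-decomp} as closely as possible, starting from the same telescoping identity and then carefully tracking the terminal state at each step. Since the decomposition underlying \eqref{eq:murphy} is Murphy's Lemma 1, which is a pure expectation identity requiring neither overlap nor any structural assumption on $\pi$, it continues to hold verbatim under Assumptions~\ref{assu:consistency} and~\ref{assu:ignorability}. Evaluating it under the never-treating measure, where $A_{1:t} = \bm{0}_{1:t}$ almost surely, gives
\[
\Delta(\pi, \bm{0}) = -\EE[\bm{0}]{\sum_{t=1}^T Q_{\pi,t}(S_{1:t}, \bm{0}_{1:t}) - \mu_{\pi,t}(S_{1:t}, \bm{0}_{1:(t-1)})}.
\]

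First I would determine which summands survive. Under Definition~\ref{def:regular_policy_term} the policy $\pi$ prescribes action $0$ not only when $t < \tau_\pi$ (as in the original argument) but also, now, whenever $S_t = \Phi$. In either case $\pi$ agrees with $\bm{0}$ at time $t$, so the action it would take matches the conditioning in the $Q$-function and the $t$-th term of the telescoping sum vanishes. Consequently only terms with $t \geq \tau_\pi$ and $S_t \neq \Phi$ contribute; this is precisely the origin of the new indicator $\mathbbm{1}_{S_t \neq \Phi}$ in \eqref{eq:murphy-decomp-term}, which replaces the lone $\mathbbm{1}_{t \geq \tau_\pi}$ of the original proof by $\mathbbm{1}_{t \geq \tau_\pi}\mathbbm{1}_{S_t \neq \Phi}$.

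Next I would evaluate the two conditional expectations on these surviving terms. The $\mu_{\pi,t}$ term is unchanged from Lemma~\ref{lemm:murphy-decomp}: on the event $\{t \geq \tau_\pi,\, S_t \neq \Phi\}$ the policy starts treatment $W_\pi$ immediately, so the argument of steps (a)--(c) of the original proof (invoking Definition~\ref{def:regular_policy_term}, then Assumption~\ref{assu:ignorability}, then Assumption~\ref{assu:consistency}) yields $\mu_{\pi,t}(S_{1:t}, \bm{0}_{1:(t-1)}) = \mu_{now, W_\pi}(S_{1:t}, t)$. The genuinely new piece is the $Q$-function term, where I expect the main work to lie. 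Here I would condition on $S_{t+1}$ and split on whether the next state is terminal. On $\{S_{t+1} \neq \Phi\}$ the policy prescribes $W_\pi$ at time $t+1$, and the development in \eqref{eq:murphy_devel} (again using ignorability and consistency) converts the integrand into $\mu_{now, W_\pi}(S_{1:t+1}, t+1)$. On $\{S_{t+1} = \Phi\}$ the terminal-state structure of Definition~\ref{defi:terminal} pins the outcome down deterministically as $Y = H_t(S_{1:t})$, so this branch contributes exactly $H_t(S_{1:t})$. Weighting the two branches by $\PP{S_{t+1} \neq \Phi \cond S_{1:t},\, A_{1:t} = \bm{0}}$ and $\PP{S_{t+1} = \Phi \cond S_{1:t},\, A_{1:t} = \bm{0}}$ respectively reproduces exactly the definition of $\mu^\Phi_{next, W_\pi}(S_{1:t}, t)$ in \eqref{eq:mu-next-term}.

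The main obstacle is making the terminal branch of this $Q$-function computation rigorous: I need to verify that, conditional on $A_{1:t} = \bm{0}$ and $S_{t+1} = \Phi$, the value of the remaining trajectory under $\pi$ equals $H_t(S_{1:t})$. This uses that $\Phi$ is absorbing, so once entered no treatment is ever applied and the trajectory is frozen, together with the assumption of Definition~\ref{defi:terminal} that $Y$ is a known deterministic function of the pre-terminal history. Once both conditional expectations are substituted back into the telescoped sum, the overall sign flips $\mu^\Phi_{next, W_\pi} - \mu_{now, W_\pi}$ into $\mu_{now, W_\pi} - \mu^\Phi_{next, W_\pi}$, and since the indicator $\mathbbm{1}_{t \geq \tau_\pi}$ restricts the range to $t = \tau_\pi, \ldots, T$, this yields \eqref{eq:murphy-decomp-term}.
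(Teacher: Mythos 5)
Your proposal is correct and follows essentially the same route as the paper's proof: the same telescoping identity from Lemma~\ref{lemm:murphy-decomp} restricted to the surviving terms $\mathbbm{1}_{t \geq \tau_\pi, \, S_t \neq \Phi}$, the unchanged identification of $\mu_{\pi,t}$ with $\mu_{now, W_\pi}$, and the $Q$-function split on $\{S_{t+1} = \Phi\}$ versus $\{S_{t+1} \neq \Phi\}$ yielding $\mu^\Phi_{next, W_\pi}$. The only cosmetic difference is that the paper packages the $Q$-function step through an auxiliary treat-now policy $\pi^{now}_k$ satisfying Definition~\ref{def:regular_policy_term}, whereas you perform the terminal/non-terminal conditioning directly---the underlying computation is identical.
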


Next, the following result is a direct consequence of Lemma \ref{lemm:murphy-decomp-term}; its proof is a direct
analogue to that of Lemma \ref{lemm:is-decomp} and thus omitted.

\begin{lemm} \label{lemm:is-decomp-term} In the setting of Lemma \ref{lemm:murphy-decomp-term} and
	under Assumptions \ref{assu:consistency}, \ref{assu:ignorability} and \ref{assu:overlap_term},
	\begin{align}\label{eq:is-adv-decomp-term}
	\Delta(\pi, \bm{0}) = \EE {\sum_{t=\tau_\pi}^T \mathbbm{1}_{S_t \neq \Phi}\frac{\mathbbm{1}_{A_{1:t-1}=0}}{\prod_{t'=1}^{t-1}e_{t',0}(S_{1:{t'}})} \p{\mu_{now, W_\pi}(S_{1:t}, t) - \mu^{\Phi}_{next, W_\pi}(S_{1:t}, t)}}.
	\end{align}
\end{lemm}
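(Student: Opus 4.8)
The plan is to obtain \eqref{eq:is-adv-decomp-term} from Lemma \ref{lemm:murphy-decomp-term} by a single importance-sampling change of measure, following the proof of Lemma \ref{lemm:is-decomp} essentially verbatim and treating the terminal-state modifications as inert passengers. The starting point is the never-treating representation
\[
\Delta(\pi, \bm{0}) = \EE[\bm{0}]{\sum_{t=1}^T \mathbbm{1}_{t \geq \tau_\pi}\, \mathbbm{1}_{S_t \neq \Phi} \p{\mu_{now, W_\pi}(S_{1:t}, t) - \mu^\Phi_{next, W_\pi}(S_{1:t}, t)}},
\]
and the goal is to re-express each never-treating expectation $\EE[\bm{0}]{\cdot}$ as a sampling expectation $\EE{\cdot}$ carrying the inverse cumulative propensity of never having treated.

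First I would record the elementary change-of-measure identity at a fixed time $t$. Reading off the product forms \eqref{eq:product_form} and \eqref{eq:product_form_pi}, both the sampling law restricted to $A_{1:t-1}=0$ and the never-treating law share the same state-transition kernels $f_{t'}(\cdot \cond S_{1:t'-1}, \bm{0})$, and differ only in that the sampling law assigns the event $A_{t'} = 0$ probability $e_{t',0}(S_{1:t'})$ at each step while the never-treating law assigns it probability one. Hence the density of $(S_{1:t}, A_{1:t-1}=0)$ under sampling equals the never-treating density of $S_{1:t}$ times $\prod_{t'=1}^{t-1} e_{t',0}(S_{1:t'})$, and cancelling this factor yields, for any integrable $g_t$ depending only on $S_{1:t}$,
\[
\EE[\bm{0}]{g_t(S_{1:t})} = \EE{\frac{\mathbbm{1}_{A_{1:t-1}=0}}{\prod_{t'=1}^{t-1} e_{t',0}(S_{1:t'})}\, g_t(S_{1:t})}.
\]

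Next I would check that each summand is such a $g_t$. Since $\tau_\pi$ is an $\ff_t$-measurable stopping time, $\{t \geq \tau_\pi\} \in \ff_t$; and on the event $A_{1:t-1}=0$ we have $\ff_t = \sigma(S_{1:t})$, so both $\mathbbm{1}_{t \geq \tau_\pi}$ and the $\ff_{\tau_\pi}$-measurable choice $W_\pi$ collapse to fixed functions of $S_{1:t}$. The remaining factors $\mathbbm{1}_{S_t \neq \Phi}$, $\mu_{now, W_\pi}(S_{1:t}, t)$ and $\mu^\Phi_{next, W_\pi}(S_{1:t}, t)$ are functions of $(S_{1:t}, t)$ by construction. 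Applying the displayed identity term by term across $t = 1, \ldots, T$ and summing by linearity then delivers \eqref{eq:is-adv-decomp-term}, with the indicator $\mathbbm{1}_{A_{1:t-1}=0}$ inside the weight supplying the restriction to never-treated histories.

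The only delicate point, which I expect to be the main obstacle, is the measurability bookkeeping in the third step: one must verify that conditioning on $A_{1:t-1}=0$ genuinely collapses $\ff_t$ to $\sigma(S_{1:t})$, and that $W_\pi$, although selected at the earlier random time $\tau_\pi \leq t$, is thereby a single deterministic function of $S_{1:t}$ common to both laws, so that the scalar weight may legitimately be moved inside the expectation. Well-definedness of the weight is then free under Assumption \ref{assu:overlap_term}: on the support of the summand we have $S_t \neq \Phi$, hence $S_{t'} \neq \Phi$ for all $t' \leq t$ by the absorbing property of $\Phi$, so every factor obeys $e_{t',0}(S_{1:t'}) > 1 - \eta_0/T > 0$. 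Everything else is mechanical, and the terminal-state terms $\mathbbm{1}_{S_t \neq \Phi}$ and $\mu^\Phi_{next, W_\pi}$ ride along passively inside $g_t$ without touching the change of measure.
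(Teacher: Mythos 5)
Your proposal is correct and matches the paper's intended argument: the paper omits this proof precisely because it is ``a direct analogue'' of the proof of Lemma \ref{lemm:is-decomp}, which is exactly the change-of-measure computation you carry out---cancelling the factors $\prod_{t'=1}^{t-1} e_{t',0}(S_{1:t'})$ in the sampling density against the inverse weights on the event $A_{1:t-1}=0$, with the terminal-state terms $\mathbbm{1}_{S_t \neq \Phi}$ and $\mu^\Phi_{next,W_\pi}$ carried along as functions of $S_{1:t}$. Your additional bookkeeping (that $\mathbbm{1}_{t \geq \tau_\pi}$ and $W_\pi$ reduce to functions of $S_{1:t}$ on never-treated histories, and that the weights are bounded away from zero under Assumption \ref{assu:overlap_term}) is sound and only makes explicit what the paper leaves implicit.
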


We detail a candidate estimator based on this result as Algorithm \ref{alg:adr-terminal}.
For notational convenience, we denote terminating probabilities by $\rho(S_{1:t}) = \PP{S_{t + 1} = \Phi \cond S_{1:t}, \, A_{1:t} = \bm{0}}$,
and write $U(S_{1:t}. \Phi) = 	\EE{\frac{\mathbbm{1}_{A_{t+1}=k}}{e_{t+1,k}(S_{1:t+1})}Y  \cond S_{1:t}, \, A_{1:t} = \bm{0}, \, S_{t+1} \neq \Phi}$. 
We note that the proposed ADR estimator extended to terminal states is robust towards errors in estimating the regression outcome functions $\mu_{now,k}$ and $U(\cdot, \Phi))$ but we do not correct for the estimation bias in estimating the terminating probabilities $\rho(\cdot)$. We leave it to future work to develop robust methods that are also robust against terminating probability estimates.

Finally, in analogy to \eqref{eq:rewritenext}, it is convenient to re-express $\mu_{next,k}(S_{1:t}.t)^{\Phi}$ via inverse-propensity
weighting for purpose of estimating it,
\begin{equation}
\begin{split}
& \mu^\Phi_{next, k}(S_{1:t}, t) \\
&\ \ \ \
= \PP{S_{t + 1} \neq \Phi \cond S_{1:t}, \, A_{1:t} = \bm{0}} 
\EE{\frac{\mathbbm{1}_{A_{t+1}=k}}{e_{t+1,k}(S_{1:t+1})}Y  \cond S_{1:t}, \, A_{1:t} = \bm{0}, \, S_{t+1} \neq\Phi} \\
&\ \ \ \ \ \ \ \
+ \PP{S_{t + 1} =\Phi \cond S_{1:t}, \, A_{1:t} = \bm{0}} 
H_t(S_{1:t}).
\end{split}
\end{equation}
Furthermore, a weighted regression expression analogous to \eqref{eq:weighted_reg} also holds.

\RestyleAlgo{boxruled}
\LinesNumbered
\begin{algorithm}[t]
	\caption{\textbf{Advantage Doubly Robust Estimator with Terminal State}\label{alg:adr-terminal}}
	Estimate the outcome models $\mu_{now,k}(\cdot)$,
	$U(\cdot, \Phi)$, terminating propensities $\rho(\cdot)$ as well as treatment propensities $e_{t,a}(s_{1:t})$ with cross fitting using any supervised learning method tuned for prediction accuracy.
	
	Given these nuisance component estimates, we construct value estimates 
	\begin{equation}
	\label{eq:feasible-adr-terminal}
	\hat{\Delta}(\pi, \bm{0})
	= \frac{1}{n} \sum_{i=1}^{n} \sum_{t=1}^T \mathbbm{1}_{S_t\neq \Phi} \mathbbm{1}_{t \geq \tau_\pi^{(i)}} \frac{\mathbbm{1}_{A_{1:t-1}^{(i)} = 0}}{\prod_{t'=1}^{t-1}\hat{e}^{-q(i)}_{t',0}(S_{1:{t'}}^{(i)})}\hat{\Psi}^{\Phi}_{t,W_\pi}(S_{1:t}^{(i)})
	\end{equation}
	for each policy $\pi \in \Pi$,  where the relevant doubly robust score is
	\begin{equation}
	\begin{split}
	\hat{\Psi}^{\Phi}_{t,k}(S_{1:t}^{(i)}) &= \hat{\mu}^{-q(i)}_{now,k}(S_{1:t}^{(i)},t) - {\hat{\mu}^\Phi_{next,k}}(S_{1:t}^{(i)},t)^{-q(i)} \\
	&\ \ \ \ \ + \mathbbm{1}_{A_t^{(i)}=k} \frac{Y^{(i)} - \hat{\mu}^{-q(i)}_{now, k}(S_{1:t}^{(i)},t)}{\hat{e}^{-q(i)}_{t,k}(S_{1:t}^{(i)})} \\
	&\ \ \ \ \ - \mathbbm{1}_{A_t^{(i)}=0} \mathbbm{1}_{A_{t+1}^{(i)}=k} \frac{Y^{(i)} - \hat{U}^{-q(i)}(S_{1:t}^{(i)},\Phi)}{\hat{e}^{-q(i)}_{t,0}(S_{1:t}^{(i)})\hat{e}^{-q(i)}_{t+1,k}(S_{1:t+1}^{(i)})},
	\end{split}
	\end{equation}
	and ${\hat{\mu}^\Phi_{next,k}}(S_{1:t}^{(i)},t)^{-q(i)} = (1-\hat{\rho}^{-q(i)}(S_{1:t}^{(i)})) \hat{U}^{-q(i)}(S_{1:t}^{(i)}, \Phi) + \hat{\rho}^{-q(i)}(S_{1:t}^{(i)}) H_t(S_{1:t}^{(i)})$
	
	Learn the optimal policy by setting $\hat{\pi} = \argmax_{\pi \in \Pi} \hat{\Delta}(\pi, \bm{0})$.
\end{algorithm}

\section{Experiments}
\label{sec:experiments}

In order to assess the practical performance of our proposed method, we consider two different simulation studies. In the first simulation, we consider the optimal stopping case in which the treatment decision is binary and the treatment assignment propensities are not known a-priori. In the second simulation, we want to learn when to start which treatment. There are multiple treatment options and patients can be censored due to death, and the data are generated from a randomized control trial with known treatment assignment propensities. This second study helps to capture settings motivated by clinical trials. 

In both settings we consider linear thresholding policy rules for simplicity and due to their interpretability. In our implementation, we use the normalized variant of the IPW estimator $\hat{V}_\pi^{\textrm{WIPW}}$ as presented in Section \ref{sec:existingwork}. For the first setup that doesn't involve survival censoring, we use a correspondingly normalized ADR estimator $\hat{\Delta}^{\textrm{W}}$ in Step 2 of Algorithm \ref{alg:adr}:
\begin{equation*}
\begin{split}
\hat{\Delta}^{\textrm{W}}(\pi, \bm{0})
&=  \sum_{t=1}^T  \frac{\sum_{i=1}^{n}  \frac{\mathbbm{1}_{A_{1:t-1}^{(i)} = 0}}{\prod_{t'=1}^{t-1}\hat{e}^{-q(i)}_{t',0}(S_{1:{t'}}^{(i)})}\mathbbm{1}_{t \geq \tau_\pi^{(i)}} \p{\hat{\mu}^{-q(i)}_{now,W_\pi}(S_{1:t}^{(i)},t) - \hat{\mu}^{-q(i)}_{next,W_\pi}(S_{1:t}^{(i)}, t)}}{\sum_{i=1}^{n}  \frac{\mathbbm{1}_{A_{1:t-1}^{(i)} = 0}}{\prod_{t'=1}^{t-1}\hat{e}^{-q(i)}_{t',0}(S_{1:{t'}}^{(i)})}}\\
&+  \sum_{t=1}^T  \frac{\sum_{i=1}^{n}  \frac{\mathbbm{1}_{A_{1:t-1}^{(i)} = 0}\mathbbm{1}_{A_t^{(i)}=W_\pi}}{\prod_{t'=1}^{t-1}\hat{e}^{-q(i)}_{t',0}(S_{1:{t'}}^{(i)})\hat{e}^{-q(i)}_{t,W_\pi}(S_{1:t}^{(i)})}\mathbbm{1}_{t \geq \tau_\pi^{(i)}} \p{Y^{(i)} - \hat{\mu}^{-q(i)}_{now, W_\pi}(S_{1:t}^{(i)},t)}}{\sum_{i=1}^{n}  \frac{\mathbbm{1}_{A_{1:t-1}^{(i)} = 0}\mathbbm{1}_{A_t^{(i)}=W_\pi}}{\prod_{t'=1}^{t-1}\hat{e}^{-q(i)}_{t',0}(S_{1:{t'}})\hat{e}^{-q(i)}_{t,W_\pi}(S_{1:t}^{(i)})}\mathbbm{1}_{t \geq \tau_\pi^{(i)}} + \sum_{i=1}^{n}  \frac{\mathbbm{1}_{A_{1:t-1}^{(i)} = 0}}{\prod_{t'=1}^{t-1}\hat{e}^{-q(i)}_{t',0}(S_{1:{t'}})}(1-\mathbbm{1}_{t \geq \tau_\pi^{(i)}})}\\
&-  \sum_{t=1}^T  \frac{\sum_{i=1}^{n}  \frac{\mathbbm{1}_{A_{1:t}^{(i)} = 0}\mathbbm{1}_{A_{t+1}^{(i)}=W_\pi}}{\prod_{t'=1}^{t}\hat{e}^{-q(i)}_{t',0}(S_{1:{t'}})\hat{e}^{-q(i)}_{t+1,W_\pi}(S_{1:t+1}^{(i)})}\mathbbm{1}_{t \geq \tau_\pi^{(i)}} \p{Y^{(i)} - \hat{\mu}^{-q(i)}_{next, W_\pi}(S_{1:t}^{(i)},t)}}{  \frac{\mathbbm{1}_{A_{1:t}^{(i)} = 0}\mathbbm{1}_{A_{t+1}^{(i)}=W_\pi}}{\prod_{t'=1}^{t}\hat{e}^{-q(i)}_{t',0}(S_{1:{t'}})\hat{e}^{-q(i)}_{t+1,W_\pi}(S_{1:t+1}^{(i)})}\mathbbm{1}_{t \geq \tau_\pi^{(i)}} +  \frac{\mathbbm{1}_{A_{1:t-1}^{(i)} = 0}}{\prod_{t'=1}^{t-1}\hat{e}^{-q(i)}_{t',0}(S_{1:{t'}})}(1-\mathbbm{1}_{t \geq \tau_\pi^{(i)}}) 
}.
\end{split}
\end{equation*}
For simplicity, we will refer to them as the IPW (baseline) and ADR (our estimator) respectively in this section. We have a similar weighted form for the ADR estimator with terminal states that we use in experiments in the second simulation study. We include that in the Appendix. 

In addition to IPW, we also consider fitted-Q iteration as a baseline method for policy learning.
The variant of fitted-Q iteration we implement follows the Batch Q-learning algorithm as described in
\citep{murphy2005generalization} for solving the optimal $Q$ function at each timestep:
At each $t=T, T-1, \cdots, 1$, we solve\footnote{For this purpose, we estimate propensities and conditional
	response surfaces using regression forests as implemented in \texttt{grf} \citep{athey2019generalized}. For tractability,
	we do not consider history when learning these regression; rather, we only use current state as covariates in each time step.}
\begin{align}
\label{eq:FQ}
\hat{Q}^*_t(\cdot, \cdot) = \argmin_{Q_t} \frac{1}{n} \sum_{i=1}^n \p{\max_{a_{t+1}} \hat{Q}^*_{t+1}(S_{1:(t+1)}^{(i)}, \{A_{1:t}^{(i)}, a_{t+1}\}) - Q_t(S_{1:t}^{(i)}, A_{1:t}^{(i)})}^2,
\end{align}
where we let $\hat{Q}^*_{T+1} = Y^{(i)}$.
We note that fitted-Q iteration is an iterative backwards-regression based algorithm targeted at learning the optimal unrestricted
policy, whereas our goal is to learn the best in-class policy given a user-defined policy class. However, while fitted-Q aims to perform
a different task then us, it is still of interest to compare the regret achieved by both methods. We use the shorthand \textit{Q-Opt} to refer to this method. Another variant of the fitted-Q method evaluates a given policy $\pi$ instead of learning the best policy \citep[see, e.g.,][]{le2019batch}. Instead of taking the \textit{max} operator above, $a_{t+1}$ is chosen according to the policy $\pi$. We call the latter fitted-Q for evaluation and use \textit{Q-Eval} as a shorthand accordingly. All experiments can be replicated here: \url{https://github.com/xnie/adr}.

\subsection{Binay Treatment Choices in an Observational Study}
\label{sec:binary}

Our first simulation is motivated by a setting where we
track a health metric and get a reward if the health metric is above a threshold at $T = 10$. The treatment provides
a positive nudge to the health metric at a cost. We start with treatment on, and need to choose when to
stop to minimize cost while trying to keep the health metric stay above the threshold. The data generating process is as follows:
\begin{align*}
X_1 \sim \mathcal{N}(0,\sigma^2), \,\,\,\,\,\,\, &X_{t+1}\cond X_t, A_t \sim \mathbbm{1}_{X_t\geq -0.5}\mathcal{N}\p{X_t+ \frac{1}{1+e^{0.3 X_t}} A_t, \frac{\sigma^2}{2T}} + \mathbbm{1}_{X_t < -0.5}X_t\\
&S_t\cond X_t \sim \mathcal{N}(X_t, \nu^2), \,\,\,\, Y = \beta\mathbbm{1}_{S_{T+1} > 0} - \frac{1}{T}\sum_{t=1}^T A_t,
\end{align*}
with the stopping action $A_t \cond X_t \sim Bernoulli(1-1/(1+e^{-(X_t-1.5)} - e^{-(t-3)}))$. We note that $Y$ is the final outcome we'd like to maximize. We also do not assume Markovian structure and only get to observe $S_t$, which is a noisy version of the underlying state $X_t$.

In our implementation, both the propensity and outcome regressions only use the current state and action information as opposed to the full history even though the underlying dynamic is not Markovian.\footnote{In other words, neither out propensity models nor conditional response models are well specified because we do not use covariates that capture lagged states. Thus, this setting can be seen as a test case for the value of the robust scoring method in ADR.} We parameterize the policy class of interest by $[\theta_1, \theta_2, \theta_3]$ and define each policy to be a linear thresholding rule $\theta_1 S_t \geq \theta_2 t +  \theta_3$ such that whenever this holds, we stop the treatment. We then perform a grid search over a range of values for the policy parameters, with the grid specified in Appendix \ref{append:plots}. 

\begin{figure}[!htbp]
	\centering
	\begin{tabular}{cc}
		\includegraphics[width=0.42\columnwidth]{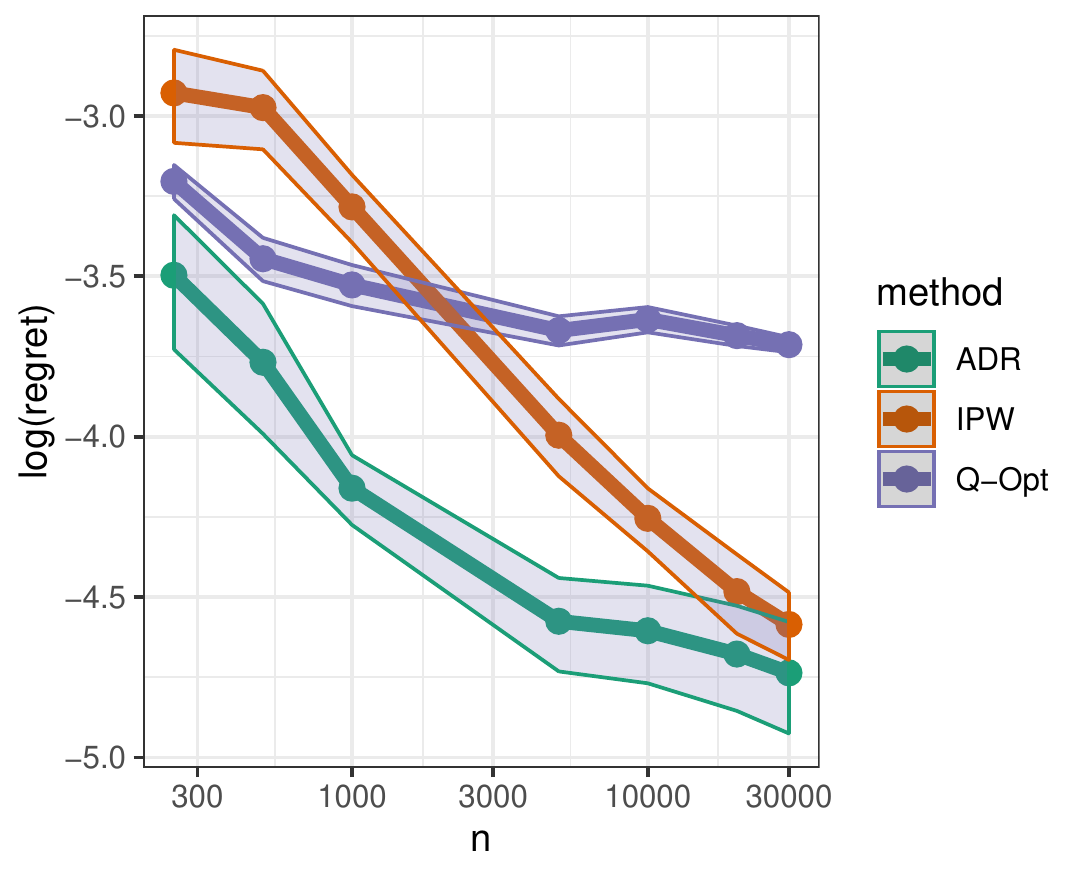} &
		\includegraphics[width=0.42\columnwidth]{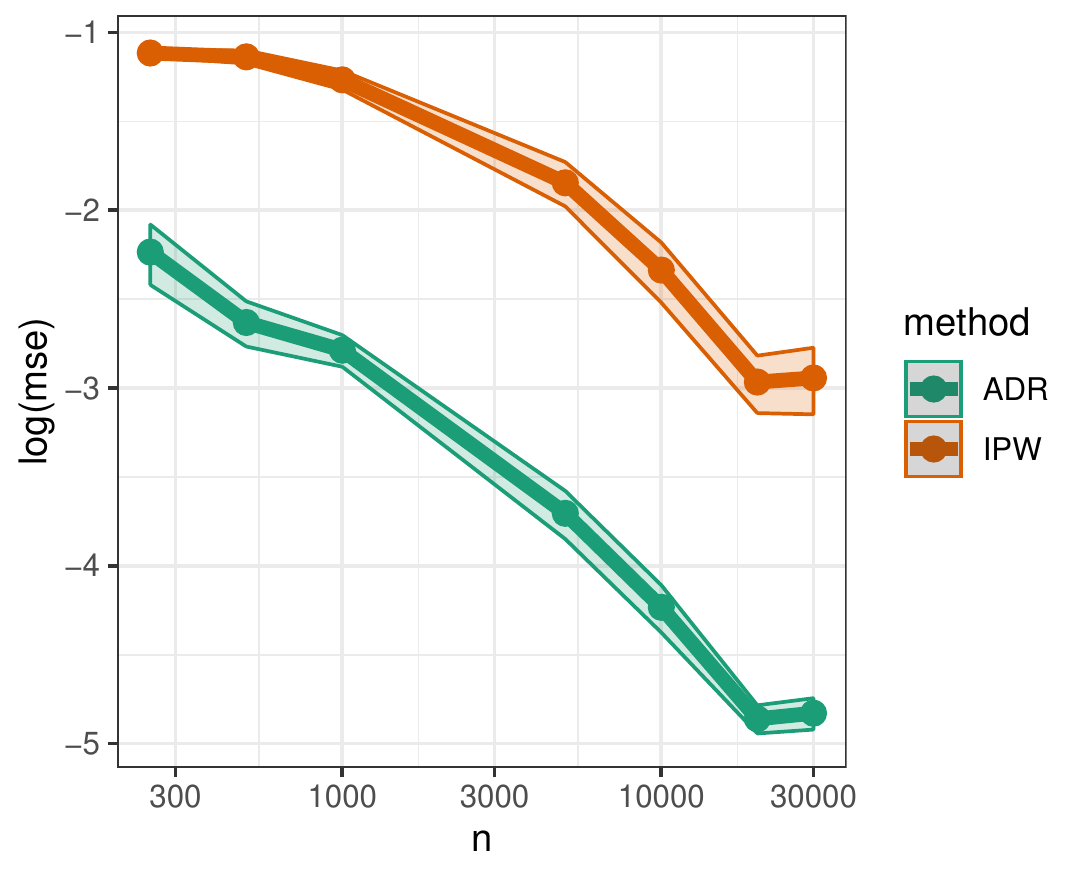} 
	\end{tabular}
	\vspace{-0.7\baselineskip}
	\caption{
		We compare the performance of ADR in comparison to IPW and \textit{Q-Opt} using $\sigma=1$, $\beta=1$ and $\nu=0.5$ in the binary treatment setup. We plot the regret (left figure) relative to the best in-class policy and the average mean-squared error (right figure) of the value estimates for policies in the same policy class across all policies (both in log-scale). The shaded regions are standard error bars. In the mean-squared error (MSE) plot, the MSE for each policy is computed against an oracle evaluation using Monte-Carlo rollouts using the underlying transition dynamics averaged across 20000 runs. Both the regret and MSE results are averaged acros 50 runs. The $x$-axis shows the number of offline trajectories we generate in the observational data. 
		\vspace{-\baselineskip}}
	\label{fig:binary}
\end{figure}

For each of the parameter combinations, we run ADR and baseline IPW to estimate the value of the corresponding policy. The average mean-squared error (MSE) of each of the policy values across all policies in the policy class is then computed against an oracle evaluation by using a Monte-Carlo rollout of the policy using the underlying transition dynamics averaged across 20000 times. We vary $\beta$, $\sigma$, and the observation noise $\nu$ and compute the regret and the mean-squared error of policy value estimates (averaged across all policies in the policy class).

ADR shows a clear advantage in both regret and learning the correct value of policies across varying values of $\sigma,\beta$ and $\nu$. We present the tables of raw results for  in Table \ref{table:setup2-beta-0_5}-\ref{table:setup2-beta-1} in Appendix \ref{append:plots}. We present one representative illustration in Figure \ref{fig:binary}, where we have used $\sigma=1, \beta=1$ and $\nu=0.5$. We compare the performance of ADR against IPW and \textit{Q-Opt} with varying numbers of offline trajectories.  IPW and ADR first evaluate the values of the policies in the policy class, and so we plot the MSE of their policy estimates averaged across all policies in the policy class in the right plot; it is not applicable for \textit{Q-Opt} which seeks to learn the optimal policy directly.

\subsection{Multiple Treatment Choices}
\label{sec:multiple}
In the second setup, we consider multiple treatment choices.
Our design here is motivated by a healthcare setting where, once a doctor starts treatment,
they can choose between a more effective but more invasive treatment with strong side effects,
or a less effective but less invasive treatment.
More specifically, imagine a cancer patient's state at time $t$ is modeled by $X_t$, $Y_t$ and $Z$, where $X_t$ is the general health state, $Y_t$ is the state of a tumor, and $Z$ is not time-dependent but models the category of the patients for which lifespan differs. In particular, if $Z=0$, a patient always dies immediately; if $Z = 1$, a patient always survives until the end of a trial; if $Z=2$, the patient's lifespan has a strong dependency on $Y_t$, which we detail below. There are two treatment choices, one non-invasive ($A_t=1$) and one invasive ($A_t=2$). The non-invasive option lessens the severity of the  tumor, and the invasive option completely removes the tumor, but exacerbates a patient's general health conditions. The final outcome is denoted by $R$, which is the lifetime of a patient, and we seek a policy $\pi$ that maximizes $\EE[\pi]{R}$. We consider horizon $T=10$. The data generating process is as follows: 
\begin{align}
&X_1 \sim Exp(1) \ \ \ \ 
Y_1 \sim 0.5 Exp(3)\ \ \ \ 
Z \sim Multinomial(0.3, \, 0.3, \, 0.4) \ \ \ \
L_1 = 1\nonumber \\
&Z=1: L_{t+1} = 0, \ \ \ \ \ Z=2: L_{t+1} = 1 \nonumber\\
&Z=3: L_{t+1} = 0 \ \textrm{if } L_t = 0\textrm{;}\\  
&\textrm{otherwise, }L_{t+1} \sim Bernoulli(\mathbbm{1}_{Y_t \leq 5}\exp(-0.02Y_t) + \mathbbm{1}_{5 < Y_t \leq 14}\exp(-0.06Y_t)) \nonumber\\
&A_t=0: \, X_{t+1} = \abs{X_t + \sigma_t} \ \ \ \ 
Y_{t+1} = \abs{Y_t + 0.5X_t + \sigma_t}\ \ \ \ \nonumber\\
&A_t=1: \, X_{t+1} = \abs{X_t + \sigma_t}\ \ \ \ 
Y_{t+1} = \abs{0.5 Y_t + \sigma_t}\ \ \ \ \nonumber\\
&A_t=2: \, X_{t+1} = X_t + \abs{\max(X_t^2, 1.5X_t) + \sigma_t - X_t}  \ \ \ \ 
Y_{t+1} = 0\ \ \ \ \nonumber\\
&X_t'  = \max\p{0,\min\p{X_{max}, X_t + \nu}}, \ \ \ \ Y_t'  =   \max\p{0,\min\p{Y_{max}, Y_t + \nu}} \nonumber\\
&R = \min \{t:  L_t = 0 \} - 1,\ \ \ \ \ X_{max}=10, \ \ \ \ Y_{max}=16, \ \ \ \ \sigma_t \sim \mathcal{N}(0,0.25), \ \ \ \ \ \nu \sim \mathcal{N}(0, \sigma^2)\nonumber
\end{align}
where $L_t$ is an indicator for whether the patient is alive at time $t$.

In this setting, the treatment assignment mechanism is based on sequential randomization in the data such that there are roughly equal number of trajectories that start treating at each time with either treatment option. Note that the states we observe is $X_t'$ and $Y_t'$, which is the original states added with noise, making our setup non-Markovian. We consider the following linear thresholding class: $\theta_1 X_t' + \theta_2 Y_t' + \theta_3 t \geq \theta_4$ is the region in which we start treatment. If in addition, $\theta_5 X_t' + \theta_6 Y_t' + \theta_7 t \geq \theta_8$, we use the invasive treatment and otherwise, use the non-invasive treatment. We search over the eight parameters in the policy class with a grid search, with details in Appendix \ref{append:plots}.

We compare running the ADR policy optimization procedure (as shown in Section \ref{sec:proposal}) against IPW and \textit{Q-Opt}. Like the binary-action setup, we again estimate the oracle value of all policies in the policy class with Monte-Carlo rollouts averaged across 20000 times. 

In Figure \ref{fig:multiple}, we see that for both the best value learned and the average mean-squared error, ADR outperforms IPW. We also include the complete set of results with varying noise parameter $\sigma$ in Table \ref{table:setup1} in Appendix \ref{append:plots}.
Interestingly, we see that in very large samples \textit{Q-Opt} becomes competitive with ADR. One possible explanation for this is
that ADR is only allowed to use linear thresholding policies whereas \textit{Q-Opt} learns over arbitrary policies---and,
in large samples, the increased expressivity of \textit{Q-Opt} may become helpful.

\begin{figure}[!htbp]
	\centering
	\begin{center}
		\begin{tabular}{cc}
			\includegraphics[height=0.42\textwidth, trim = 4mm 0mm 8mm 0mm]{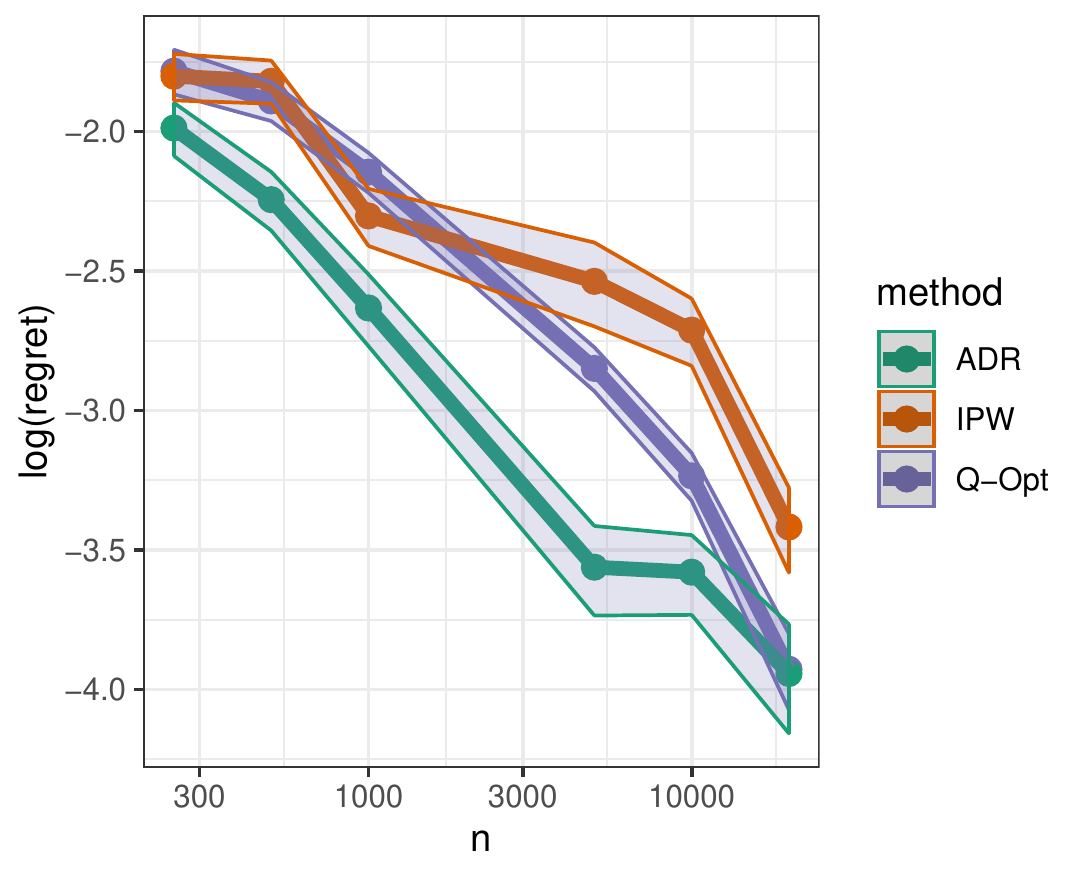}&
			\includegraphics[height=0.42\textwidth]{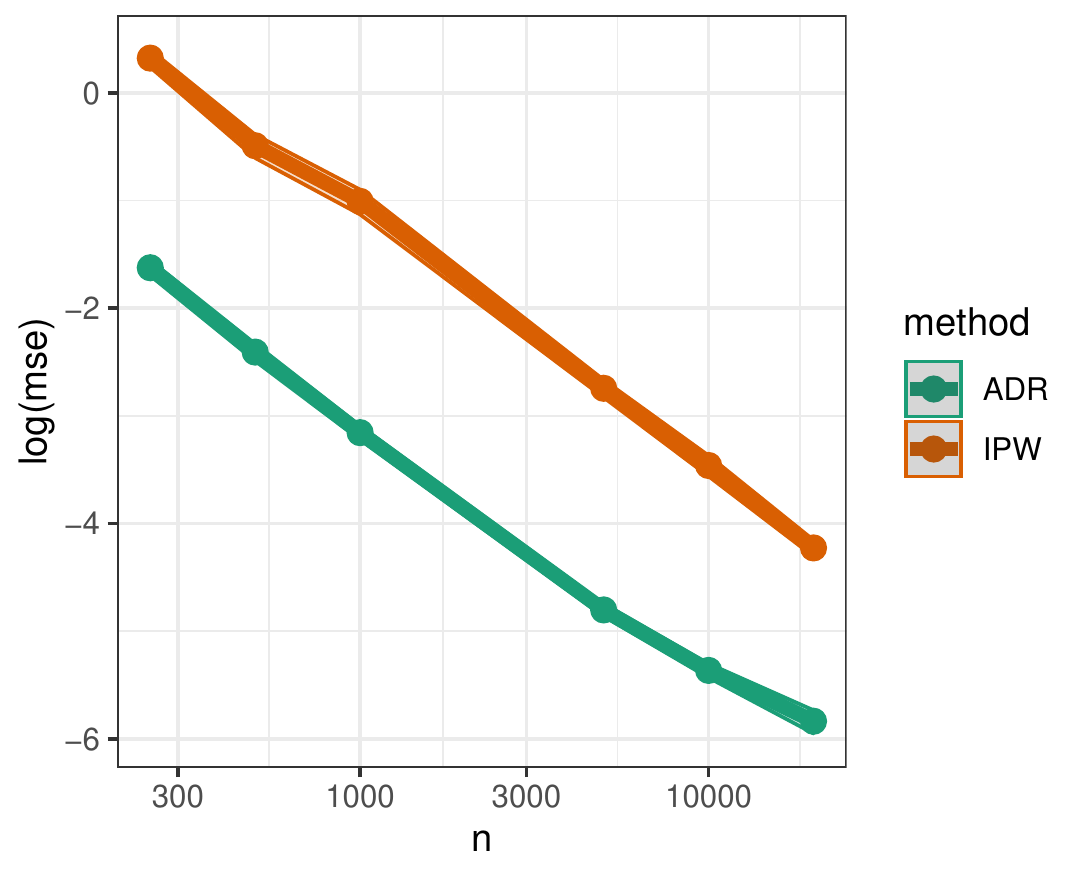}  
		\end{tabular}
		\caption{We compare the performance of ADR in comparison to IPW and \textit{Q-Opt} in the multiple treatment setup. The plot shows results for $\sigma = 1$. We plot the regret (left figure) relative to the best in-class policy and the average mean-squared error (right figure) of the value estimates for policies in the same policy class across all policies (both in log-scale). The shaded regions In the mean-squared error (MSE) plot, the MSE for each policy is computed against an oracle evaluation using Monte-Carlo rollouts under the underlying transition dynamics averaged across 20000 runs. Both the regret and MSE results are averaged across 50 runs. The $x$-axis shows the number of offline trajectories we generate in the observational data.}
		\label{fig:multiple}
	\end{center}
\end{figure}

\subsection{Policy Learning vs Policy Evaluation}

Throughout this paper, we have focused on ADR as a method for policy learning, and have emphasized
that ADR is well suited to policy learning by empirical maximization because it can
evaluate any policy in the policy class $\Pi$ using a single set of universal scores as in \eqref{eq:feasible-adr}. In contrast, standard doubly robust methods
like AIPW \eqref{eq:DR} require different nuisance components to evaluate different policies, thus making them
less readily applicable to learning.
That being said, it may still be of interest to compare ADR with AIPW for the task of evaluating a single policy,
and to see whether the form of ADR---optimized for policy learning---sacrifices accuracy when used for evaluation.

To this end, we revisit the two simulation settings discussed above. However, instead of trying to learn the best policy,
we simply seek to evaluate how much the optimal policy improves over a never-treating policy. For ADR and IPW, we
use the same value estimates as were maximized for policy learning. For AIPW, we use a weighted form of \eqref{eq:DR}
as in \citet{thomas2016data}, with value functions estimated by \textit{Q-Eval}, i.e., by a backwards iteration
procedure analogous to \eqref{eq:FQ} that is tailored to evaluating a specific policy as opposed to finding the best policy.
Finally, we also consider \textit{Q-Eval} on its own, by averaging across the learned $Q$ values in the initial state across the initial state distribution and the action distribution that follows the policy of interest.

Overall, as seen in Figure \ref{fig:opt-eval}, the robust methods---ADR and AIPW---substantially outperform both
IPW and \textit{Q-Eval} here, while AIPW is slightly more accurate that ADR. It thus appears that if the only task of
interest is to evaluate a pre-specified policy then AIPW is a good method to start with. However, if there's also
a need to learn policies by empirical maximization, ADR may present a valuable option.

\begin{figure}[!htbp]
	\centering
	\begin{tabular}{cc}
		\includegraphics[width=0.43\columnwidth]{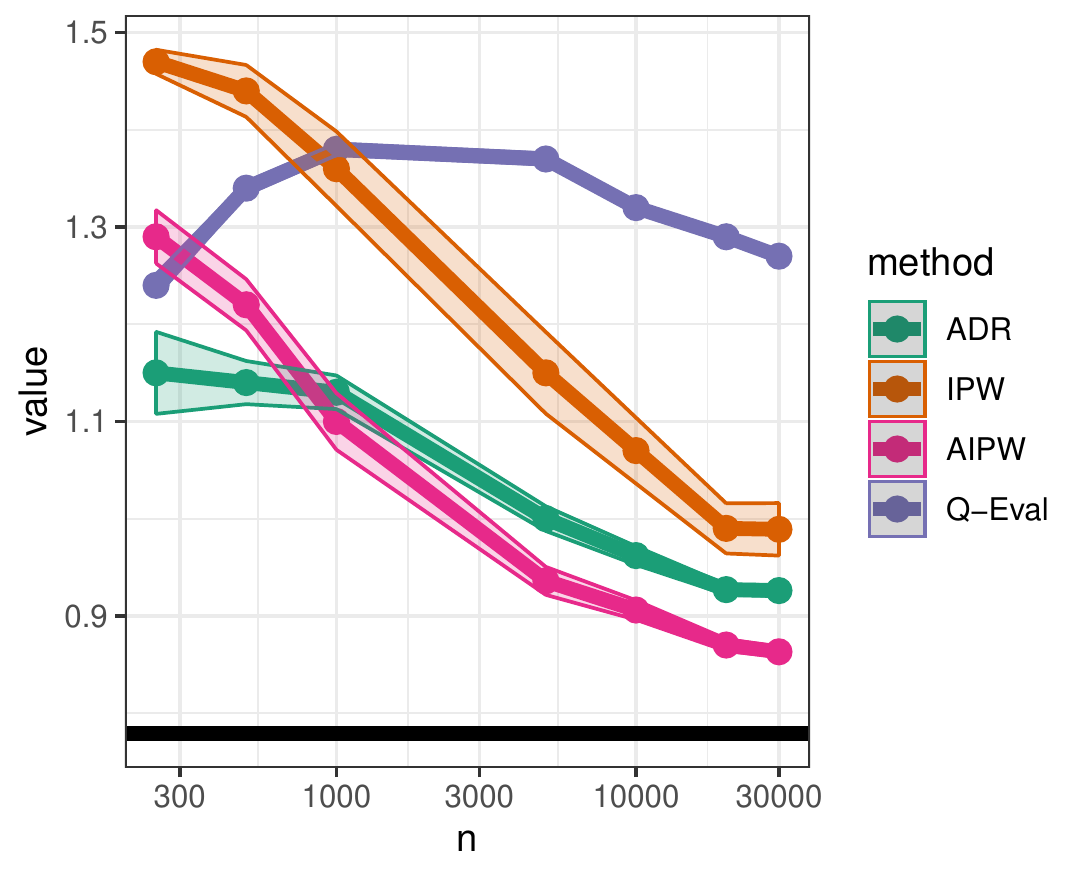} &
		\includegraphics[width=0.43\columnwidth]{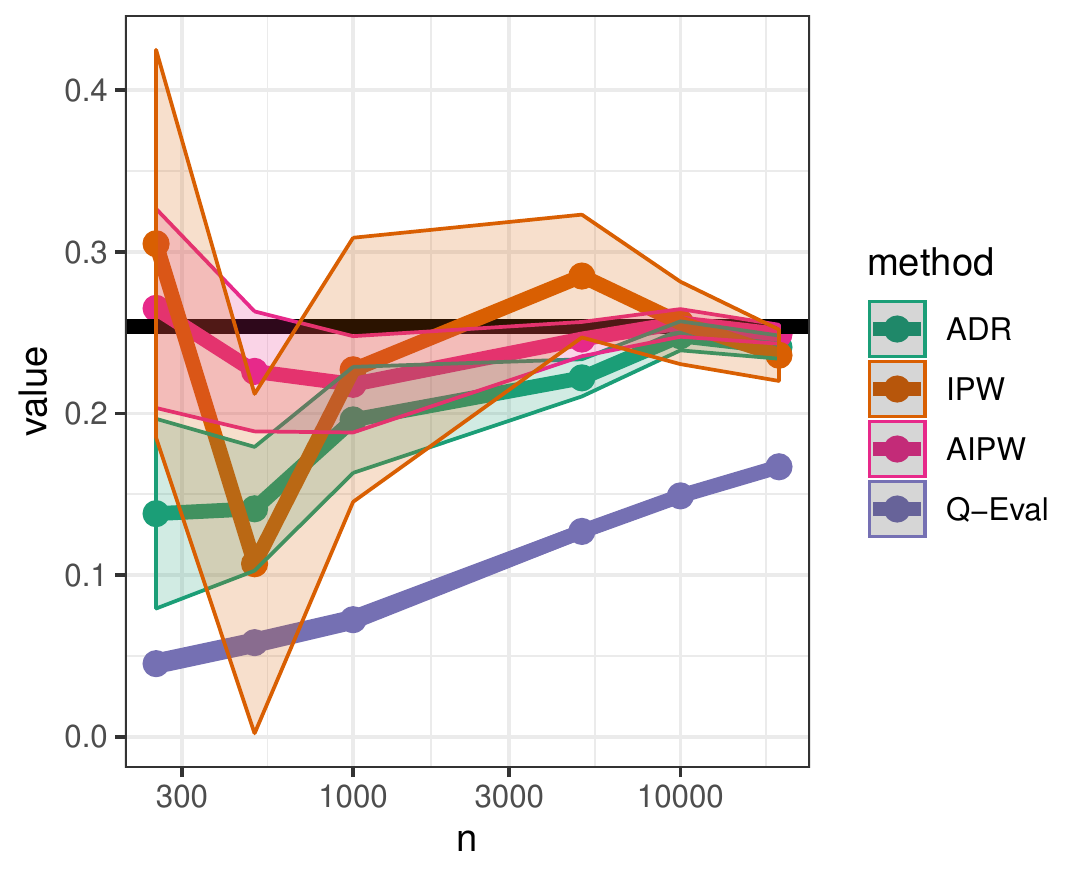}  \\
		\includegraphics[width=0.43\columnwidth]{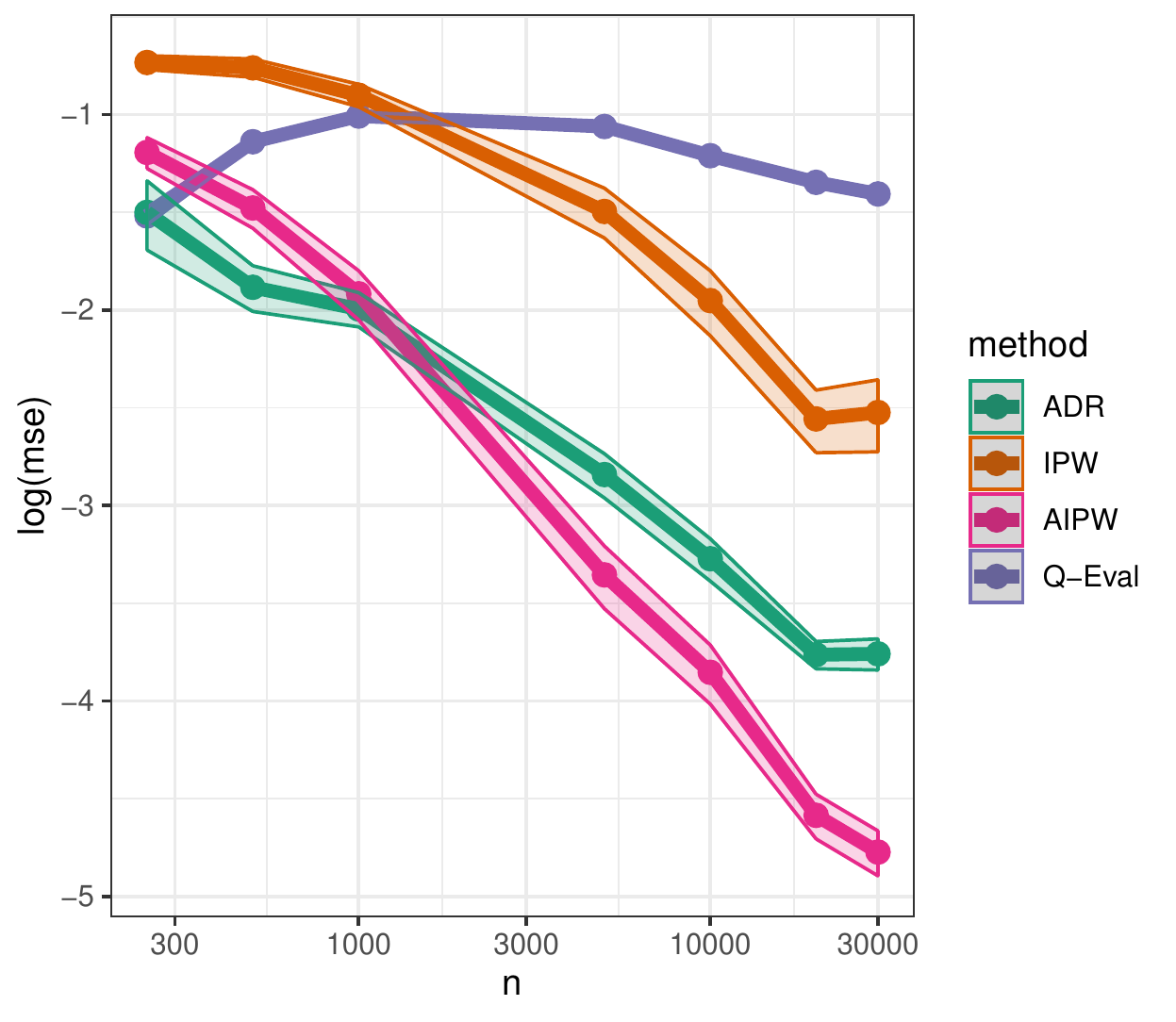} &
		\includegraphics[width=0.43\columnwidth]{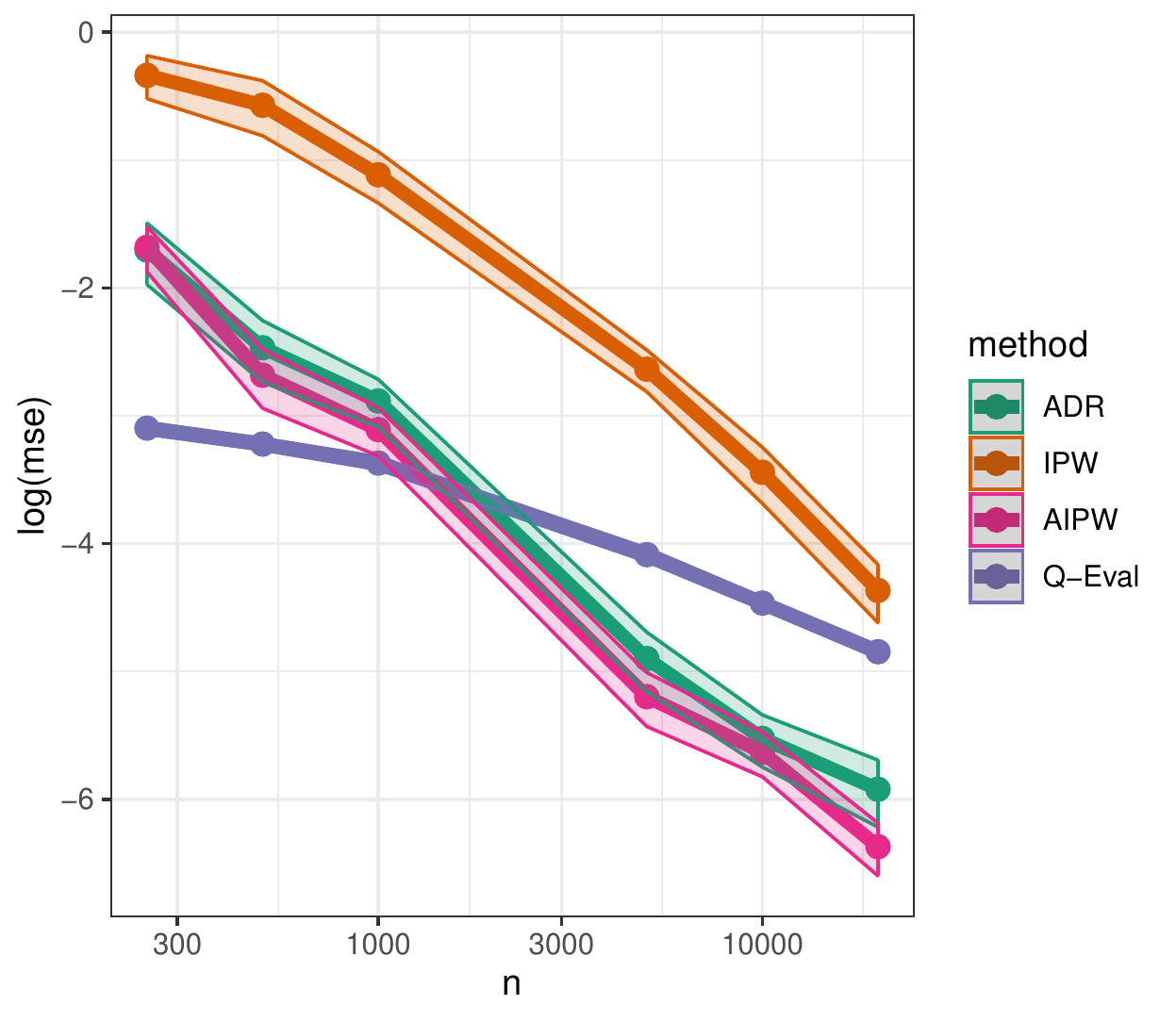}  \\
	\end{tabular}
	\vspace{-0.7\baselineskip}
	\caption{Comparison of ADR, IPW, AIPW, and \textit{Q-Eval} for estimating the value improvement of the best in-class policy
		over the never stop policy.  The left panel is in the setting of Figure \ref{fig:binary} for the binary-treatment setup while the right panel is in the setting of Figure \ref{fig:multiple} for the multiple-treatment setup. The top two figures compare the value estimates of the optimal policy, with the black solid line denotes the true value improvement of the optimal policy via Monte Carlo simulations over 20000 trials. The bottom two figures are the mean squared errors (MSE) of the value estimates on learning the optimal policy. 
		The results here are averaged across 50 independent runs,
		and the shadeded regions denote sampling error. \vspace{-\baselineskip}}
	\label{fig:opt-eval}
\end{figure}

\subsection{Comparison between ADR and Fitted-Q Iteration for Policy Optimization} 

We emphasize that fitted-Q iteration for policy optimization (\textit{Q-Opt}) needs to recursively solve a
series of nonparametric regression problems. \textit{Q-Opt} can be asymptotically biased depending in part on whether the used function approximator can perfectly model the underlying optimal value function. In general, the convergence properties and finite sample performance of \textit{Q-Opt} are not yet well understood and are an active area of research (see e.g.~\cite{szepesvari2005finite,munos2008finite,antos2008learning,chen2019information}). It is thus not unimaginable that, with very little data, \textit{Q-Opt} would regularize towards
a decent model of the world which motivates reasonable decisions; however, once we get more data
and \textit{Q-Opt} increases the complexity of its model fit, the resulting decisions get worse. 

\begin{figure}[!htbp]
	\centering
	\includegraphics[width=\textwidth]{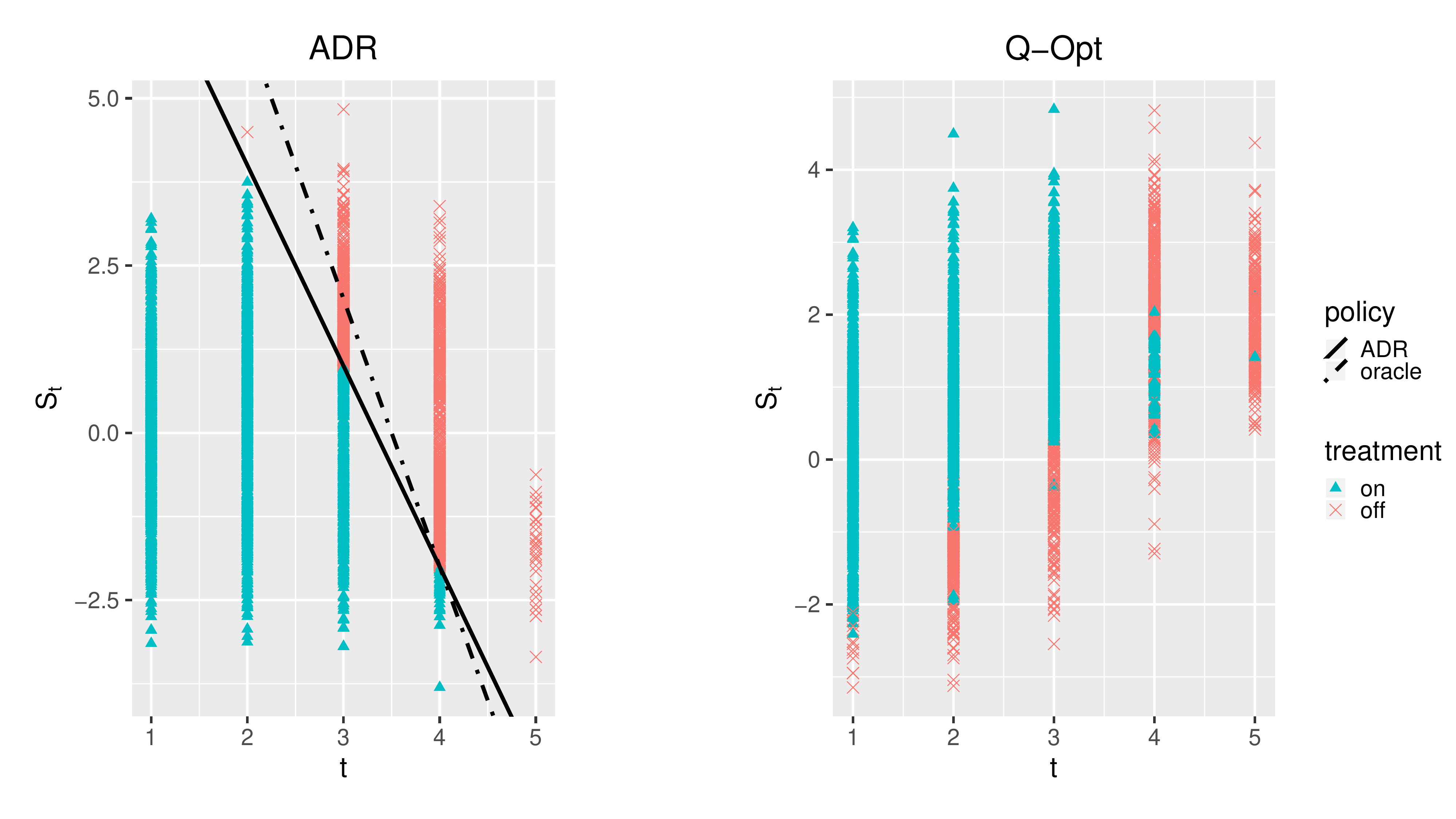}	
	\caption[]
	{{\small A single realization of the best policy learned in the binary action setup case as described in Section \ref{sec:binary}, in the setting of $\nu=0.5, \beta=5, \sigma=1$. ADR  and the oracle choose the best in-class policy from the predefined linear policy class, whereas \textit{Q-Opt} learns the value function via blackbox regression methods and learns a policy that is not so easy to interpret. At each time step, we plot the value of the state $S_t$ from trajectories that have not stopped treating yet.}}    
	\label{fig:compare}
\end{figure}

Finally, as discussed previously, \textit{Q-Opt} seeks to learn the optimal nonparametric policy, whereas ADR aims for the best in-class policy. As argued by, e.g., \citet{athey2017efficient} and \citet{kitagawa2015should},
learning policies that belong to a structured class specified in advance is important in practice, as this
allows stakeholders to enforce constraints such as interpretability, implementability and resource use.
To visualize this point, recall that, in Section \ref{sec:binary}, we used ADR to learn over linear thresholding
policies. In Figure \ref{fig:compare}, we plot the policies learned with different methods for one realization of the simulation in the binary simulation setup in Section \ref{sec:binary}.
The left panel of Figure \ref{fig:compare} shows, at each time step, the value of the state $S_t$ for any trajectory that has not stopped treatment yet according to the policy learned by ADR (shown as a black line). The color coding specifies the policy decision for each trajectory at the given time step.\footnote{There are fewer trajectories plotted as we move along the time axis, because once a trajectory has stopped treatment, it would always stop treatment and there will be no longer decisions made.}
For comparison, we also plot the best policy learned by the oracle using the dotted-dashed line.
The right panel of Figure \ref{fig:compare} is generated the same way, but shows decisions made
by the policy learned using \textit{Q-Opt} instead. 
Unlike ADR, which returns a linear policy, \textit{Q-Opt} learns a policy with a complicated
functional form that is not so easy to interpret.

One might ask whether we could make
\textit{Q-Opt} interpretable by using linear regression in the recursive step \eqref{eq:FQ}. Doing so,
however, would void any nonparametric consistency guaranteed for \textit{Q-Opt}, and in particular
would not recover best-in-class linear policies. The problem is that \textit{Q-Opt} conflates modeling
and policy optimization, rather than separating out these two steps like ADR; in contrast,
we first model $\mu_{now}$ and $\mu_{next}$ using appropriately flexible method and then
choose policy $\hpi$ in a separate optimization step where we can enforce structure.

\section*{Acknowledgement}

We are grateful for enlightening conversations with
Susan Athey,
Miguel Hernan,
Michael Kosorok,
Percy Liang,
Susan Murphy,
Jamie Robins,
Andrea Rotnitzky and
Zhengyuan Zhou,
as well as for helpful comments and feedback from the Associate Editor and
seminar participants at several universities and workshops. 
XN acknowledges the partial support from
the Stanford Data Science Scholars program.
EB acknowledges the partial support of a NSF Career Award and
a Siemens grant.
SW acknowledges the partial support of a Stanford Institute for Human-Centered Artificial Intelligence grant, a Facebook Faculty Award, and National Science Foundation grant DMS-1916163.

\bibliographystyle{plainnat-abbrev}
\bibliography{references}

\newpage
\begin{center}
	{\large\bf SUPPLEMENTARY MATERIAL}
\end{center}
\begin{appendix}
	
	\section{Proofs}

	\begin{proof}[Proof of Lemma \ref{lemm:is-decomp}]
		For a fixed $t$ such that $1\leq t\leq T$,
		\begin{align*}
		&\EE {\mathbbm{1}_{t \geq \tau_\pi}\frac{\mathbbm{1}_{A_{1:t-1}=0}}{\prod_{t'=1}^{t-1}e_{t',0}(S_{1:{t'}})} \p{\mu_{now, W_\pi}(S_{1:t}, t) - \mu_{next, W_\pi}(S_{1:t}, t)}} \\
		&= \int \mathbbm{1}_{t \geq \tau_\pi}\frac{\mathbbm{1}_{A_{1:t-1}=0}}{\prod_{t'=1}^{t-1}e_{t',0}(S_{1:{t'}})} \p{\mu_{now, W_\pi}(S_{1:t}, t) - \mu_{next, W_\pi}(S_{1:t}, t)} f(s_1) \, 
		\\& \ \ \ \ \ \ \ \ \ \prod_{t=2}^T f_t\p{s_t \cond s_{1:(t-1)}, a_{1:(t-1)}} 
		\prod_{t'=1}^{t-1}e_{t',0}(S_{1:{t'}}) dS_{1:t} dA_{1:t-1}\\
		&= \int \mathbbm{1}_{t \geq \tau_\pi} \p{\mu_{now, W_\pi}(S_{1:t}, t) - \mu_{next, W_\pi}(S_{1:t}, t)} f(s_1) \, \prod_{t=2}^T f_t\p{s_t \cond s_{1:(t-1)}, a_{1:(t-1)}} \\
		& \ \ \ \ \ \ \ \ \ \mathbbm{1}_{A_{1:t-1}=0}dS_{1:t} dA_{1:t-1}\\
		&= \EE[\bm{0}] { \mathbbm{1}_{t \geq \tau_\pi} \p{\mu_{now, W_\pi}(S_{1:t}, t) - \mu_{next, W_\pi}(S_{1:t}, t)}}.
		\end{align*}
	\end{proof}
	
	\begin{proof}[Proof of Lemma \ref{lemm:clt}]
		First, we check that $\EE{\tilde{\Delta}(\pi, \bm{0})} = \Delta(\pi, \bm{0})$. It is sufficient to check that for each $t$ such that $1\leq t \leq T$, the following holds.
		\begin{align*}
		&\EE{\mathbbm{1}_{t \geq \tau_\pi} \frac{\mathbbm{1}_{A_{1:t-1} = 0}\mathbbm{1}_{A_t=k} }{\prod_{t'=1}^{t-1}e_{t',0}(S_{1:{t'}})e_{t,k}(S_{1:t}^{(i)})}  \p{Y - \mu_{now, k}(S_{1:t},t)}} \\
		&=\EE{\EE{\mathbbm{1}_{t \geq \tau_\pi} \frac{\mathbbm{1}_{A_{1:t-1} = 0}\mathbbm{1}_{A_t=k} }{\prod_{t'=1}^{t-1}e_{t',0}(S_{1:{t'}})e_{t,k}(S_{1:t})}  \p{Y - \mu_{now, k}(S_{1:t},t)}\cond S_{1:t}}} \\
		&=\EE{\mathbbm{1}_{t \geq \tau_\pi} \EE{ \p{Y - \mu_{now, k}(S_{1:t},t)}\cond S_{1:t}, A_{1:t-1}=0, A_t = k}} \\
		&=0,
		\end{align*}
		and 
		\begin{align*}
		&\EE{\mathbbm{1}_{t \geq \tau_\pi} \frac{\mathbbm{1}_{A_{1:t} = 0}\mathbbm{1}_{A_{t+1}=k} }{\prod_{t'=1}^{t}e_{t',0}(S_{1:{t'}})e_{t+1,k}(S_{1:t+1}^{(i)})}  \p{Y - \mu_{next, k}(S_{1:t},t)}} \\
		&\EE{\EE{\mathbbm{1}_{t \geq \tau_\pi} \frac{\mathbbm{1}_{A_{1:t} = 0}\mathbbm{1}_{A_{t+1}=k} }{\prod_{t'=1}^{t}e_{t',0}(S_{1:{t'}})e_{t+1,k}(S_{1:t+1}^{(i)})}  \p{Y - \mu_{next, k}(S_{1:t},t)}\cond S_{1:t}}} \\
		&\EE{\mathbbm{1}_{t \geq \tau_\pi}\EE{ \frac{\mathbbm{1}_{A_{t+1}=k} }{e_{t+1,k}(S_{1:t+1}^{(i)})}  \p{Y - \mu_{next, k}(S_{1:t},t)}\cond S_{1:t}, A_{1:t}=0}} \\
		&=0,
		\end{align*}
		where the last equality follows from \eqref{eq:rewritenext}.
		Next, we check that the variance $\Omega_\pi <\infty$. 
		Note that, given the assumption that $\abs{Y}$ is bounded and given overlap as in
		Assumption \ref{assu:overlap}, we have
		$\frac{1}{\prod_{t=1}^{T-1} e_{t,0}(S_{1:t})} \frac{1}{e_{t,k}(S_{1:t})} \leq T(1-\eta_0/T)^{-T}/\eta \leq T\exp(2\eta_0)/\eta$
		when $T\geq 2$ and $0<\eta_0<1$. We then conclude that $\Omega_\pi <\infty$. 
		
		The desired result then follows from the Lindeberg–Lévy Central Limit Theorem. 
	\end{proof}

	\begin{proof}[Proof of Lemma \ref{lemm:oracle-regret}]
		Given a policy class $\Pi$, letting $\pi^*=\argmax_{\pi \in \Pi} V_{\pi}$. First, we note that
		\begin{align*}
		R(\tilde{\pi}) 
		&= V_{\pi^*} - V_{\tilde{\pi}} \\
		&= \Delta(\pi^*, \tilde{\pi}) \\
		&= \tilde{\Delta}(\pi^*, \tilde{\pi}) + \Delta(\pi^*, \tilde{\pi}) - \tilde{\Delta}(\pi^*, \tilde{\pi}) \\
		&\leq \Delta(\pi^*, \tilde{\pi}) - \tilde{\Delta}(\pi^*, \tilde{\pi}) \\
		&\leq \sup_{\pi,\pi'\in\Pi}\abs{\Delta(\pi, \pi') - \tilde{\Delta}(\pi, \pi') }.
		\end{align*}
		We note that although we work in the sequential policy learning setup, by redefining policy $\pi$ as in \eqref{eq:pi-new-defn}, the form of $\tilde{\Delta}$ in \eqref{eq:delta-tilde-general} becomes the same as $\tilde{\Delta}$ in \cite{{zhou2018offline}}. Given Assumptions \ref{assu:consistency}--\ref{assu:entropy}, by directly following their Lemma 2, we have for any $\delta >0$, with probability at least $1-2\delta$, there exists universal constants $0<c_1,\,c_2<\infty$ such that
		\begin{align*}
		\sup_{\pi, \pi' \in \Pi} \abs{\tilde{\Delta}(\pi,\pi') - \Delta(\pi,\pi')} \leq \p{c_1 \kappa(\Pi) + c_2 + \sqrt{2\log\frac{1}{\delta}}} \sqrt{\frac{V^*}{n}} + o\p{\frac{1}{\sqrt{n}}}.
		\end{align*}
		Let the last term be $c_n$. Thus for any $c > 0$, there exists $N(c)$ such that for all $n > N(c)$, $c_n < c/\sqrt{n}$. 
		Choose $\varepsilon_0(c,\delta)$ so small such that by letting
		$\p{c_1 \kappa(\Pi) + c_2 + \sqrt{2\log\frac{1}{\delta}}} \sqrt{\frac{V^*}{n}} + c/\sqrt{n} < \varepsilon_0(\delta, c)$, we have $n > N(c)$. The result then follows immediately for all $\varepsilon < \varepsilon_0(\delta, c)$.
	\end{proof}
	
	We now state a simple result on the mean-squared error of noisy products that will be useful
	in the proof of Lemma \ref{lemm:cf-uniform}.
	
	\begin{prop}
		\label{prop:prod_err}
		Let $X_t$ for $t = 1, \, ..., \, T$ be a set of (not necessarily independent or identically distributed)
		random variables with $\abs{\log(1 + X_t)} \leq c$ almost surely and $\EE{\log(1 + X_t)^2} \leq v^2$. Then,
		\begin{equation}
		\label{eq:prod_err}
		\EE{\p{\prod_{t = 1}^T \p{1 + X_t} - 1}^2} \leq v^2 \p{4 + 64 T^2 \p{\frac{e^{2Tc} - 2Tc - 1}{4T^2c^2}}}
		\end{equation}
		for all
		\begin{equation}
		\label{eq:prod_err_cond}
		v^2 \leq \min\cb{1, \, \p{4 T^2 \p{\frac{e^{2Tc} - 2Tc - 1}{4T^2c^2}}}^{-1}}.
		\end{equation}
		\proof
		For convenience, let $a_t = \EE{\log(1 + X_t)}$ and $Z_t = \log(1 + X_t) - a_t$. Then, 
		writing $a =  \sum_{t = 1}^T a_t$, we see that
		\begin{align*}
		&\EE{\p{\prod_{t = 1}^T \p{1 + X_t} - 1}^2}
		= \EE{\p{e^{a + \sum_{t = 1}^T Z_t} - 1}^2} \\
		&\ \ \ \ \ \ \ \ \ \ \leq e^{2a} \EE{e^{2\sum_{t = 1}^T Z_t}} - 2 e^a + 1
		= \p{e^a - 1}^2 + e^{2a} \p{\EE{e^{2\sum_{t = 1}^T Z_t}} - 1},
		\end{align*}
		where the inequality statement follows by Jensen's inequality and convexity of the exponential function.
		By Cauchy-Schwartz, we see that \smash{$\Var{\sum_{t = 1}^T Z_t} \leq T^2 v^2$}, and clearly \smash{$\abs{\sum_{t = 1}^T Z_t} \leq Tc$}.
		Thus, by a Bernstein-style bound on its moment-generating function (see, e.g., Lemma 7.26 of
		\citet{lafferty2008concentration}), we find that
		\begin{align*}
		0 &\leq \EE{e^{2\sum_{t = 1}^T Z_t}} - 1  \leq \exp\sqb{4 T^2 v^2 \p{\frac{e^{2Tc} - 2Tc - 1}{4T^2c^2}}} - 1 \\
		& \leq 8 T^2 v^2 \p{\frac{e^{2Tc} - 2Tc - 1}{4T^2c^2}}
		\end{align*}
		whenever \eqref{eq:prod_err_cond} holds, since $e^x - 1 \leq 2x$ for all $0 \leq x \leq 1$.
		Meanwhile, we must have $\abs{a} \leq v$, and so we conclude that \eqref{eq:prod_err} holds;
		here, we used the bounds $(e^{a} - 1)^2 \leq 4a^2$ and $e^{2a} \leq 8$ for all $\abs{a} \leq 1$.
		\endproof
	\end{prop}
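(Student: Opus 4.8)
The plan is to pass to the logarithmic scale, turning the product into the exponential of a sum, and then reduce the whole estimate to controlling the moment generating function of a bounded, centered sum of small variance. I would set $L_t = \log\p{1 + X_t}$, so that $\prod_{t=1}^T \p{1 + X_t} = e^{\sum_{t=1}^T L_t}$, and decompose each summand into its mean and its fluctuation by writing $a_t = \EE{L_t}$, $Z_t = L_t - a_t$, $a = \sum_{t=1}^T a_t$, and $S = \sum_{t=1}^T Z_t$. The target quantity becomes $\EE{\p{e^{a + S} - 1}^2}$, so that the bounded hypothesis $\abs{L_t} \leq c$ and the second-moment hypothesis $\EE{L_t^2} \leq v^2$ translate directly into an almost-sure range and a variance control for $S$.

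Next I would expand the square as $e^{2a}\EE{e^{2S}} - 2 e^a \EE{e^S} + 1$ and apply Jensen's inequality, $\EE{e^S} \geq e^{\EE{S}} = 1$ (since $S$ is centered), to discard the middle term in the favorable direction, leaving $e^{2a}\EE{e^{2S}} - 2e^a + 1$. The decisive algebraic step is then to regroup this as $\p{e^a - 1}^2 + e^{2a}\p{\EE{e^{2S}} - 1}$, which cleanly separates a \emph{bias} contribution coming from the aggregate logarithmic mean $a$ from a \emph{dispersion} contribution coming from the fluctuations $S$. These two pieces are exactly what will generate, respectively, the constant $4$ and the $64 T^2\p{\ldots}$ factor appearing in \eqref{eq:prod_err}.

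For the dispersion term I would bound $\EE{e^{2S}} - 1$ by a Bernstein-type moment generating function inequality. Since each $\abs{Z_t}$ is of order $c$, the sum $S$ has almost-sure range of order $Tc$, and by Cauchy–Schwarz (crucially allowing arbitrary dependence among the $Z_t$) we get $\Var{S} \leq T^2 v^2$. Feeding these two facts into the standard MGF bound for a bounded centered variable yields $\EE{e^{2S}} \leq \exp\p{4 T^2 v^2\, g(Tc)}$ with the Bernstein function $g(Tc) = \p{e^{2Tc} - 2Tc - 1}/\p{4 T^2 c^2}$, which is precisely the factor in \eqref{eq:prod_err}. The smallness hypothesis \eqref{eq:prod_err_cond} is exactly what forces the exponent $4 T^2 v^2 g(Tc)$ to lie in $[0,1]$, so that the elementary inequality $e^y - 1 \leq 2y$ on $[0,1]$ linearizes the exponential into $\EE{e^{2S}} - 1 \leq 8 T^2 v^2 g(Tc)$. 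For the bias term I would bound the aggregate log-mean by $\abs{a} \leq v$ and use $\p{e^a - 1}^2 \leq 4 a^2$ together with $e^{2a} \leq 8$, both valid on $\abs{a} \leq 1$; assembling the pieces gives $\p{e^a - 1}^2 \leq 4 v^2$ and $e^{2a}\p{\EE{e^{2S}} - 1} \leq 64 T^2 v^2 g(Tc)$, whose sum is \eqref{eq:prod_err}.

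The main obstacle is the honest control of the dispersion term for a sum of \emph{dependent} bounded variables. Because independence is not assumed, the variance of $S$ can only be bounded crudely by $T^2 v^2$ rather than by the $T v^2$ one would get under independence, and one cannot simply Taylor-expand $\EE{e^{2S}}$ to second order; instead a genuine Bernstein/Bennett MGF inequality must be invoked and then carefully linearized under \eqref{eq:prod_err_cond}. Keeping every multiplicative constant consistent through the change of variables, the Jensen step, the Bernstein bound, and the final linearization is the delicate bookkeeping on which the stated constants rest; the only further quantitative point is bounding the aggregate log-mean $a$ so that the bias term contributes the constant $4$, and verifying that \eqref{eq:prod_err_cond} is exactly the condition that places the Bernstein exponent in the regime where $e^y - 1 \leq 2y$ applies.
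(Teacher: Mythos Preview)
Your proposal is correct and follows essentially the same approach as the paper's own proof: the same logarithmic change of variables, the same mean--fluctuation decomposition $a + S$, the same Jensen step to drop the cross term, the same Cauchy--Schwarz bound $\Var{S} \leq T^2 v^2$ feeding into a Bernstein-type MGF inequality, and the same elementary inequalities $(e^a - 1)^2 \leq 4a^2$, $e^{2a} \leq 8$, $e^y - 1 \leq 2y$ to assemble the constants. The match is nearly verbatim; your write-up even makes the Jensen step (applied to $\EE{e^S} \geq 1$) more explicit than the paper does.
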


	\begin{proof}[Proof of Lemma \ref{lemm:cf-uniform}]
		For simplicity, we will prove this result in a setting where the nuisance components $\he_{t, k}(s_{1:t})$,
		$\hmu_{now,k}(s_{1:t})$ and $\hmu_{next,k}(s_{1:t})$ are trained on a separate development set and can
		thus be considered as exogenous. Results in our setting of interest, i.e., with cross-fitting, can then be
		derived analogously by applying the same argument multiple times, with one fold as the focal fold and
		the other folds acting as the development set. See \citet{chernozhukov2016double} for further details,
		and the proof of Lemma 4 of \citet{athey2017efficient} for a concrete example of a proof following this strategy.
		
		Let $\delta > 0$ be pre-specified. By Assumption \ref{assu:nuisance}, there is an $n(\delta)$ for which 
		\begin{equation}
		\label{eq:sup_conc}
		\sup\cb{\abs{\he_{t, k}(s_{1:t}) - e_{t,k}(s_{1:t})} : t \in 1, \, \ldots, \, T, \, k \in 1, \, \ldots, \ K, \, s \in \set^T} \leq \frac{1 - \eta_0}{2T}
		\end{equation}
		for all $n \geq n(\delta)$, with probability at least $1 - \delta$. For the rest of this proof, we focus the event under
		which \eqref{eq:sup_conc} has occurred.
		Now, recalling notation from \eqref{eq:plugin-oracle-estimator}, etc., we have
		
		\begin{align*}
		&\abs{\tilde{\Delta}(\pi, \pi')  - \hat{\Delta}(\pi, \pi')} \\
		&\ \ \ \  = \abs{\frac{1}{n} \sum_{i = 1}^n  \sum_{t = 1}^T u_{t,i}^{\pi,\pi'} \p{\frac{\mathbbm{1}_{A_{1:t-1}^{(i)} = 0}}{\prod_{t'=1}^{t-1}e_{t',0}(S_{1:{t'}}^{(i)})}\tilde{\Psi}_{t,W_\pi}(S_{1:t}^{(i)})
				- \frac{\mathbbm{1}_{A_{1:t-1}^{(i)} = 0}}{\prod_{t'=1}^{t-1}\he_{t',0}(S_{1:{t'}}^{(i)})}\hat{\Psi}_{t,W_\pi}(S_{1:t}^{(i)})}} \\
		&\ \ \ \ \leq \sum_{t = 1}^T \underbrace{\abs{\frac{1}{n} \sum_{i = 1}^n
				u_{t,i}^{\pi,\pi'}\p{\frac{\mathbbm{1}_{A_{1:t-1}^{(i)} = 0}}{\prod_{t'=1}^{t-1}e_{t',0}(S_{1:{t'}}^{(i)})}\tilde{\Psi}_{t,W_\pi}(S_{1:t}^{(i)})
					- \frac{\mathbbm{1}_{A_{1:t-1}^{(i)} = 0}}{\prod_{t'=1}^{t-1}\he_{t',0}(S_{1:{t'}}^{(i)})}\hat{\Psi}_{t,W_\pi}(S_{1:t}^{(i)})}}}_{\text{err}_{t}} \\
		& \ \ \ \ \leq T \sup\cb{\text{err}_{t} : t = 1, \, \ldots, \, T},
		\end{align*}
		where $u_{t,i}^{\pi,\pi'} = \mathbbm{1}_{t \geq \tau_\pi} - \mathbbm{1}_{t \geq \tau_{\pi'}}$.
		Furthermore, we note that
		\begin{align*}
		\text{err}_{t}
		&\leq \abs{\frac{1}{n} \sum_{i = 1}^n u_{t,i}^{\pi,\pi'} \p{\frac{\mathbbm{1}_{A_{1:t-1}^{(i)} = 0}}{\prod_{t'=1}^{t-1}e_{t',0}(S_{1:{t'}}^{(i)})}
				- \frac{\mathbbm{1}_{A_{1:t-1}^{(i)} = 0}}{\prod_{t'=1}^{t-1}\he_{t',0}(S_{1:{t'}}^{(i)})}}\tilde{\Psi}_{t,W_\pi}(S_{1:t}^{(i)})} \\
		&\ \ \ \ \ \ + \abs{\frac{1}{n} \sum_{i = 1}^n u_{t,i}^{\pi,\pi'} \frac{\mathbbm{1}_{A_{1:t-1}^{(i)} = 0}}{\prod_{t'=1}^{t-1}\he_{t',0}(S_{1:{t'}}^{(i)})}\p{\tilde{\Psi}_{t,W_\pi}(S_{1:t}^{(i)})
				- \hat{\Psi}_{t,W_\pi}(S_{1:t}^{(i)})}}.
		\end{align*}
		We now proceed to bound the two above summands separately; call them
		$\text{err}_{t}(1)$ and $\text{err}_{t}(2)$. The first term
		measures our errors in estimating the propensity of starting treatment, while the second measures errors in the
		advantage doubly robust scores. For the first term, we further note that
		\begin{align*}
		&\text{err}_{t}(1)
		\leq \Bigg|\frac{1}{n} \sum_{i = 1}^n  u_{t,i}^{\pi,\pi'} \p{\frac{\mathbbm{1}_{A_{1:t-1}^{(i)} = 0}}{\prod_{t'=1}^{t-1}e_{t',0}(S_{1:{t'}}^{(i)})}
			- \frac{\mathbbm{1}_{A_{1:t-1}^{(i)} = 0}}{\prod_{t'=1}^{t-1}\he_{t',0}(S_{1:{t'}}^{(i)})}} \\
		&\ \ \ \ \ \ \ \ \ \ \ \ \ \ \ \ \ \ \times \p{\mu_{now,W_\pi}(S_{1:t}^{(i)},t) - \mu_{next,W_\pi}(S_{1:t}^{(i)}, t)}\Bigg| \\
		&\ \ \ \ \ \ + \Bigg|\frac{1}{n} \sum_{i = 1}^n  u_{t,i}^{\pi,\pi'} \p{\frac{\mathbbm{1}_{A_{1:t-1}^{(i)} = 0}}{\prod_{t'=1}^{t-1}e_{t',0}(S_{1:{t'}}^{(i)})}
			- \frac{\mathbbm{1}_{A_{1:t-1}^{(i)} = 0}}{\prod_{t'=1}^{t-1}\he_{t',0}(S_{1:{t'}}^{(i)})}} \\
		&\ \ \ \ \ \ \ \ \ \ \ \ \ \ \ \ \ \ \times \p{\mathbbm{1}_{A_t^{(i)}=W_\pi} \frac{Y^{(i)} - \mu_{now, W_\pi}(S_{1:t}^{(i)},t)}{e_{t,W_\pi}(S_{1:t}^{(i)})}}\Bigg| \\
		&\ \ \ \ \ \ + \Bigg|\frac{1}{n} \sum_{i = 1}^n  u_{t,i}^{\pi,\pi'} \p{\frac{\mathbbm{1}_{A_{1:t-1}^{(i)} = 0}}{\prod_{t'=1}^{t-1}e_{t',0}(S_{1:{t'}}^{(i)})}
			- \frac{\mathbbm{1}_{A_{1:t-1}^{(i)} = 0}}{\prod_{t'=1}^{t-1}\he_{t',0}(S_{1:{t'}}^{(i)})}} \\
		&\ \ \ \ \ \ \ \ \ \ \ \ \ \ \ \ \ \ \times \p{\mathbbm{1}_{A_{t}^{(i)}=0} \mathbbm{1}_{A_{t+1}^{(i)}=W_\pi} \frac{Y^{(i)} - \mu_{now, W_\pi}(S_{1:t+1}^{(i)},t+1)}{e_{t,0}(S_{1:t}^{(i)}) e_{t+1,W_\pi}(S_{1:t+1}^{(i)})}}\Bigg| \\
		&\ \ \ \ \ \ + \Bigg|\frac{1}{n} \sum_{i = 1}^n  u_{t,i}^{\pi,\pi'} \p{\frac{\mathbbm{1}_{A_{1:t-1}^{(i)} = 0}}{\prod_{t'=1}^{t-1}e_{t',0}(S_{1:{t'}}^{(i)})}
			- \frac{\mathbbm{1}_{A_{1:t-1}^{(i)} = 0}}{\prod_{t'=1}^{t-1}\he_{t',0}(S_{1:{t'}}^{(i)})}}\\
		&\ \ \ \ \ \ \ \ \ \ \ \ \ \ \ \ \ \ \times \p{\mathbbm{1}_{A_{t}^{(i)}=0} \mathbbm{1}_{A_{t+1}^{(i)}=W_\pi} \frac{\mu_{now, W_\pi}(S_{1:t+1}^{(i)},t+1) - \mu_{next, W_\pi}(S_{1:t}^{(i)},t)}{e_{t,0}(S_{1:t}^{(i)}) e_{t+1,W_\pi}(S_{1:t+1}^{(i)})}}\Bigg|.
		\end{align*}
		The first summand above can be bounded by Cauchy-Schwartz:
		\begin{align*}
		&\ldots \leq 
		\sqrt{\frac{1}{n} \sum_{i = 1}^n  \p{\frac{\mathbbm{1}_{A_{1:t-1}^{(i)} = 0}}{\prod_{t'=1}^{t-1}e_{t',0}(S_{1:{t'}}^{(i)})}
				- \frac{\mathbbm{1}_{A_{1:t-1}^{(i)} = 0}}{\prod_{t'=1}^{t-1}\he_{t',0}(S_{1:{t'}}^{(i)})}}^2} \\
		&\ \ \ \ \ \ \ \ \ \ \ \ \ \ \ \ \ \times \sqrt{\frac{1}{n} \sum_{i = 1}^n \p{\mu_{now,W_\pi}(S_{1:t}^{(i)},t) - \mu_{next,W_\pi}(S_{1:t}^{(i)}, t)}^2}.
		\end{align*}
		The second moments of $\mu_{now,W_\pi}(S_{1:t}^{(i)},t) - \mu_{next,W_\pi}(S_{1:t}^{(i)}, t)$ are governed by Assumption \ref{assu:delta},
		so we know that the second term in the product is bounded on the order of $n^{-\kappa_\delta}$ (with constants that do not depend
		on the problem setting. Meanwhile, 
		\begin{align*}
		&\frac{1}{n} \sum_{i = 1}^n  \p{\frac{\mathbbm{1}_{A_{1:t-1}^{(i)} = 0}}{\prod_{t'=1}^{t-1}e_{t',0}(S_{1:{t'}}^{(i)})}
			- \frac{\mathbbm{1}_{A_{1:t-1}^{(i)} = 0}}{\prod_{t'=1}^{t-1}\he_{t',0}(S_{1:{t'}}^{(i)})}}^2 \\
		&\ \ \ \ \ \ \ \ \ \ \ \ \leq \frac{1}{n} \sum_{i = 1}^n \p{\frac{\mathbbm{1}_{A_{1:t-1}^{(i)} = 0}}{\prod_{t'=1}^{t-1}e_{t',0}(S_{1:{t'}}^{(i)})}}^2
		\p{1 - \frac{\prod_{t'=1}^{t-1}e_{t',0}(S_{1:{t'}}^{(i)})}{\prod_{t'=1}^{t-1}\he_{t',0}(S_{1:{t'}}^{(i)})}}^2 \\
		&\ \ \ \ \ \ \ \ \ \ \ \ \leq \frac{e^{2\eta_0}}{n} \sum_{i = 1}^n  \p{1 - \frac{\prod_{t'=1}^{t-1}e_{t',0}(S_{1:{t'}}^{(i)})}{\prod_{t'=1}^{t-1}\he_{t',0}(S_{1:{t'}}^{(i)})}}^2,
		\end{align*}
		where on the last line we used Assumption \ref{assu:overlap}. Finally, to bound this last term, notice that from \eqref{eq:sup_conc} and Assumption \ref{assu:overlap}, we have $\inf_{t,k} \hat{e}_{t,k}(\cdot) \geq 1-\frac{\eta_0+1}{2T}$ almost surely.  We now aim to bound this last term with Proposition \ref{prop:prod_err}. By taking $X_t := \frac{e_{t,0}(S_{1:t})}{\hat{e}_{t,0}(S_{1:t})} - 1$, we have
		\begin{align*}
		\abs{\log(X_t+1)} 
		&= \abs{\log\p{1+\frac{e_{t,0}(S_{1:t})-\hat{e}_{t,0}(S_{1:t})}{\hat{e}_{t,0}(S_{1:t})}}}\\
		&\leq \max\{\log(1+\frac{1-\eta_0}{2T-\eta_0-1}), -\log(1-\frac{1-\eta_0}{2T-\eta_0-1})\}\\
		&\leq \frac{2-2\eta_0}{2T-\eta_0-1}
		\end{align*}
		almost surely for $T\geq 2$, where we use the fact that $\log(1+x) < 2x$ for $x>0$ and $\log(1-x) > -2x$ for $0<x<0.5$.
		
		We also have 
		\begin{align*}
		\EE{(\log(X_t+1))^2} 
		&= \EE{(\log e_{t,0}(S_{1:t}) - \log \hat{e}_{t,0}(S_{1:t}) )^2} \\
		&\leq \EE{1/\p{\min\{1-\eta_0/T,  1-(\eta_0+1)/(2T)\}}^2 (e_{t,0}(S_{1:t})-\hat{e}_{t,0}(S_{1:t}))^2}\\
		&\leq \frac{(1-\eta_0/T)^2 (1-(\eta_0+1)/(2T))^2}{\p{\min\{1-\eta_0/T,  1-(\eta_0+1)/(2T)\}}^2} C_e n^{-2\kappa_e} \\
		&\leq C_e n^{-2\kappa_e},
		\end{align*}
		where we have used the fact that $\abs{\log a - \log b} \leq \frac{\abs{a-b}}{\min\{a,b\}}$ for $a,b>0$ by concavity, and Assumption \ref{assu:overlap} and \ref{assu:nuisance}, and assuming \eqref{eq:sup_conc} holds.
		
		Given the above, applying Proposition \eqref{prop:prod_err}, we see that for $n > n(\eta,T)$ where \eqref{eq:prod_err_cond} would hold for $n(\eta,T)$, we  have for some constant $C$,
		\begin{align}
		\EE{\p{1 - \frac{\prod_{t'=1}^{t-1}e_{t',0}(S_{1:{t'}}^{(i)})}{\prod_{t'=1}^{t-1}\he_{t',0}(S_{1:{t'}}^{(i)})}}^2} 
		\leq C n^{-2\kappa_e} T^2.
		\end{align}
		Thus, we conclude that the first summand in the decomposition of $\text{err}_{t}(1)$ is bounded on the order of $T n^{-(\kappa_e + \kappa_\delta)}$ in probability whenever \eqref{eq:sup_conc} holds.
		
		Meanwhile, the second and third summands used to construct $\text{err}_{t}(1)$ multiply errors in $\he_{t',0}(S_{1:t'})$ with
		noise terms that are mean-zero conditionally on $\ff_t$ and $\ff_{t+1}$ respectively thanks to Assumption \ref{assu:ignorability}
		(recall that we assume that $\he_{t',0}(S_{1:t'})$ has been trained
		on a separate development set and so the predictive surface $\he_{t',0}(\cdot)$ may be taken as exogenous).
		They can thus be bounded to order $n^{-(1/2 + \kappa_e)}$ whenever
		$\he_{t',0}(S_{1:t'})$ is consistent in squared-error loss at rate $n^{-\kappa_e}$; see, e.g., the proof of Lemma 4 of \citet{athey2017efficient}
		for details.
		Finally, the fourth term can again be bounded using Cauchy-Schwartz analogously to the first one, noting that
		$\mu_{now, W_\pi}(S_{1:t+1}^{(i)},t+1) - \mu_{next, W_\pi}(S_{1:t}^{(i)},t)$ is bounded on the order of $n^{-\kappa_\gamma}$
		by Assumption \ref{assu:delta}.
		
		Thus, we find that there exists a constant $C_1(\delta)$ that depends the constants in
		Assumptions \ref{assu:nuisance} and \ref{assu:delta}, as well as a problem-specific constant $n_1(\delta, \, \eta, \, T)$ such that
		$$  \text{err}_{t} \leq C_1(\delta) T n^{-\p{\kappa_e + \min\cb{1/2, \ \kappa_\delta, \, \kappa_\gamma}}} $$
		with probability at least $1 - \delta$, for all $n \geq n_1(\delta, \, \eta, \, T)$ and on the event \eqref{eq:sup_conc}.
		
		We now move to bounding $\text{err}_{t}(2)$. To this end, we first note that
		$$ \hG_{t,i}^{\pi,\pi'} := u_{t,i}^{\pi,\pi'} \frac{\mathbbm{1}_{A_{1:t-1}^{(i)} = 0}}{\prod_{t'=1}^{t-1}\he_{t',0}(S_{1:{t'}}^{(i)})} $$
		is $\ff_t$-measurable (again, recall that all nuisance components are taken to be exogenous), and
		that it is uniformly bounded on the event \eqref{eq:sup_conc},
		\begin{equation}
		\label{eq:Gbound}
		\abs{\hG_{t,i}^{\pi,\pi'}} \leq e^{1 + \eta_0},
		\end{equation}
		by Assumption \ref{assu:overlap}. Given these preliminaries, we can bound
		$\text{err}_{t}(2)$ by expanding out the square as follows:
		\begin{align*}
		&\text{err}_{t}(2) \leq \abs{\frac{1}{n} \sum_{i = 1}^n \hG_{t,i}^{\pi,\pi'} \p{ \mu_{now,W_\pi}(S_{1:t}^{(i)},t) - \hmu_{now,W_\pi}(S_{1:t}^{(i)},t)  } \p{1 - \frac{\mathbbm{1}_{A_t^{(i)}=W_\pi}}{e_{t,W_\pi}(S_{1:t}^{(i)})}}} \\
		&\ \ \ \ +  \abs{\frac{1}{n} \sum_{i = 1}^n \hG_{t,i}^{\pi,\pi'} \mathbbm{1}_{A_t^{(i)}=W_\pi} \p{ Y^{(i)} - \mu_{now, W_\pi}(S_{1:t}^{(i)},t) } \p{\frac{1}{\he_{t,W_\pi}(S_{1:t}^{(i)})} - \frac{1}{e_{t,W_\pi}(S_{1:t}^{(i)})}}} \\
		&\ \ \ \ +  \Bigg|\frac{1}{n} \sum_{i = 1}^n \hG_{t,i}^{\pi,\pi'} \mathbbm{1}_{A_t^{(i)}=W_\pi}  \p{ \hmu_{now,W_\pi}(S_{1:t}^{(i)},t) - \mu_{now,W_\pi}(S_{1:t}^{(i)},t)}\\
		&\ \ \ \ \ \ \ \ \ \ \ \ \ \ \ \ \ \ \times\p{\frac{1}{\he_{t,W_\pi}(S_{1:t}^{(i)})} - \frac{1}{e_{t,W_\pi}(S_{1:t}^{(i)})}}  \Bigg| \\
		&\ \ \ \ + \abs{\frac{1}{n} \sum_{i = 1}^n \hG_{t,i}^{\pi,\pi'} \p{ \mu_{next,W_\pi}(S_{1:t}^{(i)},t) - \hmu_{next,W_\pi}(S_{1:t}^{(i)},t)  } \p{1 - \frac{\mathbbm{1}_{A_t^{(i)}=0,A_{t+1}^{(i)}=W_\pi}}{e_{t,0}(S_{1:t}^{(i)})e_{t+1,W_\pi}(S_{1:t+1}^{(i)})}}} \\
		&\ \ \ \ +  \Bigg|\frac{1}{n} \sum_{i = 1}^n \hG_{t,i}^{\pi,\pi'} \mathbbm{1}_{A_t^{(i)}=0,A_{t+1}^{(i)}=W_\pi} \p{ Y^{(i)} - \mu_{next, W_\pi}(S_{1:t}^{(i)},t) } \\
		&\ \ \ \ \ \ \ \ \ \ \ \ \ \ \ \ \ \ \ \ \times \p{\frac{1}{\he_{t,0}(S_{1:t}^{(i)})\he_{t+1,W_\pi}(S_{1:t+1}^{(i)})} - \frac{1}{e_{t,0}(S_{1:t}^{(i)})e_{t+1,W_\pi}(S_{1:t+1}^{(i)})}}\Bigg| \\
		&\ \ \ \ +  \Bigg|\frac{1}{n} \sum_{i = 1}^n \hG_{t,i}^{\pi,\pi'} \mathbbm{1}_{A_t^{(i)}=0,A_{t+1}^{(i)}=W_\pi}  \p{ \hmu_{next,W_\pi}(S_{1:t}^{(i)},t) - \mu_{next,W_\pi}(S_{1:t}^{(i)},t)} \\
		&\ \ \ \ \ \ \ \ \ \ \ \ \ \ \ \ \ \ \ \ \times \p{\frac{1}{\he_{t,0}(S_{1:t}^{(i)})\he_{t+1,W_\pi}(S_{1:t+1}^{(i)})} - \frac{1}{e_{t,0}(S_{1:t}^{(i)})e_{t+1,W_\pi}(S_{1:t+1}^{(i)})}}\Bigg|. 
		\end{align*}
		Here, using Assumption \ref{assu:ignorability}, we again see that the 1st, 2nd and 4th terms are sums of regression error
		multiplied by conditionally mean-zero noise and so, as argued
		above, decay at rates $n^{-(1/2 + \kappa_e)}$ or $n^{-(1/2 + \kappa_\mu)}$ depending on the rate of convergence of the nuisance
		components. Meanwhile, the 3rd and
		6th terms can be bounded by Cauchy-Schwartz, which imply that they decay on the order of
		$n^{-( \kappa_e + \kappa_\mu)}$.
		The last remaining term to bound is the 5th summand in the above bound for $\text{err}_{t}(2)$, which we can further
		decompose as
		\begin{align*}
		&\ldots \leq
		\Bigg|\frac{1}{n} \sum_{i = 1}^n \hG_{t,i}^{\pi,\pi'} \mathbbm{1}_{A_t^{(i)}=0,A_{t+1}^{(i)}=W_\pi} \p{ Y^{(i)} - \mu_{now, W_\pi}(S_{1:t+1}^{(i)},t+1) } \\
		&\ \ \ \ \ \ \ \ \ \ \ \ \ \ \ \ \ \ \ \ \times \p{\frac{1}{\he_{t,0}(S_{1:t}^{(i)})\he_{t+1,W_\pi}(S_{1:t+1}^{(i)})} - \frac{1}{e_{t,0}(S_{1:t}^{(i)})e_{t+1,W_\pi}(S_{1:t+1}^{(i)})}}\Bigg| \\
		&\ \ \ \ + \Bigg|\frac{1}{n} \sum_{i = 1}^n \hG_{t,i}^{\pi,\pi'} \mathbbm{1}_{A_t^{(i)}=0,A_{t+1}^{(i)}=W_\pi} \p{ \mu_{now, W_\pi}(S_{1:t+1}^{(i)},t+1) - \mu_{next, W_\pi}(S_{1:t}^{(i)},t) } \\
		&\ \ \ \ \ \ \ \ \ \ \ \ \ \ \ \ \ \ \ \ \times \p{\frac{1}{\he_{t,0}(S_{1:t}^{(i)})\he_{t+1,W_\pi}(S_{1:t+1}^{(i)})} - \frac{1}{e_{t,0}(S_{1:t}^{(i)})e_{t+1,W_\pi}(S_{1:t+1}^{(i)})}}\Bigg|.
		\end{align*}
		The first of these two terms can again be bounded via Assumption \ref{assu:ignorability}; meanwhile, the second one can be
		bounded by Cauchy-Schwartz to order $n^{-(\kappa_e + \kappa_\gamma)}$ using Assumptions \ref{assu:nuisance} and
		\ref{assu:delta}.
		
		Tallying up all our bounds, we have found that, there is a constant $C(\delta)$ depending on $\delta$ and the constants used in Assumptions
		\ref{assu:overlap}, \ref{assu:nuisance} and \ref{assu:delta} such that, and a problem-specific threshold $n(\delta, \,\eta, \,T)$ such that
		$$ \abs{\tilde{\Delta}(\pi, \pi')  - \hat{\Delta}(\pi, \pi')} \leq C(\delta) T^2 n^{-\min\cb{1/2 + \kappa_e, \, 1/2 + \kappa_\mu, \, \kappa_e + \kappa_\mu, \, \kappa_e + \kappa_\delta, \, \kappa_e + \kappa_\gamma}}, $$
		with probability at least $1 - 3\delta$, for all $n \geq n(\delta, \,\eta, \,T)$. The desired conclusion follows.
	\end{proof}

	\begin{proof}[Proof of Lemma \ref{lemm:murphy-decomp-term}]
		Following the proof of Lemma \ref{lemm:murphy-decomp}, we obtain an analogue to \eqref{eq:murphy}:
		\begin{equation}
		\label{eq:murphy-term}
		\Delta(\pi, \bm{0}) 
		= -\EE[\bm{0}] {\sum_{t=1}^T \mathbbm{1}_{t \geq \tau_\pi, \, S_t \neq \Phi} \p{Q_{\pi,t}(S_{1:t}, \bm{0}_{1:t}) - \mu_{\pi,t}(S_{1:t}, \bm{0}_{1:(t-1)})}}, 
		\end{equation}
		and again find that $\mu_{now,k}(\cdot)$ can be used to express the terms
		\smash{$\mathbbm{1}_{t \geq \tau_\pi, \, S_t \neq \Phi} \, \mu_{\pi,t}(S_{1:t}, \bm{0}_{1:(t-1)})$}.
		However, given the possibility of terminal states, we now need to account for the possibility that the patient may
		enter the set $\Phi$ at time $S_{t+1}$ when characterizing terms involving the $Q$-function.
		To this end, define policy $\pi_k^{now}$ as follows:  \smash{$\pi^{now}_k(S_{1:t}, A_{t-1}) = 0$} if $S_t \in \Phi$; else, \smash{$\pi^{now}_k(S_{1:t}, A_{t-1}) = A_{t-1}$}
		if $A_{t-1} \neq 0$, and otherwise \smash{$\pi^{now}_k(S_{1:t}, A_{t-1}) = k$}.
		Notice that this policy satisfies Definition \ref{def:regular_policy_term} and that, for all $t \geq \tau_\pi$,
		our policy of interest $\pi$ matches \smash{$\pi^{now}_{W_\pi}$}. Thus, we see that
		\begin{align*}
		&\mathbbm{1}_{t \geq \tau_\pi, \, S_t \neq \Phi}\, Q_{\pi,t}(S_{1:t}, \bm{0}_{1:t})
		= \mathbbm{1}_{t \geq \tau_\pi, \, S_t \neq \Phi}\, Q_{\pi^{now}_{W_\pi},t}(S_{1:t}, \bm{0}_{1:t}) \\
		&\ \ \ \ \ = \mathbbm{1}_{t \geq \tau_\pi, \, S_t \neq \Phi}\, \EE[\pi^{now}_{W_\pi}]{Y \cond S_{1:t}, \, A_{1:t} = \bm{0}_{1:t}}
		=  \mathbbm{1}_{t \geq \tau_\pi, \, S_t \neq \Phi}\, \mu^\Phi_{next, W_\pi}(S_{1:t}, t),
		\end{align*}
		where the last statement follows by analogous derivations to those used in the proof of Lemma \ref{lemm:murphy-decomp} under Assumptions \ref{assu:consistency} and \ref{assu:ignorability}.
	\end{proof}
	
	\section{Simulation Details and Results}
	First, we write out the weighted formula for the ADR estimator with terminal states. We use this form in the experiments for the multiple-action setup where censoring due to death is involved.
	
	\begin{equation*}
	\begin{split}
	&\hat{\Delta}^{\textrm{W}}(\pi, \bm{0})^\Phi \\
	&\ \ =  \sum_{t=1}^T  \frac{\sum_{i=1}^{n}  \frac{\mathbbm{1}_{A_{1:t-1}^{(i)} = 0}}{\prod_{t'=1}^{t-1}\hat{e}^{-q(i)}_{t',0}(S_{1:{t'}}^{(i)})}\mathbbm{1}_{t \geq \tau_\pi^{(i)}}  \mathbbm{1}_{S_{t}^{(i)} \neq \Phi}}{\sum_{i=1}^{n}  \frac{\mathbbm{1}_{A_{1:t-1}^{(i)} = 0}}{\prod_{t'=1}^{t-1}\hat{e}^{-q(i),0}_{t'}(S_{1:{t'}}^{(i)})}} \p{\hat{\mu}^{-q(i)}_{now,W_\pi}(S_{1:t}^{(i)}) -\hat{\mu}^{-q(i)}_{next,W_\pi}(S_{1:t}^{(i)})^\Phi} \\
	&\ \ \ \ +  \sum_{t=1}^T  \frac{\sum_{i=1}^{n}  \frac{\mathbbm{1}_{A_{1:t-1}^{(i)} = 0} \mathbbm{1}_{A_t^{(i)}=W_\pi}}{\prod_{t'=1}^{t-1}\hat{e}^{-q(i)}_{t',0}(S_{1:{t'}}^{(i)})\hat{e}^{-q(i)}_{t,W_\pi}(S_{1:{t}}^{(i)}))} \mathbbm{1}_{t \geq \tau_\pi^{(i)}}  \mathbbm{1}_{S_{t}^{(i)} \neq \Phi}\Bigg(  Y^{(i)} -\hat{\mu}^{-q(i)}_{now,W_\pi}(S_{1:t}^{(i)})\Bigg)}{\sum_{i=1}^{n}  \frac{\mathbbm{1}_{A_{1:t-1}^{(i)} = 0} \mathbbm{1}_{A_t^{(i)}=W_\pi}}{\prod_{t'=1}^{t-1}\hat{e}^{-q(i)}_{t',0}(S_{1:{t'}}^{(i)}) \hat{e}^{-q(i)}_{t,W_\pi}(S_{1:{t}}^{(i)})} \mathbbm{1}_{t \geq \tau_\pi^{(i)}}  \mathbbm{1}_{S_{t}^{(i)} \neq \Phi}+   \frac{\mathbbm{1}_{A_{1:t-1}^{(i)} = 0}}{\prod_{t'=1}^{t-1}\hat{e}^{-q(i)}_{t',0}(S_{1:{t'}}^{(i)})} \p{1-\mathbbm{1}_{t \geq \tau_\pi^{(i)}}  \mathbbm{1}_{S_{t}^{(i)} \neq \Phi}}}\\
	&\ \ \ \ -  \sum_{t=1}^T  \frac{\sum_{i=1}^{n}  \frac{\mathbbm{1}_{A_{1:t}^{(i)} = 0}\mathbbm{1}_{A_{t+1}^{(i)}=W_\pi} \mathbbm{1}_{S_{t+1}^{(i)} \neq \Phi}}{\prod_{t'=1}^{t}\hat{e}^{-q(i)}_{t',0}(S_{1:{t'}}^{(i)}) \hat{e}^{-q(i)}_{t+1,W_\pi}(S_{1:{t+1}}^{(i)})} \mathbbm{1}_{t \geq \tau_\pi^{(i)}}  \mathbbm{1}_{S_{t}^{(i)} \neq \Phi} \p{Y^{(i)}-\hat{U}^{-q(i)}(S_{1:t}^{(i)}, \Phi)}}{\sum_{i=1}^{n}  A + B+ C},
	\end{split}
	\end{equation*}
	where 
	\begin{align*}
	&A=\frac{\mathbbm{1}_{A_{1:t}^{(i)} = 0} \mathbbm{1}_{A_{t+1}^{(i)}=W_\pi}\mathbbm{1}_{S_{t+1}^{(i)} \neq \Phi}}{\prod_{t'=1}^{t}\hat{e}^{-q(i)}_{t',0}(S_{1:{t'}}^{(i)})\hat{e}^{-q(i)}_{t+1,W_\pi}(S_{1:t+1}^{(i)})  } \mathbbm{1}_{t \geq \tau_\pi^{(i)}}  \mathbbm{1}_{S_{t}^{(i)} \neq \Phi}, \\
	&B =\frac{\mathbbm{1}_{A_{1:t}^{(i)} = 0} \mathbbm{1}_{S_{t+1}^{(i)} = \Phi}}{\prod_{t'=1}^{t}\hat{e}^{-q(i)}_{t',0}(S_{1:{t'}}^{(i)})} \mathbbm{1}_{t \geq \tau_\pi^{(i)}}  \mathbbm{1}_{S_{t}^{(i)} \neq \Phi}, \\
	&C=\frac{\mathbbm{1}_{A_{1:t-1}^{(i)} = 0}}{\prod_{t'=1}^{t-1}\hat{e}^{-q(i)}_{t',0}(S_{1:{t'}}^{(i)})}\p{1-\mathbbm{1}_{t \geq \tau_\pi^{(i)}}  \mathbbm{1}_{S_{t}^{(i)} \neq \Phi}}.
	\end{align*}
	For the binary treatment choices setup as described in Section \ref{sec:binary}, we define a linear thresholding rule $\theta_1 S_t \geq \theta_2 t +  \theta_3$ such that whenever this holds, we stop the treatment. We search over three classes of policies:
	\begin{itemize}
		\item Policies that always stop after some time $t$, corresponding to $\theta_1=0, \theta_2 = -1, \theta_3 \in [1,2,\cdots, T+1]$ where $T=10$ is the horizon length of the study.
		\item Policies that always stop once the patient's state $S_t$ is above some threshold, corresponding to $\theta_1 = 1, \theta_2 = 0, \theta_3 \in [-0.5, 0, 0.5, \cdots, 4.5, 5]$.
		\item Policies that depend on both the time and the patient's state $S_t$, corresponding to $\theta_1=1, \theta_2 \in [-1/4, -1/3, -1/2, -1, -2, -3, -4], \theta_3 \in [1,2,\cdots, 15]$. 
	\end{itemize}
	
	For the multiple treatment choices setup as descibed in Section \ref{sec:multiple}, we consider the linear thresholding class: $\theta_1 X_t' + \theta_2 Y_t' + \theta_3 t \geq \theta_4$ is the region in which we start treatment. If in addition, $\theta_5 X_t' + \theta_6 Y_t' + \theta_7 t \geq \theta_8$, we use the invasive treatment and otherwise, use the non-invasive treatment. We search over the following classes of policies:
	\begin{itemize}
		\item Policies that always start treating at sometime, and always assign the non-invasive treatment option, corresponding to $\theta_1=0, \theta_2=0, \theta_3=1, \theta_4\in [1,2,\cdots, T+1]$ where $T=10$ is the horizon length of the study, $\theta_5=0, \theta_6=0, \theta_7=0,\theta_8=1$
		\item Policies that always start treating at sometime, and always assign the  invasive treatment option, corresponding to $\theta_1=0, \theta_2=0, \theta_3=1, \theta_4\in [1,2,\cdots, T+1]$ where $T=10$ is the horizon length of the study, $\theta_5=0, \theta_6=0, \theta_7=1,\theta_8=0$
		\item Policies that depend on both the time and the two covariates in the form of $\theta_1 \in [0.2,0.7, 1, 3,5], \theta_2 = 1, \theta_3 \in [0,1], \theta_4 \in [1,3,5, 7, 9], \theta_5 \in [0.1 ,0.35, 0.5, 1.5,2.5], \theta_6 = 1, \theta_7 \in [0,1], \theta_8 \in [1,3,5, 7, 9]$. 
		\item Policies that depend on both the time and the two covariates in the form of $\theta_1 \in [0.2,0.7, 1, 3,5], \theta_2 = 1, \theta_3 \in [0,1], \theta_4 \in [1,3,5, 7, 9], \theta_5 =-0.5, \theta_6 = 1, \theta_7 \in [0,1], \theta_8 \in [-5, -2, 1, 4]$.
		
	\end{itemize}
	\label{append:plots}
\begin{table}[ht]
\centering
\begin{tabular}{cccccccccc}
  \hline
n & $\nu$ & $\beta$ & $\sigma$ & ADR & IPW & Q-Opt & Oracle & MSE:ADR & MSE:IPW \\ 
  \hline
250 & 0 & 0.5 & 1 & 8.39e-01 & 8.38e-01 & \bf 8.53e-01 & 8.78e-01 & \bf 1.70e-02 & 8.63e-02 \\ 
  250 & 0 & 0.5 & 3 & \bf 9.11e-01 & 8.25e-01 & 8.79e-01 & 9.25e-01 & \bf 2.42e-02 & 6.63e-02 \\ 
  250 & 0.5 & 0.5 & 1 & 8.32e-01 & \bf 8.33e-01 & 8.02e-01 & 8.76e-01 & \bf 3.51e-02 & 8.27e-02 \\ 
  250 & 0.5 & 0.5 & 3 & \bf 9.18e-01 & 8.61e-01 & 8.78e-01 & 9.27e-01 & \bf 2.51e-02 & 6.85e-02 \\ 
  500 & 0 & 0.5 & 1 & \bf 8.73e-01 & 8.49e-01 & 8.71e-01 & 8.78e-01 & \bf 5.50e-03 & 8.11e-02 \\ 
  500 & 0 & 0.5 & 3 & \bf 9.23e-01 & 8.66e-01 & 8.79e-01 & 9.25e-01 & \bf 5.76e-03 & 6.60e-02 \\ 
  500 & 0.5 & 0.5 & 1 & \bf 8.69e-01 & 8.62e-01 & 8.11e-01 & 8.76e-01 & \bf 3.23e-02 & 7.54e-02 \\ 
  500 & 0.5 & 0.5 & 3 & \bf 9.23e-01 & 8.77e-01 & 8.84e-01 & 9.27e-01 & \bf 1.44e-02 & 6.07e-02 \\ 
  1000 & 0 & 0.5 & 1 & \bf 8.78e-01 & 8.71e-01 & 8.74e-01 & 8.78e-01 & \bf 4.50e-03 & 8.06e-02 \\ 
  1000 & 0 & 0.5 & 3 & \bf 9.25e-01 & 8.75e-01 & 8.87e-01 & 9.25e-01 & \bf 6.67e-03 & 5.76e-02 \\ 
  1000 & 0.5 & 0.5 & 1 & \bf 8.73e-01 & 8.70e-01 & 8.16e-01 & 8.76e-01 & \bf 2.25e-02 & 7.19e-02 \\ 
  1000 & 0.5 & 0.5 & 3 & \bf 9.27e-01 & 8.88e-01 & 8.84e-01 & 9.27e-01 & \bf 8.94e-03 & 5.64e-02 \\ 
  5000 & 0 & 0.5 & 1 & \bf 8.81e-01 & 8.79e-01 & 8.79e-01 & 8.78e-01 & \bf 1.29e-03 & 5.56e-02 \\ 
  5000 & 0 & 0.5 & 3 & \bf 9.25e-01 & 9.11e-01 & 8.98e-01 & 9.25e-01 & \bf 3.20e-03 & 4.40e-02 \\ 
  5000 & 0.5 & 0.5 & 1 & \bf 8.78e-01 & 8.75e-01 & 8.30e-01 & 8.76e-01 & \bf 8.48e-03 & 5.35e-02 \\ 
  5000 & 0.5 & 0.5 & 3 & \bf 9.27e-01 & 9.13e-01 & 8.91e-01 & 9.27e-01 & \bf 6.35e-03 & 3.63e-02 \\ 
  10000 & 0 & 0.5 & 1 & \bf 8.81e-01 & 8.80e-01 & 8.80e-01 & 8.78e-01 & \bf 8.70e-04 & 4.33e-02 \\ 
  10000 & 0 & 0.5 & 3 & \bf 9.25e-01 & 9.20e-01 & 8.99e-01 & 9.25e-01 & \bf 2.26e-03 & 4.26e-02 \\ 
  10000 & 0.5 & 0.5 & 1 & \bf 8.78e-01 & 8.76e-01 & 8.33e-01 & 8.76e-01 & \bf 4.72e-03 & 3.07e-02 \\ 
  10000 & 0.5 & 0.5 & 3 & \bf 9.27e-01 & 9.19e-01 & 8.93e-01 & 9.27e-01 & \bf 5.77e-03 & 3.27e-02 \\ 
  20000 & 0 & 0.5 & 1 & \bf 8.81e-01 & \bf 8.81e-01 & 8.80e-01 & 8.78e-01 & \bf 5.51e-04 & 3.09e-02 \\ 
  20000 & 0 & 0.5 & 3 & \bf 9.25e-01 & 9.21e-01 & 9.01e-01 & 9.25e-01 & \bf 1.56e-03 & 3.37e-02 \\ 
  20000 & 0.5 & 0.5 & 1 & \bf 8.78e-01 & 8.77e-01 & 8.36e-01 & 8.76e-01 & \bf 2.96e-03 & 1.56e-02 \\ 
  20000 & 0.5 & 0.5 & 3 & \bf 9.27e-01 & 9.21e-01 & 8.95e-01 & 9.27e-01 & \bf 4.57e-03 & 3.48e-02 \\ 
  30000 & 0 & 0.5 & 1 & \bf 8.81e-01 & \bf 8.81e-01 & 8.80e-01 & 8.78e-01 & \bf 2.63e-04 & 1.87e-02 \\ 
  30000 & 0 & 0.5 & 3 & \bf 9.25e-01 & 9.24e-01 & 9.02e-01 & 9.25e-01 & \bf 1.46e-03 & 2.83e-02 \\ 
  30000 & 0.5 & 0.5 & 1 & \bf 8.78e-01 & 8.77e-01 & 8.36e-01 & 8.76e-01 & \bf 2.20e-03 & 1.18e-02 \\ 
  30000 & 0.5 & 0.5 & 3 & \bf 9.27e-01 & 9.25e-01 & 8.93e-01 & 9.27e-01 & \bf 3.75e-03 & 2.98e-02 \\ 
   \hline
\end{tabular}
\caption{\tt Detailed numerical results in the binary-action setup with $\beta=0.5$. In the fifth to the eighth columns, we show the value of the best learned policy using ADR, weighted IPW, and \textit{Q-Opt} against the value of the oracle (oracle) best policy in the prespecified policy class, with all value estimates evaluated using a Monte-Carlo rollout with 20000 repeats. In the right two columns, we show the mean-squared error of the value estimates averaged across all policies in the policy class. Results are averaged across 50 runs and rounded to two decimal places. Numbers listed are accurate up to the second displaying digit due to sampling errors.} 
\label{table:setup2-beta-0_5}
\end{table}

\begin{table}[ht]
\centering
\begin{tabular}{cccccccccc}
  \hline
n & $\nu$ & $\beta$ & $\sigma$ & ADR & IPW & Q-Opt & Oracle & MSE:ADR & MSE:IPW \\ 
  \hline
250 & 0 & 1 & 1 & 7.65e-01 & 7.21e-01 & \bf 8.21e-01 & 8.03e-01 & \bf 1.56e-02 & 3.32e-01 \\ 
  250 & 0 & 1 & 3 & 8.16e-01 & 7.11e-01 & \bf 8.17e-01 & 8.46e-01 & \bf 3.20e-02 & 2.43e-01 \\ 
  250 & 0.5 & 1 & 1 & \bf 7.48e-01 & 7.25e-01 & 7.38e-01 & 7.79e-01 & \bf 1.07e-01 & 3.28e-01 \\ 
  250 & 0.5 & 1 & 3 & \bf 8.18e-01 & 7.15e-01 & 8.11e-01 & 8.49e-01 & \bf 6.97e-02 & 2.55e-01 \\ 
  500 & 0 & 1 & 1 & 7.86e-01 & 7.50e-01 & \bf 8.36e-01 & 8.03e-01 & \bf 4.70e-03 & 3.16e-01 \\ 
  500 & 0 & 1 & 3 & \bf 8.39e-01 & 7.46e-01 & 8.26e-01 & 8.46e-01 & \bf 1.27e-02 & 2.05e-01 \\ 
  500 & 0.5 & 1 & 1 & \bf 7.56e-01 & 7.28e-01 & 7.47e-01 & 7.79e-01 & \bf 7.20e-02 & 3.21e-01 \\ 
  500 & 0.5 & 1 & 3 & \bf 8.37e-01 & 7.52e-01 & 8.17e-01 & 8.49e-01 & \bf 1.98e-02 & 2.31e-01 \\ 
  1000 & 0 & 1 & 1 & 7.88e-01 & 7.74e-01 & \bf 8.46e-01 & 8.03e-01 & \bf 3.09e-03 & 2.87e-01 \\ 
  1000 & 0 & 1 & 3 & \bf 8.42e-01 & 7.81e-01 & 8.31e-01 & 8.46e-01 & \bf 9.20e-03 & 2.12e-01 \\ 
  1000 & 0.5 & 1 & 1 & \bf 7.63e-01 & 7.41e-01 & 7.49e-01 & 7.79e-01 & \bf 6.16e-02 & 2.82e-01 \\ 
  1000 & 0.5 & 1 & 3 & \bf 8.42e-01 & 7.81e-01 & 8.16e-01 & 8.49e-01 & \bf 2.87e-02 & 2.03e-01 \\ 
  5000 & 0 & 1 & 1 & 7.97e-01 & 7.94e-01 & \bf 8.54e-01 & 8.03e-01 & \bf 1.62e-03 & 2.05e-01 \\ 
  5000 & 0 & 1 & 3 & \bf 8.46e-01 & 8.32e-01 & 8.39e-01 & 8.46e-01 & \bf 4.92e-03 & 1.42e-01 \\ 
  5000 & 0.5 & 1 & 1 & \bf 7.68e-01 & 7.60e-01 & 7.53e-01 & 7.79e-01 & \bf 2.46e-02 & 1.58e-01 \\ 
  5000 & 0.5 & 1 & 3 & \bf 8.49e-01 & 8.19e-01 & 8.22e-01 & 8.49e-01 & \bf 1.37e-02 & 1.39e-01 \\ 
  10000 & 0 & 1 & 1 & 8.00e-01 & 7.98e-01 & \bf 8.54e-01 & 8.03e-01 & \bf 9.82e-04 & 1.69e-01 \\ 
  10000 & 0 & 1 & 3 & \bf 8.46e-01 & 8.40e-01 & 8.42e-01 & 8.46e-01 & \bf 4.53e-03 & 1.52e-01 \\ 
  10000 & 0.5 & 1 & 1 & \bf 7.69e-01 & 7.64e-01 & 7.52e-01 & 7.79e-01 & \bf 1.45e-02 & 9.67e-02 \\ 
  10000 & 0.5 & 1 & 3 & \bf 8.49e-01 & 8.33e-01 & 8.23e-01 & 8.49e-01 & \bf 1.10e-02 & 1.53e-01 \\ 
  20000 & 0 & 1 & 1 & 8.00e-01 & 8.01e-01 & \bf 8.54e-01 & 8.03e-01 & \bf 3.54e-04 & 9.76e-02 \\ 
  20000 & 0 & 1 & 3 & \bf 8.46e-01 & 8.42e-01 & 8.42e-01 & 8.46e-01 & \bf 3.23e-03 & 1.26e-01 \\ 
  20000 & 0.5 & 1 & 1 & \bf 7.69e-01 & 7.67e-01 & 7.54e-01 & 7.79e-01 & \bf 7.74e-03 & 5.15e-02 \\ 
  20000 & 0.5 & 1 & 3 & \bf 8.49e-01 & 8.42e-01 & 8.21e-01 & 8.49e-01 & \bf 1.01e-02 & 1.42e-01 \\ 
  30000 & 0 & 1 & 1 & 8.00e-01 & 8.00e-01 & \bf 8.54e-01 & 8.03e-01 & \bf 3.22e-04 & 9.42e-02 \\ 
  30000 & 0 & 1 & 3 & \bf 8.46e-01 & 8.43e-01 & 8.41e-01 & 8.46e-01 & \bf 2.85e-03 & 1.19e-01 \\ 
  30000 & 0.5 & 1 & 1 & \bf 7.70e-01 & 7.68e-01 & 7.54e-01 & 7.79e-01 & \bf 7.99e-03 & 5.27e-02 \\ 
  30000 & 0.5 & 1 & 3 & \bf 8.49e-01 & 8.44e-01 & 8.22e-01 & 8.49e-01 & \bf 9.88e-03 & 1.11e-01 \\ 
   \hline
\end{tabular}
\caption{\tt Detailed numerical results in the binary-action setup with $\beta=1$; details see caption of Table \ref{table:setup2-beta-0_5}.} 
\label{table:setup2-beta-1}
\end{table}

\begin{table}[ht]
\centering
\begin{tabular}{cccccccc}
  \hline
n & $\sigma$ & ADR & IPW & Q-Opt & Oracle & MSE:ADR & MSE:IPW \\ 
  \hline
250 & 0 & \bf 1.65e-01 & 1.40e-01 & 5.88e-02 & 2.65e-01 & \bf 1.79e-01 & 1.35e+00 \\ 
  250 & 0.5 & \bf 1.24e-01 & 9.13e-02 & 5.85e-02 & 2.67e-01 & \bf 1.83e-01 & 1.26e+00 \\ 
  250 & 1 & \bf 1.17e-01 & 8.94e-02 & 8.67e-02 & 2.54e-01 & \bf 1.97e-01 & 1.38e+00 \\ 
  500 & 0 & \bf 1.88e-01 & 1.14e-01 & 1.08e-01 & 2.65e-01 & \bf 8.74e-02 & 6.39e-01 \\ 
  500 & 0.5 & \bf 1.72e-01 & 1.13e-01 & 1.00e-01 & 2.67e-01 & \bf 8.30e-02 & 6.55e-01 \\ 
  500 & 1 & \bf 1.48e-01 & 9.30e-02 & 1.04e-01 & 2.54e-01 & \bf 9.00e-02 & 6.12e-01 \\ 
  1000 & 0 & \bf 1.98e-01 & 1.52e-01 & 1.38e-01 & 2.65e-01 & \bf 3.55e-02 & 2.78e-01 \\ 
  1000 & 0.5 & \bf 1.99e-01 & 1.40e-01 & 1.15e-01 & 2.67e-01 & \bf 4.28e-02 & 3.28e-01 \\ 
  1000 & 1 & \bf 1.83e-01 & 1.54e-01 & 1.38e-01 & 2.54e-01 & \bf 4.26e-02 & 3.65e-01 \\ 
  5000 & 0 & \bf 2.35e-01 & 1.94e-01 & 2.23e-01 & 2.65e-01 & \bf 7.04e-03 & 5.99e-02 \\ 
  5000 & 0.5 & \bf 2.21e-01 & 2.09e-01 & 2.02e-01 & 2.67e-01 & \bf 7.47e-03 & 6.36e-02 \\ 
  5000 & 1 & \bf 2.26e-01 & 1.75e-01 & 1.97e-01 & 2.54e-01 & \bf 8.21e-03 & 6.43e-02 \\ 
  10000 & 0 & \bf 2.49e-01 & 2.19e-01 & 2.43e-01 & 2.65e-01 & \bf 3.36e-03 & 3.04e-02 \\ 
  10000 & 0.5 & \bf 2.36e-01 & 2.06e-01 & 2.22e-01 & 2.67e-01 & \bf 3.95e-03 & 3.27e-02 \\ 
  10000 & 1 & \bf 2.27e-01 & 1.88e-01 & 2.15e-01 & 2.54e-01 & \bf 4.68e-03 & 3.14e-02 \\ 
  20000 & 0 & 2.47e-01 & 2.20e-01 & \bf 2.62e-01 & 2.65e-01 & \bf 1.75e-03 & 1.51e-02 \\ 
  20000 & 0.5 & \bf 2.45e-01 & 2.08e-01 & 2.39e-01 & 2.67e-01 & \bf 2.02e-03 & 1.47e-02 \\ 
  20000 & 1 & \bf 2.35e-01 & 2.22e-01 & \bf 2.35e-01 & 2.54e-01 & \bf 2.92e-03 & 1.46e-02 \\ 
   \hline
\end{tabular}
\caption{\tt Detailed numerical results in the multiple-action setup. In the third to the sixth columns, we show the value of the best learned policy using ADR, IPW, and \textit{Q-Opt} against the value of the oracle best policy in the prespecified policy class, with all value estimates evaluated using a Monte-Carlo rollout with 20000 repeats. In the right two columns, we show the mean-squared error of the value estimates averaged across all policies in the policy class. Results are averaged across 50 runs and rounded to two decimal places. Numbers listed are accurate up to the second displaying digit due to sampling errors.} 
\label{table:setup1}
\end{table}

\end{appendix}
\end{document}